\let\csname equation*\endcsname\relax
\let\csname endequation*\endcsname\relax
\newtheorem{theorem}{Theorem}
\newtheorem{proposition}[theorem]{Proposition}
\begin{document}

	 \title[]{Modified toric code models with flux attachment from Hopf algebra gauge theory}

     \author{A. Conlon $^{1,2}$,  D. Pellegrino$^{2}$ , J.K. Slingerland $^{1,2}$. }

    \address{ $^1$ Dublin Institute for Advanced Studies, School of Theoretical Physics, 10 Burlington Rd, Dublin, Ireland, \newline $^2$ Department of Theoretical Physics, Maynooth University, Ireland. }

    \ead{aaron.conlon@stp.dias.}

\date{\today}

\begin{abstract}
	Kitaev's toric code is constructed using a finite gauge group from gauge theory. Such gauge theories can be generalized with the gauge group generalized to any finite-dimensional semisimple Hopf algebra. This also leads to generalizations of the toric code. 
	Here we consider the simple case where the gauge group is unchanged but furnished with a non-trivial quasitriangular structure (R-matrix), which modifies the construction of the gauge theory. 
	This leads to some interesting phenomena; for example, the space of functions on the group becomes a non-commutative algebra. 
	We also obtain simple Hamiltonian models generalizing the toric code, which are of the same overall topological type as the toric code, except that the various species of particles created by string operators in the model are permuted in a way that depends on the R-matrix. In the case of $\mathbb{Z}_{N}$ gauge theory, we find that the introduction of a non-trivial R-matrix amounts to flux attachment. 
\end{abstract}
 \noindent{\it Keywords\/}: Hopf Algebra, Topological lattice models, Gauge Theory, Toric code.
\newline
\maketitle


\section{Introduction}

It was proposed in \cite{Kitaev2003}, that anyonic excitations could be utilised to implement quantum computation in a fault-tolerant manner.
The model introduced there is often called the Kitaev model or the toric code, since it can be described as a stabiliser code on a toroidal geometry~\cite{QuditSurfaceCode}. 
It is essentially a discrete gauge theory, based on a finite group $G$, where gauge invariance is encoded through an energy penalty for gauge violations, so that the ground states of the model are gauge invariant, while excited states can have electric charge, in which case they transform non-trivially under gauge transformations. 
In recent years a renewed interest in these models has developed with the advent of quantum simulators~ \cite{Minimal,ToricSim}. In fact, there have recently been experimental measurements of anyonic braiding statistics in systems of qubits hosting the toric code ground state~\cite{PhysRevLett.121.030502,PhysRevLett.117.110501,Experimental_Toric_Code}.

On the more mathematical side, these models have many interesting features. The distinct types of excitations are labelled by irreducible representations of $D(G)$, the Drinfeld double, or the quantum double of the gauge symmetry $G$. 
 See \cite{Bais_Propitius}, for an overview of how this structure arises in discrete gauge theories.
Furthermore, although originally defined over a finite group, as pointed out in \cite{Kitaev2003}, the construction can be generalised to any finite-dimensional semisimple Hopf algebra \cite{KitaevModels_FiniteGroup,Buerschaper2013}. 
Many of the tools developed in Kitaev models such as the formalism of ribbon operators have already been extended to these generalised Kitaev models~\cite{GenRibbonOp}. 
Although many of the developments in toric code models are based on closed manifolds, the framework has been developed to study boundaries~\cite{QuantumCodesLatticeBoundary,ExactBoundaryKitaev,D3FusionModularInvar}, and even explicit constructions of boundary theories for generalised Kitaev models have been introduced  ~\cite{Boundary_Generalized_Kitaev_Hopf_Gauge,AlgBoundary}.

Another notable development in the area has been the introduction of axioms for Hopf algebra gauge theory, which generalises discrete group gauge theory~\cite{Catherine1} and the equivalence with Kitaev models, \cite{Catherine2}.
In particular, a relation is established between the topological invariants of both frameworks, i.e. the space of gauge invariant flat connections in Hopf algebra gauge theory is equivalent to the space of protected states in Kitaev models. 
 
In this work, we shall closely follow~\cite{Catherine1} and examine the excitations in Hopf algebra gauge theory and the relation to Kitaev models.
Our focus will be on taking a simple class of Hopf algebras and calculating the relevant quantities, like $\mathcal{R}$-symbols, from a lattice gauge theory construction.
In particular, our gauge theory will be given by a quasitriangular semisimple Hopf algebra constructed on $\mathbb{C}\mathbb{Z}_{N}$.
With the trivial choice of quasitriangular structure, this simply reproduces the usual $\mathbb{Z}_{N}$ gauge theory and $\mathbb{Z}_{N}$ toric code.
However, $\mathbb{C}\mathbb{Z}_{N}$ allows for a choice of non-trivial quasitriangular structure. We will show how this choice affects the braiding of excitations in the model and leads to a notion of flux attachment.

In Section \ref{sec::Background}, we provide a brief overview of some of the relevant material in \cite{Catherine1}.
In Section \ref{sec::LatticeHopf} we will discuss our model and analyse the excitations of the operators defined in the preceding sections. 
We will show how the introduction of a non-trivial R matrix on the Hopf algebra can lead to a different identification of the quasiparticle content when compared with the usual Kitaev models. 
Finally, in Section \ref{sec::Conclusions}, we draw conclusions from our results. 

\section{Background on Hopf algebra gauge theory}
\label{sec::Background}
A comprehensive summary of the differences caused by generalising the input gauge object from a group to $K$, a finite-dimensional semisimple quasitriangular Hopf algebra, can be found in \cite{Catherine1} and we report this schematically in Figure \ref{fig::differenceTable}.

\begin{figure}[H]
\begin{center}
\begin{tabular}{|c|c |c |c|}
 \hline
  &  Lattice gauge theory   &  Hopf algebra gauge theory \\
 \hline
  gauge object & group $G$ & Hopf algebra $K$ \\
  connection & $G$-colouring of graph &  $K$-colouring of graph\\
 collections of connections & $G^{E}$ & $K^{E}$ \\
  gauge transformation on connection & $G$-set & $K^{V}$- module structure on $K^{E}$ \\
gauge transformation on functions &dual  $G$ set 
& \vtop{\hbox{\strut $K^{V}$- module algebra}\hbox{\strut  structure on $K^{*E}$ }} 
\\
 \hline
\end{tabular}
\end{center}
 \caption{This table shows how many of the standard ingredients of lattice gauge theory are generalised to Hopf algebra gauge theory. $E$ and $V$ indicate the collection of edges and vertexes of the graph respectively. 
 By $K^{E}$ ($K^{V}$) we denote the tensor product of all of the Hopf algebras associated with each edge (vertex).
 }
 \label{fig::differenceTable}
 \end{figure}

In this section we shall follow \cite{Catherine1}, where it is explained how to construct a Hopf algebra gauge theory on a  ribbon graph from the description of the gauge theory on smaller, local constituents of the graph, called local vertex neighbourhoods. 
Firstly a basis for the Hilbert space in discrete group gauge theory is given by assigning group elements to the edges of the graph, often called ``colouring'' the graph by $G$. The situation is analogous for Hopf algebra gauge theory except now the graph is coloured by elements of the Hopf algebra.  

A ciliated ribbon graph is used as a discrete model for space, this is a directed graph with a cyclic ordering of the edges at each vertex. The edges are enumerated anti-clockwise starting from the cilium, as we display in Figure \ref{fig::ciliatedVertex}.
The cilia are not themselves edges of the graph, but can instead be thought of as a bookkeeping device which keeps track of the ordering of edges at a vertex, so that one can associate an oriented surface with a boundary to a ribbon graph. 
This is done by replacing edges with ribbons, vertices with disks and glueing ribbons and disks according to the cyclic ordering defined by the cilia. 

\begin{figure}[t]
\centering
   \subfloat[]{
   \begin{tikzpicture}[>=latex,scale=0.7,>=latex,very thick,decoration={
       markings,
       mark= at position 0.65 with {\arrow{>}}}]
   	\draw[postaction={decorate}] ({0},{-3})--({0},{0});
   	\draw[postaction={decorate}] ({0},{0})--({0},{3});
   	\draw[fill] ({0},{0}) circle [radius=0.1];
   	\draw[postaction={decorate}] ({0},{0})--({-3},{0});
   	\draw[postaction={decorate}] ({0},{0})--({3},{0});
   	\draw[postaction={decorate}] ({2.4)},-{2.4})--({0},{0});
   	\draw[postaction={decorate}] ({2.4},{2.4})--({0},{0});
   	\draw[postaction={decorate}] ({0},{0})--({-2.4},{2.4});
   	\draw[very thick,dotted] ({-0.75*sqrt(2)},{-0.75*sqrt(2)}) -- ({0},{0});   
       \node [left] at (0,-3) {$1$};
       \node[left] at (2.25,-2.505) {$2$};
       \node[below] at (3,0) {$3$};
       \node[below] at (2.505,2.4) {$4$};
       \node[right] at (0.5,2.5) {$\ldots$};
       \node[right] at (-2.25,2.475) {$n-1$};
       \node[above] at (-3,0) {$n$};
   \end{tikzpicture}
   	\label{fig::ciliatedVertex}
	}  
   \qquad\qquad
   \subfloat[]{
   \begin{tikzpicture}[scale=0.3,>=latex,decoration={
       markings,
       mark=at position 0.65 with {\arrow{>}}}
       ] 
       \foreach \x in {0,8}
       {
   	 {
   	    \draw[postaction={decorate},very thick] ({\x},{8})--({\x+4},{8});
   	    \draw[fill] ({\x},{8}) circle [radius=0.1];
   	    \draw[very thick,dotted] ({-sqrt(2)/2+\x},{-sqrt(2)/2+8}) -- (\x,8);
   	}
   	\foreach \y in {4,12} 
   	{
   	    \draw[postaction={decorate},very thick] ({\x+4},{\y})--({\x},{\y});
   	    \draw[fill] ({\x},{\y}) circle [radius=0.1];
   	    \draw[very thick,dotted] ({-sqrt(2)/2+\x},{-sqrt(2)/2+\y}) -- (\x,\y);
   	}
       }
       \foreach \x in {4,12}
       {
   	{
   	    \draw[postaction={decorate},very thick]
   	    ({\x+4},{8})--({\x},{8});
   	    \draw[fill] ({\x},{8}) circle [radius=0.1];
   	    \draw[very thick,dotted] ({-sqrt(2)/2+\x},{-sqrt(2)/2+8}) -- (\x,8);
   	}
   	\foreach \y in {4,12} 
   	{
   	    \draw[postaction={decorate},very thick] 
   	    ({\x},{\y})--({\x+4},{\y});
   	    \draw[very thick,dotted] ({-sqrt(2)/2+\x},{-sqrt(2)/2+\y}) -- (\x,\y);
   	}
       }
      
       \foreach \y in {4,12}
       {
   	\foreach \x in {0,8} 
   	{
   	    \draw[postaction={decorate},very thick] ({\x},{\y})--({\x},{\y+4});
   	}
   	\foreach \x in {4,12} 
   	{
   	    \draw[postaction={decorate},very thick] ({\x},{\y+4})--({\x},{\y});
   	    \draw[fill] ({\x},{\y}) circle [radius=0.1];
   	}
       }
      \foreach \y in {0,8} 
       {
   	\foreach \x in {0,8} 
   	{
   	    \draw[postaction={decorate},very thick]
   	    ({\x},{\y+4})--({\x},{\y});
   	}
   	\foreach \x in {4,12} 
   	{
   	    \draw[postaction={decorate},very thick] 
   	    ({\x},{\y})--({\x},{\y+4});
   	}
       }

   	\foreach \x in {-4}
       {
   	{
   	    \draw[postaction={decorate},very thick] 
   	    ({\x+4},{8})--({\x},{8});
   	}
   	\foreach \y in {4,12} 
   	{
   	    \draw[postaction={decorate},very thick] ({\x},{\y})--({\x+4},{\y});
   	}	
   	}	
   	\node [above] at (8.9,8) {$v_{2}$};
   	\node [above] at (4.9,8) {$v_{1}$};
   	\node [above] at (6,5) {$p_{1}$};
   	\node [above] at (2,5) {$p_{2}$};
   \end{tikzpicture}
   \label{fig::squareLattice}
   }
    \caption{In (a) we show an example of a ciliated vertex, the edges are numbered counterclockwise starting from the cilium, which we denote by a dotted line.
    \\In (b) we show a piece of the bulk square lattice built as a ribbon graph. In general, the lattice can be considered to have periodic boundary conditions. Note that with our choice of cilia each of the elementary plaquettes has a single cilium pointing inwards and each vertex has two incoming and two outgoing edges.
	}
\end{figure} 

Since not all the readers might be familiar with Hopf algebras we will now start with a brief recap of the main properties and definitions that we will use throughout the article. Explicit actions for the canonical Hopf algebra structure on a group algebra and for the algebra of functions on 
a finite group can be found in \ref{appendix::Quasi_Hopf_Group}. A Hopf algebra $K$ is an associative unital algebra where, together with the product which defines the algebra, we have the following algebra homomorphisms:
\begin{equation}
    \begin{split}
		\Delta: K \rightarrow K\otimes K, \qquad \qquad
		\eta: \mathbb{C}\rightarrow K, \qquad\qquad\ \ \epsilon: K\rightarrow \mathbb{C}.
    \end{split}
\end{equation}
The algebra homomorphism $\Delta$ is called coproduct and can be considered as a sort of ``reverse'' of the product map. It is used to split an element into two tensored components\footnote{Most of the reader should have already encountered the coproduct, as it is used to create tensor products of representation. E.g. for determining the representation of angular momentum for two particles, the coproduct is given by $$\Delta(j)=j\otimes 1+1\otimes j\ .$$}. 
The action of $\Delta$ on an element $h\in K$ is generally denoted through the Sweedler notation:
\begin{equation}
    \Delta(h) = \sum_{(h)} h^{(1)} \otimes h^{(2)}\ ,
\end{equation}
which is a symbolic representation of the fact that a coproduct acting on an element can generally contain a sum of tensor products of different elements of the algebra. The summation is often left implicit and we will follow this convention in the following.
The coproduct is the structure used to define the tensor product of two representations, i.e.
\begin{equation}
	\pi^{1} \otimes \pi^{2}: h \rightarrow ( \pi^{1} \otimes \pi^{2})(\Delta(h)).
\end{equation}
This is of particular importance in toric code models as this structure is used to form multiparticle states. 
One further ingredient is necessary for the definition of a Hopf algebra, that is the antipode, $S: K\rightarrow K$ such that
\begin{equation}
	S(ab) = S(b)S(a)\ .
\end{equation}
This is a generalisation of the map that inverts elements of the group and for semisimple Hopf algebra, we have $S^{2}= \text{id}$.
The map $\eta$ is given by the following:
\begin{equation}
    \eta(c)= c \, 1_{K}\ ,
\end{equation}
where $1_{K}$ indicates the identity element of the algebra and $c$ is an element of the field, which we take to be $\mathbb{C}$. 

The $\epsilon$ map is called counit and can be considered as the equivalent of the unit in the algebra for the coproduct. The coproduct and counit satisfy the following compatibility conditions:
\begin{equation}
    \begin{split}
        &\epsilon(h^{(1)}) h^{(2)}=h^{(1)}\epsilon(h^{(2)})=h,\\
        &S(h^{(1)}) h^{(2)}= h^{(1)} S(h^{(2)})=\epsilon(h) 1\ .
    \end{split}
\end{equation}

Note that when the Hopf algebra is obtained from a group algebra there is a canonical Hopf algebra structure given in the following form
\begin{equation}
    \Delta(h) = h\otimes h \qquad\qquad S(h)=h^{-1}, \qquad \epsilon(h) = 1_{\mathbb{C}} \ ,
\end{equation}
with the multiplication of elements given by the multiplication in the group algebra. 

In order for a Hopf algebra to be quasitriangular, there must be an invertible element of $K\otimes K$ that we will represent as
\begin{equation}
    R = \sum R'\otimes R''\ .
	\label{eq::General_R}
\end{equation}
The notation is simply meant to indicate that $R$ can be made of sums of tensor products of elements in the algebra. 
The $R$ matrix satisfies the following properties
\begin{equation}
    \begin{split}
    &R\,(h^{(1)}\otimes h^{(2)})\, R^{-1}=h^{(2)}\otimes h^{(1)},\qquad \forall\ h\in K,\\
    &(\Delta\otimes id)(R) = R_{13} R_{23},\qquad (id\otimes \Delta)(R) = R_{13} R_{12} 
    \end{split}
\end{equation}
where $R_{13}=R'\otimes 1\otimes R''$, $R_{12}=R'\otimes R''\otimes 1$ and $R_{23}=1\otimes R'\otimes R''$; with the summation left implicit.

For a more detailed and formal discussion on this topic, we refer to the many books and articles on the subject, such as \cite{Kassel, Majid_1995}.  
 We will now describe aspects of Hopf algebra gauge theory that are relevant in the present context.

\subsection{Gauge transformations}
\label{sec::GaugeTransformation}
In this section, we will draw a comparison between key concepts of discrete gauge theory and the analogous ones in Hopf algebra gauge theory.  For details on discrete gauge theory, we refer the reader to \cite{Mark_Propitus_Twisted, Bais_Propitius, Wen}. 

We start by recalling that for finite groups, a gauge transformation by an element $h$, acts at a vertex $v$ of the graph on the elements of the group assigned to the edges incident to that vertex. 
This action we denote by $\tilde{A}^{h}_{v}$ and it is defined by the following rules:
\begin{equation}
	\tilde{A}^{h}_{v}(k) = h \cdot k \qquad\qquad 	\tilde{A}^{h}_{v}(k) = k \cdot h^{-1} 
\end{equation}
for incoming and outgoing edges respectively. The tilde is to differentiate this operator from its dual counterpart, which we will introduce in the following.
For Hopf algebra gauge theories, if the local vertex neighbourhood has more than one incident edge, we need to split the element $h$ to act on each of the connected edges, in a way that 
is compatible with the multiplication in the algebra and linear over $\mathbb{C}$. 
This is where the coproduct is used. 
Therefore, by direct analogy, we can define the action of gauge transformations at each edge by 

\begin{equation}
    \label{eq::GtransformRules}
    \tilde{A}^{h}_{v}(k_{i}) = h^{(i)} \, k_{i} \qquad\qquad \tilde{A}^{h}_{v}(k_{i}) = k_{i} \, S(h^{(i)})\qquad i=1,\ldots, n
\end{equation}
for incoming and outgoing edges respectively. 
For clarity, an example of this transformation is displayed in Figure \ref{fig::gaugeTransform}. 
The trivial action, for edges that are not connected to the vertex $v$, is given by
\begin{equation}
    \label{eq::GtransformRulesTrivial}
    \tilde{A}^{h}_{v}(k) = \epsilon(h) k \qquad \forall\ \ k\neq k_1,\ k_2,\ \ldots,\ k_n\ .
\end{equation}
For local vertex neighbourhoods with closed edges (loops), the action of gauge transformation can be written as
\begin{equation}
    \tilde{A}^{h}_{v} (k) = (h^{(1)} k S(h^{(2)}))	\qquad \text{or} \qquad
    \tilde{A}^{h}_{v} (k) = (h^{(2)} k S(h^{(1)}))\ ,	
\end{equation}
depending on whether the cilium is pointing inwards or outward to a counterclockwise loop. 
Similar relations hold for clockwise loops.
\\We can therefore see that both the directedness and the cyclic ordering of the edges around a vertex play a role in Hopf algebra gauge theory and the above discussion should convince the reader of the necessity of the cilium in this framework. 

This concludes our discussion of gauge transformations for Hopf algebra gauge theory. As shown in \cite{Catherine1},
 the properties of the theory take a more appealing and simple algebraic form when working in the dual space $K^{*}$, the algebra of linear functionals over $K$. 

\begin{figure}[t]
\centering
	\begin{tikzpicture}[>=latex,scale=0.7,>=latex,very thick,decoration={
	    markings,
	    mark= at position 0.65 with {\arrow{>}}}]
		\draw[postaction={decorate}] ({0},{-3})--({0},{0});
		\draw[postaction={decorate}] ({0},{0})--({0},{3});
		\draw[fill] ({0},{0}) circle [radius=0.1];
		\draw[postaction={decorate}] ({0},{0})--({-3},{0});
		\draw[postaction={decorate}] ({0},{0})--({3},{0});
		\draw[postaction={decorate}] ({2.4)},-{2.4})--({0},{0});
		\draw[postaction={decorate}] ({2.4},{2.4})--({0},{0});
		\draw[postaction={decorate}] ({0},{0})--({-2.4},{2.4});
		\draw[very thick,dotted] ({-0.75*sqrt(2)},{-0.75*sqrt(2)}) -- ({0},{0});   
	    \node [left] at (0,-2.5) {$h^{(1)}k_{1}$};
	    \node[left] at (2.4,-2.505) {$h^{(2)}k_{2}$};
	    \node[below] at (3,0) {$k_3S(h^{(3)})$};
	    \node[below] at (3,2.4) {$h^{(4)}k_{4}$};
	    \node[right] at (0.5,2.5) {$\ldots$};
	    \node[right] at (-4,3) {$k_{n-1}S(h^{(n-1)})$};
	    \node[above] at (-2.5,0) {$k_{n}S(h^{(n)})$};
	\end{tikzpicture}
    \caption{We display the gauge transformation for a local vertex neighbourhood by an element $h \in K$, the elements $k_{i}$ are assigned to $K^{E}$, the Hopf algebra on the edges. 
	}
\label{fig::gaugeTransform}
\end{figure}
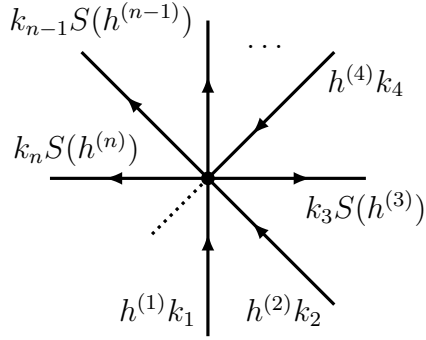 
For this reason, we will now describe the gauge transformations on $K^{*}$, which can be found by dualizing (\ref{eq::GtransformRules}). This dualization is performed using two main properties, that come from the compatibility between the Hopf algebra with its dual
 \begin{equation}
 \langle \alpha, k_1\cdot k_2\rangle=\langle\alpha^{(1)}\otimes\alpha^{(2)}, k_1\otimes k_2\rangle\qquad\langle \alpha_1\cdot\alpha_2, k\rangle=\langle\alpha_1\otimes\alpha_2, k^{(1)}\otimes k^{(2)}\rangle\ .
 \end{equation}
Consider therefore a local vertex neighbourhood with $n$ edges, such as the one shown in Figure \ref{fig::ciliatedVertex}, 
then the gauge transformation by $h \in K^{v}$ on  $\alpha_1\otimes\alpha_2\cdots\otimes\alpha_n \in {K^{*}}^{\otimes n}$, is given by
\begin{equation}
    \label{eq::gaugeTransformation}
  	 A^{h}_v(\alpha_1\otimes\ldots\otimes\alpha_n):= \langle S^{\tau_{1}}
	 (\alpha_1^{(1+\tau_1)})\cdots S^{\tau_{n}}(\alpha_n^{(1+\tau_n)}), 	h\rangle(\alpha_1^{(2-\tau_1)}\otimes\ldots\otimes\alpha_n^{(2-\tau_n)})\ ,
\end{equation}
where $\tau_i=0,\ 1$ if the $i$'th edge is incoming to the vertex or outgoing respectively (we are assuming $S^0=id$). Note that this constitutes a right $K^{V}$ module action on ${K^{*}}^{\otimes n}$, which is denoted $(\alpha_1\otimes\ldots\otimes\alpha_n)\triangleleft^{*}h$ in \cite{Catherine1} and further details can be found therin, in particular see Corollary 3.12. 
It can now be proved that with these definitions we have
\begin{equation}
\label{eq::commVertex}
    A^{h_{1}}_{v}A^{h_{2}}_{v}=A^{h_{2}h_{1}}_{v}\qquad A^{h_{1}}_{v_1}A^{h_{2}}_{v_2}=A^{h_{2}}_{v_2}A^{h_{1}}_{v_1}\qquad v_{1}\neq v_{2},  \, \forall h_{1},h_{2} \in K \ .
\end{equation}
Analogously to discrete gauge theory we say that a function is gauge invariant at a vertex $v$ when
\begin{equation}
     A^{h}_v(\alpha_1\otimes\ldots\otimes\alpha_n)= \epsilon(h)(\alpha_1\otimes\ldots\otimes\alpha_n)\qquad \forall\ h \in K\ . 
\end{equation}
It can be seen that given the Haar integral $l$ of the Hopf algebra $K$, the projector $P^{\text{inv}}_{v}$ into the space of gauge invariant functions at a vertex $v$ is given by the gauge action
\begin{equation}
    \label{eq::invProjector}
	 P_{v}^{\text{inv}} = A_v^l\ .
\end{equation}
For a finite group $G$, the Haar integral is given by 
\begin{equation}
    l = \sum_{g\in G} g
\end{equation}
and by substituting this formula in (\ref{eq::invProjector}) the familiar projector into gauge invariant states of the toric code can be found,\cite{Buerschaper2013,Catherine2}.

\subsection{Braided tensor product and holonomy}
\label{sec::BraidedTensor}

We can now introduce the second fundamental ingredient of Hopf algebra gauge theory, the braided tensor product, which plays an important role in the construction of the plaquette operator.
As in quantum field theory, the holonomy is obtained by considering parallel transport around loops. Consider therefore a plaquette $p$ turning counterclockwise, with all cilia at vertexes pointing outward, except for one, as shown in Figure \ref{fig::holonomy}. The holonomy on $p$, for discrete gauge theory, is then the product of elements assigned along the edges of the path starting from the unique cilium pointing into the plaquette
\begin{equation}
	\label{eq::HopfHol}
   \text{Hol}(p)=(k_{n}k_{n-1}\cdots k_{2}k_{1})^{-1}\ .
\end{equation}
To simplify the discussion we have supposed that the orientation of the plaquette's edges agrees with that of a path turning counterclockwise. 
If the path goes against the directedness of an edge, the element assigned to that edge is first acted upon with the antipode $S$.
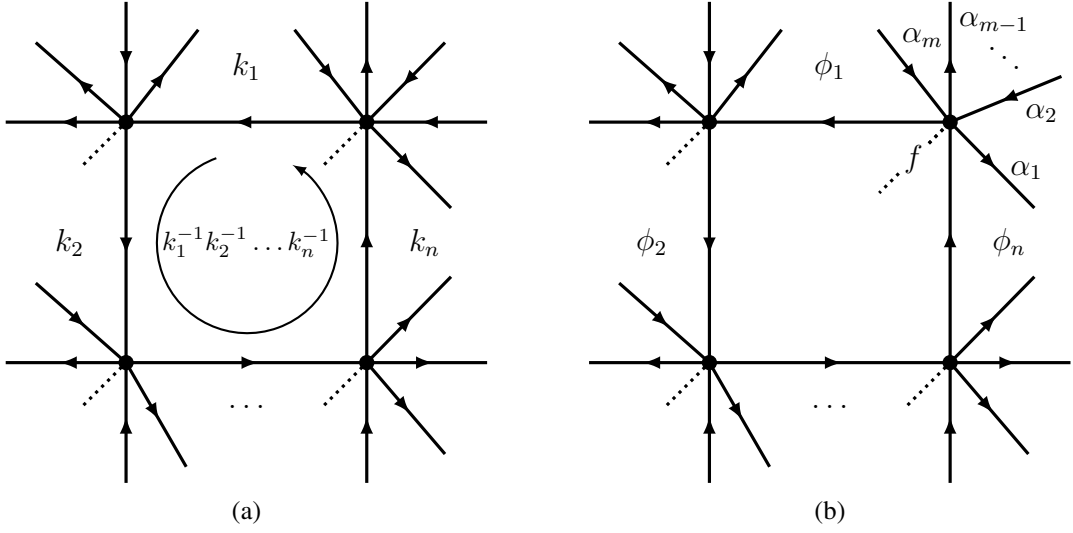
\begin{figure}[h]
\centering
    \begin{center}
        \subfloat[]{
		\begin{tikzpicture}[>=latex,scale=0.8,>=latex,very thick,decoration={
		    markings,
		    mark= at position 0.55 with {\arrow{>}}}]
			\draw[postaction={decorate}] ({0},{-2})--({0},{0}); 
			\draw[postaction={decorate}] ({0},{4})--({0},{0}); 
			\draw[fill] ({0},{0}) circle [radius=0.1];
			\draw[fill] ({0},{4}) circle [radius=0.1]; 
			\draw[postaction={decorate}] ({0},{0})--({-2},{0}); 
			\draw[postaction={decorate}] ({0},{0})--({1},{-sqrt(3)});
			\draw[postaction={decorate}] ({-1.5},{sqrt(1.75)})--({0},{0});	
			\draw[postaction={decorate}] ({0},{4})--({-1.5},{4+sqrt(1.75)});
			\draw[postaction={decorate}] ({0},{4})--({1.2},{4+sqrt(2.36)});
			\draw[postaction={decorate}] (2.8,{4+sqrt(2.36)})--(4,4);	
			\draw[postaction={decorate}] (5.3,{4+sqrt(1.75)})--(4,4);
			\draw[postaction={decorate}] (4,4)--(5.4,{4-sqrt(2.04)});
		    \draw[postaction={decorate}] (4,0)--(5.4,{sqrt(2.04)});   
		    \draw[postaction={decorate}] (4,0)--(5.3,{-sqrt(2.31)});	
			\draw[postaction={decorate}] ({0},{6})--({0},{4});
			\draw[postaction={decorate}] ({0},{0})--({4},{0});  
			\draw[postaction={decorate}] ({0},{4})--({-2},{4}); 
			\draw[postaction={decorate}] ({4},{4})--({0},{4}); 
			\draw[fill] ({4},{4}) circle [radius=0.1]; 	
			\draw[postaction={decorate}] ({4},{0})--({6},{0}); 
			\draw[fill] ({4},{0}) circle [radius=0.1];
			\draw[postaction={decorate}] ({4},{-2})--({4},{0}); 
			\draw[postaction={decorate}] ({4},{0})--({4},{4});  
			\draw[postaction={decorate}] ({6},{4})--({4},{4}); 
			\draw[postaction={decorate}] ({4},{4})--({4},{6}); 
			\draw[very thick,dotted] ({-sqrt(2)/2},{-sqrt(2)/2}) -- ({0},{0});
			\draw[very thick,dotted] ({-sqrt(2)/2},{-sqrt(2)/2+4}) -- ({0},{4});
			\draw[very thick,dotted] ({-sqrt(2)/2+4},{-sqrt(2)/2+4}) -- ({4},{4});
			\draw[very thick,dotted] ({-sqrt(2)/2+4},{-sqrt(2)/2}) -- ({4},{0});
		    \draw[thick, ->] (1.5,3.4) arc (110:420:1.5cm);		
			\node at (2,2) {\footnotesize$k_{1}^{-1}k_{2}^{-1}\ldots k_{n}^{-1}$};
			\node [left] at (-0.5,2) {$k_2$};
			\node [above] at (2,4.5) {$k_1$};
			\node [right] at (4.5,2) {$k_n$};
			\node [below] at (2,-0.5) {$\ldots$};	
		\end{tikzpicture}
		\label{fig::holonomy}
		}
        \hspace{1cm}
        \subfloat[]{
		\begin{tikzpicture}[>=latex,scale=0.8,>=latex,very thick,decoration={
		    markings,
		    mark= at position 0.55 with {\arrow{>}}}]
			\draw[postaction={decorate}] ({0},{-2})--({0},{0}); 
			\draw[postaction={decorate}] ({0},{4})--({0},{0}); 
			\draw[fill] ({0},{0}) circle [radius=0.1];
			\draw[fill] ({0},{4}) circle [radius=0.1]; 
			\draw[postaction={decorate}] ({0},{0})--({-2},{0}); 
			\draw[postaction={decorate}] ({0},{0})--({1},{-sqrt(3)});
			\draw[postaction={decorate}] ({-1.5},{sqrt(1.75)})--({0},{0});
			\draw[postaction={decorate}] ({0},{4})--({-1.5},{4+sqrt(1.75)});
			\draw[postaction={decorate}] ({0},{4})--({1.2},{4+sqrt(2.36)});
			\draw[postaction={decorate}] (2.8,{4+sqrt(2.36)})--(4,4);
			\draw[postaction={decorate}] (4,4)--(5.4,{4-sqrt(2.04)});
		    \draw[postaction={decorate}] (4,0)--(5.4,{sqrt(2.04)});  
		    \draw[postaction={decorate}] (4,0)--(5.3,{-sqrt(2.31)});
			\draw[postaction={decorate}] ({0},{6})--({0},{4});
			\draw[postaction={decorate}] ({0},{0})--({4},{0});  
			\draw[postaction={decorate}] ({0},{4})--({-2},{4}); 
			\draw[postaction={decorate}] ({4},{4})--({0},{4}); 
			\draw[fill] ({4},{4}) circle [radius=0.1]; 	
			\draw[postaction={decorate}] ({4},{0})--({6},{0}); 
			\draw[fill] ({4},{0}) circle [radius=0.1];
			\draw[postaction={decorate}] ({4},{-2})--({4},{0}); 
			\draw[postaction={decorate}] ({4},{0})--({4},{4});  
			\draw[postaction={decorate}] ({5.85},{4+sqrt(0.5775)})--({4},{4}); 
			\draw[postaction={decorate}] ({4},{4})--({4},{6}); 
			\draw[very thick,dotted] ({-sqrt(2)/2},{-sqrt(2)/2}) -- ({0},{0});
			\draw[very thick,dotted] ({-sqrt(2)/2},{-sqrt(2)/2+4}) -- ({0},{4});
			\draw[very thick,dotted] ({-sqrt(2)/4+4},{-sqrt(2)/4+4}) -- ({4},{4});
			\node at ({-sqrt(2)/2+4.1},{-sqrt(2)/2+4.1}) {$f$};
			\draw[very thick,dotted] ({-sqrt(2)/4+3.5},{-sqrt(2)/4+3.5}) -- ({-sqrt(2)/1.2+4},{-sqrt(2)/1.2+4});
			\draw[very thick,dotted] ({-sqrt(2)/4+4},{-sqrt(2)/4+4}) -- ({4},{4});
			\draw[very thick,dotted] ({-sqrt(2)/2+4},{-sqrt(2)/2}) -- ({4},{0});
			\node [left] at (-0.5,2) {$\phi_2$};
			\node [above] at (2,4.5) {$\phi_1$};
			\node [right] at (4.5,2) {$\phi_n$};
			\node [right] at (4.8,3.2) {$\alpha_1$};
			\node [right] at (5.05,4.15) {$\alpha_2$};
			\node [below] at (2,-0.5) {$\ldots$};
			\node at (4.9,{3.9+sqrt(1.75)}) {$\ddots$};
			\node [right] at (3.95,5.65) {$\alpha_{m-1}$};
			\node [right] at (3.0,5.4) {$\alpha_{m}$};
		\end{tikzpicture}
		\label{fig::holonomyDual}
		}
    \end{center}
    \caption{In Figure (a) we display the holonomy of elements $k_{1}\dots k_{n}$. In Figure (b) we display the dual holonomy on the space of functions.  
	}
\end{figure} 
When the product of the group elements around the plaquette is equal to the identity, we say that the connection is flat at that plaquette.
\\The plaquette operator associated with an element $f$ of the algebra of functions for plaquette $p$ is then given by
\begin{equation}
	\label{eq::GroupHol}
	B^f_{p}(k_1\otimes k_2\otimes\ldots k_n) = \langle f, (k_{n}k_{n-1}\cdots k_{2}k_{1})^{-1}\rangle(k_1\otimes k_2\otimes\ldots k_n) 
\end{equation}
if in place of $f$ we take the Haar integral $\lambda=\delta_{e}$ of $K^{*}$, this operator then becomes the projector into the space of flat connections on $p$ 
\begin{equation}
    P^{\text{flat}}_{p}=B^{\delta_e}_p\ .
\end{equation}
Note that generally the plaquette operator $B_p^{\delta_a}$ is denoted by $B_p^{e}$ (i.e. see~\cite{Kitaev2003} and \cite{Buerschaper2013}). 
The notion of holonomy can now be readily adapted to our present context.
In particular, for a counterclockwise directed plaquette, we can write
\begin{equation}
    \text{Hol}_p(k_{1}, k_{2}, \ldots, k_{n})=S(k_{n}k_{n-1}\cdots k_{2}k_{1})\ .
\end{equation}
If now we go by analogy with discrete gauge theory, the most ``obvious'' extension of the plaquette operator would be
\begin{equation}
    \tilde{B}^f_{p}(k_1\otimes k_2\otimes\ldots k_n) = \langle f, S(k_{n}^{(1)}k_{n-1}^{(1)}\cdots k_{2}^{(1)}k_{1}^{(1)})\rangle(k_1^{(2)}\otimes k_2^{(2)}\otimes\ldots k_n^{(2)})\
\end{equation}
or similarly, in the dual space
\begin{equation}
    \label{eq::originalPlaquette}
    \begin{split}
        B^{f}_{p}(\phi_1\otimes \phi_2\otimes\ldots \phi_n)&=(S(f^{(1)})\phi_{1}\otimes S(f^{(2)})\phi_{2}\otimes\ldots\otimes S(f^{(n)})\phi_{n})\ .
    \end{split}
\end{equation}
This can also be written as 
\begin{equation}
    B^{f}_{p}(\phi_1\otimes \phi_2\otimes\ldots \phi_n)=\text{Hol}^*_p(f)\cdot(\phi_{1}\otimes\phi_{2}\otimes\ldots\otimes\phi_{n})\ ,
    \label{eq::Holdot}
\end{equation}
where $\text{Hol}^*_p(f)$ is the dual holonomy:
\begin{equation}
    \text{Hol}^*_p(f) = (S(f^{(1)})\otimes S(f^{(2)})\otimes\ldots\otimes S(f^{(n)}))
\end{equation}
and $\cdot$ is the canonical product structure on the tensored elements\footnote{This algebra structure is simply given by
\begin{equation*}
    (a\otimes b)\cdot(c\otimes d)= (a\cdot c \otimes b\cdot d)
\end{equation*}
}. Similarly to what happened before, for plaquettes whose edge orientations do not agree with a path turning counterclockwise, the antipode is employed (which would effectively remove the above antipode as $S^2=id$ for semisimple Hopf algebras).

It is now evident that the plaquette operator is closely related to the definition of tensor product algebra. Note however that this product is not necessarily compatible with the action of gauge transformations \cite{Catherine1, Braided_Hisenberg_Group}. Consider a vertex with two incoming edges. Then if we consider $\alpha \in K^{*}$ is assigned to one edge and $\beta\in K^{*}$ to the other.
The action of the vertex operator, for a general finite-dimensional semisimple Hopf algebra, is not compatible with the multiplication of functions, i.e.
\begin{equation}
    A^{h}_{v}(\alpha\otimes \beta)= A^{h}_{v}((\alpha\otimes 1)\cdot(1\otimes \beta))\neq (A^{h^{(2)}}_{v}(\alpha\otimes 1))\cdot(A^{h^{(1)}}_{v}(1\otimes \beta))\ .
\end{equation}
The disequality, generally, holds with an equal sign only when we are working with a group algebra or, more broadly speaking, with a cocommutative Hopf algebra. 
This fact poses a significant issue for the consistency of the theory, as it means that holonomies of functions on multiple edges will be changed in ``unexpected'' ways by gauge transformations. In particular, the set of functions with trivial holonomies will not be invariant under gauge transformations, so it won't be possible to consider the space of gauge invariant functions on flat connections, which is the topologically protected space. This is why the definition of the plaquette operator, or actually, more fundamentally, the product of the functions, needs to be modified. 
\\When the Hopf algebra is quasitriangular and therefore admits an $R$-matrix, there is a natural deformation of the tensor product that allows one to recover compatibility of the tensor product algebra with gauge transformations. 
This product takes the name of \textit{braided tensor product} \cite{Braided_Groups_Majid,Catherine1, BraidedMajid, MajidQuantumPrimer,Braided_Hisenberg_Group}.  
Given $\alpha,\ \beta,\ \alpha',\ \beta'$ in the algebra of functions and the $R$ matrix of a quasitriangular Hopf algebra around a vertex $v$ with two edges, this product can be written as
\begin{equation}
	\label{eq::BraidedTensorProducModuleAlgebra}
	(\alpha \otimes \beta) * (\alpha' \otimes \beta') = (\alpha \,  \cdot A^{R'}_{v}(\alpha')) \otimes (A^{R''}_{v}(\beta)  \cdot \beta')\ 
\end{equation}
where $A^{R'}_{v}$ ($A^{R''}_{v}$) denotes the vertex operator acting with the first (second) tensor entry of the $R$ matrix (see (\ref{eq::General_R})) and we are considering the ordering of the tensored elements to agree with the ordering imposed by the cilia. 
More explicitly, using (\ref{eq::gaugeTransformation}), this formula can be written as
\begin{equation}
    (\alpha \otimes \beta) * (\alpha' \otimes \beta')= \langle {\alpha'}^{(1+\tau_1)}\otimes\beta^{(1+\tau_2)}, (S^{\tau_1}\otimes S^{\tau_2})(R)\rangle(\alpha\cdot{\alpha'}^{(2-\tau_1)}\otimes \beta^{(2-\tau_1)}\cdot\beta')\ ,
\end{equation}
where the above result can be regarded as a consequence of the commutation between $\beta$ and $\alpha'$, which are sitting on edges of decreasing order (with respect to the cilia ordering).

In general, for a local vertex neighbourhood with more than two edges embedded into the lattice, the structure of the product depends not only on the relative position of the elements being multiplied but also on the edge orientations. This is to make sure that the braided tensor structure stays consistent across the lattice \cite{Catherine1}\footnote{Essentially due to the fact that an edge that is incoming for a given vertex is outgoing for a neighbouring one.}. 
Given two elements $\alpha$ and $\beta$, with $i,\ j$ being their edges position as defined by the cilia, the algebra structure on a bivalent
 local vertex neighbourhood is given by
\begin{equation}
\begin{aligned}
	& (\alpha)_{i} * (\beta)_{i} = \langle \beta^{(2)} \otimes \alpha^{(2)}, R \rangle ( \alpha^{(1)}\beta^{(1)})_{i}
	 && \tau_i = 1 \\
	& (\alpha)_{i} *  (\beta)_{i} = (\alpha\beta)_{i} 
	&&  \tau_i = 0 \\
	& (\alpha)_{i} * (\beta)_{j}  = (\alpha \otimes \beta)_{ij} 
	&& i < j \\
	& (\alpha)_{i} *  (\beta)_{j}  =  \langle \beta^{(1+\tau_j)} \otimes \alpha^{(1+\tau_i)},(S^{\tau_i}\otimes S^{\tau_j})(R) \rangle ( \alpha^{(2-\tau_i)} \otimes \beta^{(2-\tau_j)})_{ij}
	&& i > j\ ,
\end{aligned}
\label{eq::BraidedTensProduct}
\end{equation}
as before $\tau_i=0,\ 1$ if the edge is incoming into the vertex or outgoing respectively, and the subscripts indicate which element is associated with which edge.
\\This local algebraic structure can be extended to the full ribbon graph by stitching together the different local vertex neighbourhood algebras. The details on this procedure can be found in \cite{Catherine1} and we provide some description of it in \ref{appendix::AlgProofs}.

From here we can now write down the plaquette operator for Hopf Algebra gauge theory
\begin{equation}
    \mathcal{B}^{f}_{p}(\phi_1\otimes \phi_2\otimes\ldots \phi_n)=\text{Hol}^{*}(p)*(\phi_1\otimes \phi_2\otimes\ldots \phi_n)\ .
\end{equation}
In \ref{appendix::AlgProofs} we show that this operator, when acting on a plaquette such as the one given in Figure \ref{fig::holonomyDual}, can be written as 
\begin{equation}
\label{eq::plaquetteGeneralFormula}
    \begin{split}
        & \mathcal{B}^{f}_{p}(\phi_{1},\ \phi_{2},\ \ldots,\ \phi_{n-1},\ \phi_{n},\ \alpha_{1},\ \alpha_{2}, \ \ldots,\ \alpha_{m}) =\\
        & \, \ \ \langle\phi_{n}^{(1)}\otimes f^{(1)},R\rangle\left[ \, \prod_{j=1}^{m}\langle\alpha_{j}^{(1+\tau_j)}\otimes f^{(j+1)},(S^{\tau_j}\otimes id)(R)\rangle\right]\langle\phi_{1}^{(2)}\otimes f^{(m+2)}, R^{-1}\rangle\\
        &\ \ B^{f^{(m+3)}}_{p}\left(\phi_{1}^{(1)},\ \phi_{2},\ \ldots,\ \phi_{n-1},\ \phi_{n}^{(2)},\ \alpha_{1}^{(2-\tau_{1})}, \ \ldots,\ \alpha_{m}^{(2-\tau_{m})}\right)\ ,
    \end{split}
\end{equation}
with $B^f_p$ as in (\ref{eq::originalPlaquette}). This formula may seem complicated, but it is simply saying that the most ``obvious'' plaquette operator and the one that is compatible with gauge transformation differ only for something that happens at the starting vertex of the plaquette. A gauge transformation acting at any combination of vertices of the lattice (i.e. some product of gauge transformations on individual vertices) cannot change the holonomy, unless it acts at the base point of the plaquette and then just in a canonical way for a gauge transformation acting on a single edge (a loop), with the group element on that edge being the monodromy of the plaquette. Note that equation (\ref{eq::plaquetteGeneralFormula}) looks like as an element $f$, placed at an extra incoming edge at the cilia, is multiplying the functions on the lattice according to equation (\ref{eq::BraidedTensProduct}). 

With this definition, it is now possible to make sense of the concept of flat connection for Hopf algebra gauge theory. In particular, given the Haar integral $\lambda$, of the algebra of functions, the following defines a projector (see \cite{Catherine1}) into the space of flat connections:
\begin{equation}
    P^{\text{flat}}_{p}=\mathcal{B}^{\lambda}_p\ .
\end{equation}
With this definition plaquette operators have non-trivial commutation relations, and
 in particular
\begin{equation}
    \label{eq::plaquetteCommutations}
    \begin{split}
    &\mathcal{B}^{f}_p\mathcal{B}^{g}_p=\langle g^{(1)}\otimes f^{(2)}, R\rangle\langle g^{(3)}\otimes f^{(1)}, R^{-1}\rangle \, \mathcal{B}^{f^{(3)}g^{(2)}}_{p}\\
    &\mathcal{B}^{f}_p \, \mathcal{B}^{g}_q=\mathcal{B}^{g}_q \, \mathcal{B}^{f}_p\qquad p\neq q\ .
    \end{split}
\end{equation}
Plaquette operators and vertex operators commute at all vertices which do not contain the singular cilium pointing into the plaquette.
However at the vertex containing the plaquette's cilia we have
\begin{equation}
    \label{eq::commVertexPlaquette}
        A_{v}^{h}\mathcal{B}_p^{f}=\langle f^{(3)}, S(h^{(1)})\rangle\langle f^{(1)}, h^{(2)}\rangle  \mathcal{B}_{p}^{f(2)} A_{v}^{h^{(3)}}\ .
\end{equation}
The proofs for the above relations can be found in \ref{appendix::AlgProofs}. 
These properties can be directly compared with similar ones obtained using $B_{p}^{f}$ ~\cite{Buerschaper2013,Catherine2}, 
\begin{equation}
    \label{eq::commBA}
    B^{f}_{p}B^{g}_p=B_{p}^{gf}\qquad A_{v}^{h}B_p^{f}=\langle f^{(3)}, S(h^{(1)})\rangle\langle f^{(1)}, h^{(3)}\rangle  B_{p}^{f(2)} A_{v}^{h^{(2)}}\ ,
\end{equation}
which is saying that plaquette and vertex operators of traditional Kitaev models form a representation of the Drinfeld double of the gauge symmetry~\cite{Kitaev2003}. Note that for cocommutative Hopf algebras, Equation (\ref{eq::commVertexPlaquette}) agrees with Equation (\ref{eq::commBA}).
In particular, this also means that the particle content of the two theories is the same, as we will see in the following.

We conclude this section by mentioning that both in the original Kitaev model and in the more general Hopf algebra gauge theory, projectors onto flat connections and onto the space of gauge invariant functions commute with each other:
\begin{equation}
\begin{split}
     \qquad P^{\text{flat}}_{p_{1}} \, P^{\text{flat}}_{p_{2}} = P^{\text{flat}}_{p_{2}} \, P^{\text{flat}}_{p_{1}} &\qquad P_{v_{1}}^{\text{inv}}\, P_{v_{2}}^{\text{inv}} = P_{v_{2}}^{\text{inv}} \, P_{v_{1}}^{\text{inv}}\qquad \forall\  p_{1},p_{2},v_{1},v_{2},\\ &\hspace{-30pt} P^{\text{flat}}_{p}\, P_{v}^{\text{inv}}=P_{v}^{\text{inv}}\, P^{\text{flat}}_{p}\qquad \forall\ p, v\ .
\end{split}
\end{equation}
These results can be proved from the properties of the Haar integrals and the relations given above. In particular, this means that we have a good set of commuting projectors to define a stabilizer code (e.g. on a torus).

\section{Lattice Hopf algebra gauge theory for $\mathbb{CZ}_N$}
\label{sec::LatticeHopf}
In this section, we will construct Hamiltonians for Hopf algebra gauge theories on a 
square lattice like the one shown in Figure \ref{fig::squareLattice}.
We will follow Kitaev's idea and encode the space of gauge invariant functions and flat connection into the ground state of some Hamiltonian, \cite{Kitaev2003}.

Since all the local projectors of the Hopf algebra gauge theory commute with each other, we can define such a Hamiltonian, which is a sum of mutually commuting projection operators, by
\begin{equation}
    \label{eq::projectorsHamiltonia}
     H = -\sum_{v}P_{v}^{\text{inv}}-\sum_{p}P_{p}^{\text{flat}}\ .
\end{equation}
The space of gauge invariant and flat connections is then identified with the space of ground states of this Hamiltonian.
Gauge invariance is implemented as an energy penalty for violations. 
Considering states that break gauge invariance and flatness give rise to anyons \cite{Kitaev2003, Buerschaper2013,KitaevModels_FiniteGroup}.

In the present context, we will consider the theory constructed over the quasitriangular Hopf algebra $\mathbb{C}\mathbb{Z}_N$, the group algebra constructed over $\mathbb{Z}_{N}$, the cyclic group of order $N$. 
We denote the basis of $K = \mathbb{CZ}_{N}$ by $a^{p}$, where $a$ is the generator of $\mathbb{Z}_{N}$ and a basis of characters for $K^{*}$ by $\chi_{i}$, with $i = 0,\dots N-1$.
The action of the characters on $\mathbb{Z}_{N}$ is defined as
\begin{equation}
    \label{eq::omegaDef}
   \chi_{i}(a^{p})=\omega^{ip} \qquad\omega=e^{\frac{2\pi i}{N}}\ .
\end{equation}

As we discussed in Section \ref{sec::Background}, the novelty of this approach comes with the introduction of the braided tensor product with a non-trivial $R$ matrix in $K$. 
We are therefore interested to see how a non-trivial $R$-matrix changes the gauge theory and Kitaev model in the context where the Hopf algebra is constructed over a group algebra. In this case, the gauge theory with the trivial $R$-matrix would be exactly the conventional lattice gauge theory and the Kitaev model would be the conventional toric code.
In particular, for  $\mathbb{CZ}_N$, there are up to $N$ choices of quasitriangular structure. 
We label these by $k=0,1,\ldots,N-1$ in the $R$ matrix, which is defined by 
\begin{equation}
    R=\begin{cases}
    \frac{1}{N}\sum_{p,q=0}^{N-1}\omega^{-kpq} \hspace{2pt} a^{p}\otimes a^{q} & \text{gcd}(k,N)=1\\
    \ 1\otimes 1 & k=0
	\end{cases}
	\label{eq::ZnRmatrixAppendix}
\end{equation}
with $\omega$ an $N$'th root of unity as in (\ref{eq::omegaDef}). We can now start building representations of vertex and plaquette operators.

Building a matrix representation for the Hamiltonian (\ref{eq::projectorsHamiltonia}) is equivalent to considering the regular representation
 of vertex operators and plaquette operator over $(\mathbb{CZ}_N)^{*}$ and we identify
\begin{equation}
\label{eq::basisDefinition}
    \chi_i\equiv |i\rangle\ .
\end{equation}
Using this basis and using (\ref{eq::gaugeTransformation}, \ref{eq::plaquetteGeneralFormula}), it is straightforward to find the regular representation of vertex operators and plaquette operators. In particular, these can be given in terms of the matrices 
\begin{equation}
    \sigma=\begin{pmatrix}1 & 0 & 0 &\cdots & 0\\
    0 & \omega & 0&\cdots& 0\\
    \vdots& &\ddots & &\vdots\\
    0&0&\cdots&\omega^{N-2}&0\\
    0&0&\cdots&0&\omega^{N-1}
    \end{pmatrix}\ , 
	\qquad
	\tau = \begin{pmatrix}
	0&1&0&\cdots&0\\
	0&0&1&\cdots&0\\
	\vdots&&&\ddots 
	&\vdots\\
	0&0&\cdots&0&1\\
	1&0&\cdots&0&0
	\end{pmatrix}\ ,
\label{eq::Tau}
\end{equation}
which are a generalization of the Pauli matrices and reduce to the familiar $\sigma^z,\ \sigma^x$ when $N=2$. The commutation relation of the $\tau$ and $\sigma$ matrices is the following:
\begin{align}
	&\tau^{m} \sigma^{l} = \omega^{ml} \sigma^{l}\tau^{m} \nonumber\\
	& \sigma^{l} \tau^{m} = \omega^{-ml} \tau^{m} \sigma^{l}\ .
\end{align}

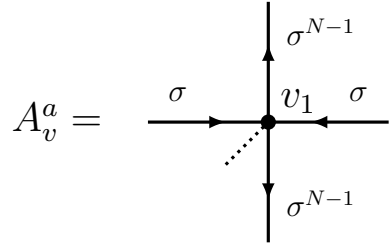
\begin{figure}[t]
	 \centering
	 \begin{tikzpicture}[>=latex,scale=0.8,>=latex,very thick,decoration={
	     markings,
	     mark= at position 0.65 with {\arrow{>}}}]
	 	\draw[postaction={decorate}] ({0},{0})--({0},{-2}); 
	 	\draw[postaction={decorate}] ({0},{0})--({0},{2});
	 	\draw[fill] ({0},{0}) circle [radius=0.1];
	 	\draw[postaction={decorate}] ({-2},{0})--({0},{0});
	 	\draw[postaction={decorate}] ({2},{0})--({0},{0});	
	 	\draw[very thick,dotted] ({-sqrt(2)/2},{-sqrt(2)/2}) -- ({0},{0});
	     \node [right] at (0.1,-1.35) {$\sigma^{N-1}$};
	     \node [above] at (1.5,0.15) {$\sigma$};
	     \node [above] at (-1.5,0.15) {$\sigma$};
	     \node [right] at (0.1,1.5) {$\sigma^{N-1}$};   
	 	\node [scale = 1.3] at ({-3.51},{0}) {$A^{a}_{v}=$};
	 	\node [scale = 1.3] at ({0.5},{0.35}) {$v_{1}$};	
	 \end{tikzpicture}
	 \caption{Here we display the action of the vertex operator associated with gauge 
	 transformation by $ a\in\mathbb{C} \mathbb{Z}_{N}$ for the local vertex neighbourhood $v_{1}$. 
	 The matrix $\sigma$ is assigned to incoming edges and $\sigma^{-1} = \sigma^{N-1}$ is assigned to 	an outgoing edge.  
	 This is the usual vertex operator for conventional $\mathbb{Z}_{N}$ toric code. 
	 }
    \label{fig::vertexZN}
\end{figure}

Let's therefore start by considering the representation of a vertex operator that acts on the vertex $v_1$ depicted in Figure \ref{fig::vertexZN}, under (\ref{eq::gaugeTransformation}). It can be proved that this action, on the basis given in (\ref{eq::basisDefinition}), corresponds to the one provided in the same Figure. Using (\ref{eq::gaugeTransformation}) for a general vertex $v$ we get
\begin{equation}
    \label{eq::propertyVertexZN}
    A_{v}^{a^{i}}=(A_{v}^{a})^{i},\qquad (A_{v}^{a})^{N}=1, \qquad(A_{v}^{a})^{N-1}=(A_{v}^{a})^{\dagger}, 
\end{equation}
and reversing the edge orientations corresponds to taking the Hermitian adjoint of the above operator. This means that when considering the lattice shown in Figure \ref{fig::squareLattice}, the direction of the arrows around the vertex is unimportant (note that there are two types of vertices), as the form of the projector into gauge invariant space is independent of edges orientations at that vertex. 
Given $l= \frac{1}{N}\sum_{i=0}^{N-1} a^{i}$ we have 
\begin{equation}
    \label{eq::gaugeProjectorZN}
    P_{v}^{\text{inv}} = \frac{1}{N}\left(1+(A_{v}^{a})+(A_{v}^{a})^{2}+\ldots +(A_{v}^{a})^{N-1}\right)\ .
\end{equation}

We can now turn our attention to the plaquette operator. The matrix form of the plaquette operator is given by considering the representation of (\ref{eq::plaquetteGeneralFormula}) on the basis (\ref{eq::basisDefinition}). Consider therefore the plaquette shown in Figure \ref{fig::plaquetteSubdivision}. It is straightforward to see that the
   representation of the plaquette operator associated with $\chi_{1}$, is then the one depicted in the same Figure, where we display the action only for those edges on which the operator acts non-trivially.
\begin{figure}[t]
\begin{center}
    \begin{tikzpicture}[>=latex,scale=0.7,>=latex,very thick,decoration={
        markings,
        mark= at position 0.55 with {\arrow{>}}}]
    	\draw[postaction={decorate}] ({0},{-2})--({0},{0}); 
    	\draw[postaction={decorate}] ({0},{4})--({0},{0}); 
    	\draw[fill] ({0},{0}) circle [radius=0.1];
    	\draw[fill] ({0},{4}) circle [radius=0.1]; 
    	\draw[postaction={decorate}] ({0},{0})--({-2},{0}); 
    	\draw[postaction={decorate}] ({0},{4})--({0},{6});
    	\draw[postaction={decorate}] ({0},{0})--({4},{0});  
    	\draw[postaction={decorate}] ({-2},{4})--({0},{4}); 
    	\draw[postaction={decorate}] ({4},{4})--({0},{4}); 
    	\draw[fill] ({4},{4}) circle [radius=0.1]; 	
    	\draw[postaction={decorate}] ({6},{0})--({4},{0}); 
    	\draw[fill] ({4},{0}) circle [radius=0.1];
    	\draw[postaction={decorate}] ({4},{0})--({4},{-2}); 
    	\draw[postaction={decorate}] ({4},{0})-- ({4},{4});  
    	\draw[postaction={decorate}] ({4},{4})--({6},{4}); 
    	\draw[postaction={decorate}] ({4},{6})--({4},{4}); 
    	\draw[very thick,dotted] ({-sqrt(2)/2},{-sqrt(2)/2}) -- ({0},{0});
    	\draw[very thick,dotted] ({-sqrt(2)/2},{-sqrt(2)/2+4}) -- ({0},{4});
    	\draw[very thick,dotted] ({-sqrt(2)/2+4},{-sqrt(2)/2+4}) -- ({4},{4});
    	\draw[very thick,dotted] ({-sqrt(2)/2+4},{-sqrt(2)/2}) -- ({4},{0});
     	\node  [scale =1] at (6.0,4.6) {$\sigma^{-\widehat{k}}$}; 
     	\node  [scale =1] at (4.7,5.6) {$ \sigma^{\widehat{k}}$};
    	\node  [scale =1] at (2.,-0.7) {$\tau^{N-1}$}; 
    	\node  [scale =1] at (-1.00,1.65) {$\tau^{N-1}$};	
    	\node  [scale =1.2] at (-3.50,2.0) {$\mathcal{B}^{\chi_1}_{p}=$};
    	\node  [scale =1] at (5.3,1.8) {$\tau^{N-1} \sigma^{\widehat{k}}$}; 
    	\node  [scale = 0.9] at (2.15,4.8) {$\tau^{N-1} \sigma^{-\widehat{k}}$};	
    	\node [scale=1] at (4.6,4.6) {$v_{2}$};
    	\node  [scale =1] at (2, 2) {$p_1$};	
    \end{tikzpicture}
\end{center}
\caption{Here we display the decomposition of $\text{Hol}^{*}_{p_{1}}(\chi_{1})$ 
	 for the plaquette $p_{1}$. We display the resulting matrices from the regular representation of the braided tensor product. The important observation is the presence of $\sigma$ matrices at the local vertex neighbourhood with the cilia pointing inward to $p_{1}$, the power of $k$ is related to the choice of R matrix in the braided tensor product.
We can see when $k=0$, then the plaquette operator reduces to the familiar one from the Kitaev model.
}
    \label{fig::plaquetteSubdivision}
\end{figure}
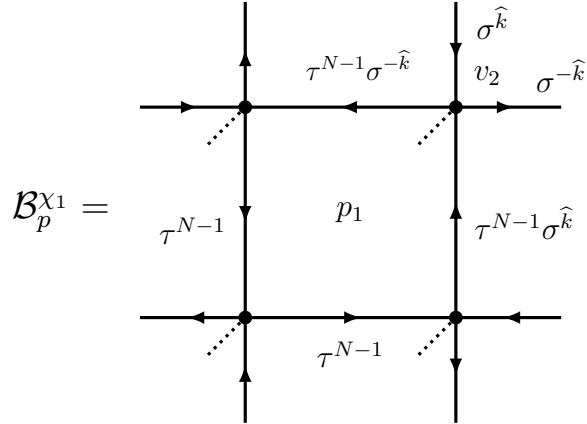

Note that given the particular choice of $R$ matrix that we made, our plaquette operator differs from the plaquette operator of conventional Kitaev models by an extra vertex operator attached to the vertex with the cilium. 
More specifically we have
\begin{equation}
   \mathcal{B}^{\chi_1}_{p_{1}}=B^{\chi_1}_{p_{1}}A^{a^{\widehat{k}}}_{v_{2}} 
   \label{eq::definitionBAtilde}
\end{equation}
where $B_{p}^{\chi_1}$ is the plaquette operator arising from Kitaev models for $\mathbb{C}\mathbb{Z}_{N}$ \cite{pachos_2012,KitaevModels_FiniteGroup}, and $\widehat{k}$ is connected to the $k$ appearing in the $R$ matrix (\ref{eq::ZnRmatrixAppendix}) in the following way
\begin{equation}
    \widehat{k}=\begin{cases}
        \frac{1}{k}\ \text{mod}\ N & \text{gcd}(k,N)=1\\
        0& \text{otherwise}\ .
    \end{cases}
	\label{eq::khat}
\end{equation}
Similarly to the vertex operators, it can be proved that the plaquette operators satisfy the following relations
\begin{equation}
    \label{eq::propertyPlaquette}
    \mathcal{B}^{\chi_{j}}=(\mathcal{B}^{\chi_1})^{j}\qquad (\mathcal{B}^{\chi_1})^{N}=1 \qquad(\mathcal{B}^{\chi_1})^{N-1}=(\mathcal{B}^{\chi_1})^{\dagger}\ .
\end{equation}
In light of these properties, the form of the projector onto the space of flat connections at a general plaquette $p$, does not depend on the orientation of that particular plaquette. 
Given $\lambda=\frac{1}{N}\sum_{i=0}^{N-1}\chi_i$, the projector onto flat connections is
\begin{equation}
    P^{\text{flat}}_{p}=\frac{1}{N}\left(1 + (\mathcal{B}^{\chi_{1}})_p + (\mathcal{B}^{\chi_{1}}_{p})^2 + \ldots + (\mathcal{B}^{\chi_{1}}_{p})^{N-1}\right)
\end{equation}
which is similar to (\ref{eq::gaugeProjectorZN}). 
Bear in mind, however, that the introduction of the $R$ matrix is breaking the electro-magnetic duality given by the interchange of the vertex and plaquette operators \cite{Kitaev2003,EMdualityLattice}. In particular, the vertex operator acts on a single vertex neigborhood in the direct lattice, while the plaquette operator does not act on a single vertex neighborhood of the dual lattice.

Given the above structure for the plaquette and vertex operators, it is not difficult to see that given any two such operators, they commute
\begin{equation}
    \left[\mathcal{B}^{\chi_1}_{p},A^{a}_{v}\right]=0\qquad \forall\ p,\ v\ .
\end{equation}

We are now in the position to prove that the ground state of the theory coincides with the original Kitaev quantum double model. Note in fact that since all the vertex and plaquette operators entering the Hamiltonian commute with each other, the ground state of the model $|GS\rangle$ is such that
\begin{align}
	A^{a}_{v}| GS \rangle &= | GS \rangle \label{eq::ConditionVertexGroundState}\\
	\mathcal{B}^{\chi_1}_{p}|GS  \rangle = |GS \rangle & \, \Rightarrow \, B^{\chi_1}_{p_{1}}|GS \rangle = | GS \rangle\ ,
	\label{eq::ConditionPlaquetteGroundState}
\end{align}
which are the exact same equations that one would obtain in the original Kitaev model. This implies that the ground states of the two theories coincide. 
Observe that this property should be true for any two models based on the same Hopf algebra, but differing in the choice of R-matrix on that Hopf algebra,
as can be deduced more generally from \cite{Catherine1}. 

This concludes the treatment of the ground state physics of Hopf algebra gauge theory based on $\mathbb{CZ}_N$. In the next section, we will consider the states with excitations.
For completeness, before undertaking this task, we will describe in some detail the braided tensor product for a local vertex neighbourhood with $\mathbb{CZ}_N$, as there are some interesting aspects to it.

Consider the braided tensor product on $\bm{\chi_{l}}$, $\bm{\chi_{r}}$, two elements on a local vertex neighbourhood similar to the one shown in Figure \ref{fig::gaugeTransform}, defined as
\begin{equation}
	\bm{\chi_{l}}=(\chi_{l_1}\otimes\chi_{l_2}\otimes\ldots\otimes\chi_{l_n})
	\qquad
	\bm{\chi_{r}}=(\chi_{r_1}\otimes\chi_{r_2}\otimes\ldots\otimes\chi_{r_n})\ .
\end{equation}
Using (\ref{eq::BraidedTensProduct}) it can be seen that the braided tensor product for these two general elements, is given by
\begin{equation}
\label{eq::braidedZN}
\begin{split}
   	\bm{\chi_{l}}*\bm{\chi_{r}}
	& =  \omega^{-\sum_{i=1}^{n}\tau_{i}\widehat{k}l_{i}r_{i}-\sum_{i<j}^{n}(-1)^{\tau_i+\tau_j}\widehat{k}l_{i}r_{j}}
	(\chi_{l_1+r_1} \otimes \chi_{l_2+r_2} 
	 \otimes \ldots \otimes \chi_{l_n+r_n})\ .
\end{split}
\end{equation}
To be concrete example, consider vertex $v_1$ depicted in Figure \ref{fig::vertexZN}, and using the following basis
\begin{equation}
       \begin{split}
           &(\chi_{1})_1=(\chi_{1}\otimes \chi_0\otimes \chi_0\otimes \chi_0)\qquad(\chi_{1})_2=(\chi_0\otimes \chi_{1}\otimes \chi_0\otimes \chi_0)\\
           &(\chi_{1})_3=(\chi_0\otimes \chi_0\otimes \chi_{1}\otimes \chi_0)\qquad(\chi_{1})_4=(\chi_0\otimes \chi_0\otimes \chi_0\otimes \chi_{1})\ .
       \end{split}
   \end{equation}
The product in (\ref{eq::braidedZN}) can then be summarized through the following table, where the element at row $i$ and column $j$ corresponds to $(\chi_{1})_i*(\chi_{1})_j$
\begin{center}
   \begin{tabular}{c|c|c|c|c}
        *& $(\chi_1)_1$ & $(\chi_1)_2$ & $(\chi_1)_3$ & $(\chi_1)_4$\\
        \hline
       $(\chi_1)_1$ &$\omega^{-k}(\chi_2)_1$&$(\chi_1)_1\cdot(\chi_1)_2$ &$(\chi_1)_1\cdot(\chi_1)_3$&$(\chi_1)_1\cdot(\chi_1)_4$\\
       $(\chi_1)_2$ &$\omega^{k}(\chi_1)_1\cdot(\chi_1)_2$&$(\chi_2)_2$&$(\chi_1)_2\cdot(\chi_1)_3$&$(\chi_1)_2\cdot(\chi_1)_4$\\
       $(\chi_1)_3$ &$\omega^{-k}(\chi_1)_1\cdot(\chi_1)_3$&$\omega^{k}(\chi_1)_2\cdot(\chi_1)_3$&$\omega^{-k}(\chi_2)_3$&$(\chi_1)_3\cdot(\chi_1)_4$\\
       $(\chi_1)_4$ &$\omega^{k}(\chi_1)_1\cdot(\chi_1)_4$&$\omega^{-k}(\chi_1)_2\cdot(\chi_1)_4$&$\omega^{k}(\chi_1)_3\cdot(\chi_1)_4$&$(\chi_2)_4$\\
   \end{tabular}
\end{center}
As before, by $\cdot$ we are representing the canonical algebra structure on the tensor product. The products of all the other elements of the algebra can be derived from this table and we can see that they are in general non-commutative. 
In discrete gauge theory, the product table on the same basis of functions would be similar, but without phases, and would therefore be isomorphic to the group algebra $(\mathbb{C}\mathbb{Z}_N)^{\otimes 4}$, in accordance with Pontryagin duality. 

The presence of these phases now leads to a natural question, that is, are they gauge invariant or can they be gauged away by some isomorphism? This question can be addressed by studying the second cohomology group of this $\mathbb{CZ}_{N}$-module, as non-trivial projective representations correspond to non-trivial cocycles \cite{CharacterTheoryProjectiveRepresentations,ProjectiveGroupAlgebra}. 
It can be proved that the phases appearing in the table are 2-cocycles, which are not 2-coboundary. So, the local algebra is a twisted group algebra or, likewise, a non-trivial projective representation of $(\mathbb{CZ}_{N})^{\otimes 4}$.

So to summarise, despite the fact that we chose a cocommutative Hopf algebra, we find that the phases introduced by the $R$ matrix render the algebra of functions non-commutative.

\subsection{Excitations in Hopf algebra lattice gauge theory}
\label{sec::Excitations}
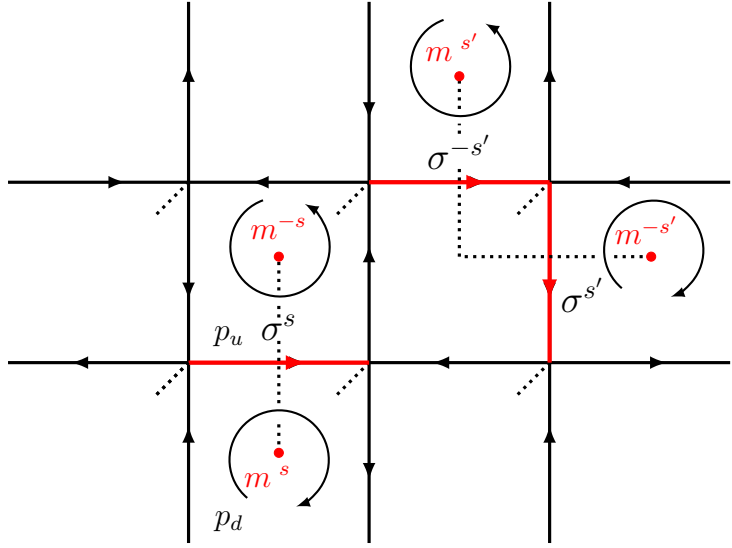
\begin{figure}[t]
    \centering
	\begin{tikzpicture}[scale=0.3,>=latex,decoration={
	    markings,
	    mark=at position 0.65 with {\arrow{>}}}
	    ]    
	\draw[very thick,dotted]({-sqrt(2)},{-sqrt(2)})--(0,0);    
	\draw[very thick,dotted]({-sqrt(2)+8},{-sqrt(2)})--(8,0);    
	\draw[postaction={decorate},very thick] (0,0)--(-8,0);    
	\draw[postaction={decorate},very thick] (16,0)--(8,0);    
	\draw[postaction={decorate},very thick] (0,-8)--(0,0);
	\draw[postaction={decorate},very thick] (0,8)--(0,0);    
	\draw[postaction={decorate},very thick] (8,0)--(8,8);   
	\draw[postaction={decorate},very thick] (8,0)--(8,-8);    
	\draw[postaction={decorate},very thick,] (0,0)--(8,0);
	\draw[postaction={decorate},very thick] (-8,8)--(0,8);
	\draw[postaction={decorate},very thick] (8,8)--(0,8);
	\draw[postaction={decorate},very thick] (8,8)--(16,8);
	\draw[postaction={decorate},very thick] (24,8)--(16,8);
	\draw[postaction={decorate},very thick] (16,0)--(24,0);
	\draw[postaction={decorate},very thick] (16,8)--(16,0); 
	\draw[postaction={decorate},very thick] (16,-8)--(16,0); 
	\draw[postaction={decorate},very thick] (16,8)--(16,16); 
	\draw[postaction={decorate},very thick] (8,16)--(8,8); 
	\draw[postaction={decorate},very thick] (0,8)--(0,16); 
	\draw[very thick,dotted]({-sqrt(2)+16},{-sqrt(2)})--(16,0);
	\draw[very thick,dotted]({-sqrt(2)+8},{-sqrt(2) +8})--(8,8);
	\draw[very thick,dotted]({-sqrt(2)+16},{-sqrt(2) +8})--(16,8);
	\draw[very thick,dotted]({-sqrt(2)+0},{-sqrt(2) +8})--(0,8); 
	\draw[postaction={decorate},ultra thick,red] (0,0)--(8,0);
	\draw[postaction={decorate},ultra thick,red] (16,8)--(16,0);
	\draw[postaction={decorate},ultra thick,red] (8,8)--(16,8);
	\node [above] at (4,0.5) {\large $\sigma^{s}$};
	\node [red, above] at (4,5) {$m^{-s}$};
	\node [below] at (1.8,2.1) {$p_u$};
	\node [red, above] at (3.5,-6.0) {$m^{\phantom{~}s}$};
	\node [below] at (1.8,-6.1) {$p_d$};
	\draw[very thick,dotted](4,-4)--(4,-2.5);
	\draw[very thick,dotted](4,-1.8)--(4,1);
	\draw[very thick,dotted](4,2)--(4,2.6);
	\draw[very thick,dotted](4,3.2)--(4,4.7);
	\draw[red, fill] (4,4.7) circle [radius=0.2];
	\draw[red, fill] (4,-4.) circle [radius=0.2];
	\draw[thick, ->] (3.3,7.2) arc (110:420:2.2cm);
	\draw[thick, ->] (2.6,-6.0) arc (590:290:2.2cm);
	\node [right] at (16,3) {\large $\sigma^{s'}$};
	\node [red, above] at (20.3,4.75) {$m^{-s'}$};
	\node [red, above] at (11.7,12.9) {$m^{\phantom{~}s'}$};
	\draw[very thick,dotted](20.5,4.7)--(18.8,4.7);
	\draw[very thick,dotted](18,4.7)--(12,4.7);
	\draw[very thick,dotted](12,4.7)--(12,8.7);
	\draw[very thick,dotted](12,11.25)--(12,12.7);
	\draw[very thick,dotted](12,10.15)--(12,10.6);
	\draw[red, fill] (12,12.7) circle [radius=0.2];
	\draw[red, fill] (20.5,4.7) circle [radius=0.2];
	\draw[thick, ->] (11.3,15.2) arc (110:420:2.2cm);
	\draw[thick, ->] (19.2,3.3) arc (590:290:2.2cm);
	\node [above] at (12,8.1) {\large $\sigma^{-s'}$};
	\draw[very thick,dotted]({-sqrt(2)},{-sqrt(2)})--(0,0);
	\end{tikzpicture}
    \caption{Here we show the action of $\sigma^{s}$ on an edge generates magnetic excitations in the neighbouring plaquettes. We also show the action of $\sigma^{s'}$ is used to translate a magnetic excitation.}
    \label{fig::MagneticExcitations}
\end{figure}

Now that we have constructed representations of our Hopf algebra lattice gauge theory operators, and defined the Hamiltonian, we can analyse excitations, which are created by violating the constraint equations given in Equation (\ref{eq::ConditionVertexGroundState},\ref{eq::ConditionPlaquetteGroundState}).
As in Kitaev models, states above the ground state can be described by quasiparticle excitations connected by strings of operators \cite{Kitaev2003,GenRibbonOp,Catherine2,KitaevModels_FiniteGroup}.
\\We will see that the fusion rules of our excitations are, in fact, the same as conventional Kitaev models, but the introduction of the braided tensor product and the non-trivial $R$ matrix will change the braiding statistics. 

We start by considering the magnetic excitations. A magnetic excitation is defined by an eigenstate of the Hamiltonian that does not satisfy the flatness constraint regarding the plaquette operators in Equation
(\ref{eq::ConditionPlaquetteGroundState}).
Consider the state obtained by applying the operator $\sigma^{s}_{e}$ at a horizontal edge $e$ for some $s\in \mathbb{Z}_N$. We name the two plaquettes adjacent this edge $p_u$ and $p_d$ (where $u$ and $d$ stand for up and down respectively), as shown in Figure \ref{fig::MagneticExcitations}. From Equation  (\ref{eq::definitionBAtilde}), it can be seen that 
\begin{equation}
    \mathcal{B}_{p_u}\sigma^{s}_{e} \, |GS\rangle=\omega^{N-s}\sigma^{s}_{e} \, |GS\rangle\qquad \mathcal{B}_{p_d}\sigma^{s}_{e}\, |GS\rangle=\omega^{s}\sigma^{s}_{e} \, |GS\rangle\ .
\end{equation}
Since $1+\omega+\omega^2+\ldots+\omega^{N-1}=0$, this implies
\begin{equation}
    (1+\mathcal{B}_{p_i}+\mathcal{B}^{2}_{p_i}+\ldots+\mathcal{B}^{N-1}_{p_i})\sigma^{s}_{e}|GS\rangle=0\quad i=u,d\ .
\end{equation}
Similarly, it can be seen that for all vertices $v$, and all plaquettes $p\neq p_u, p_d$, we get 
\begin{equation}
    A_v\sigma^{s}_{e} \, |GS\rangle = \sigma^{s}_{e}|GS\rangle\qquad     \mathcal{B}_p\sigma^{s}_{e}\, |GS\rangle = \sigma^{s}_{e}|GS\rangle\  .
\end{equation}
These relations imply that when acting with the Hamiltonian on this state, the only non-trivial contributions come from the upper and lower adjoining plaquettes
\begin{equation}
    \begin{split}
    &H\sigma^{s}_{e}|GS\rangle =\\
    &(E_{GS}+2)|GS\rangle- \sum_{i=u,d}(1+\mathcal{B}_{p_i}+\mathcal{B}^{2}_{p_i}+\ldots+\mathcal{B}^{N-1}_{p_i})\sigma^{s}_{e}|GS\rangle=\\
    &(E_{GS}+2)\sigma^{s}_{e}|GS\rangle\ ,
    \end{split}
\end{equation}
with $E_{GS}$ the energy of the ground state. Since the energy has increased by $2$ units, we can interpret this state to be obtained by creating two particles sitting on the two plaquettes.
By convention, we say that the particle at plaquette $p_{u}$ has magnetic charge $-s$ and the one at plaquette $p_{d}$ has magnetic charge $s$. We can separate the magnetic charges
 by considering strings of sigma operators, as shown in Figure \ref{fig::MagneticExcitations}. The string must consist of alternating $\sigma^{s}$ and $\sigma^{-s}$, since excitations always come in pairs. 

As we can see, the magnetic excitations of our model are the same as those in the conventional toric code. The electric excitations, which we consider now, will provide some novelty.
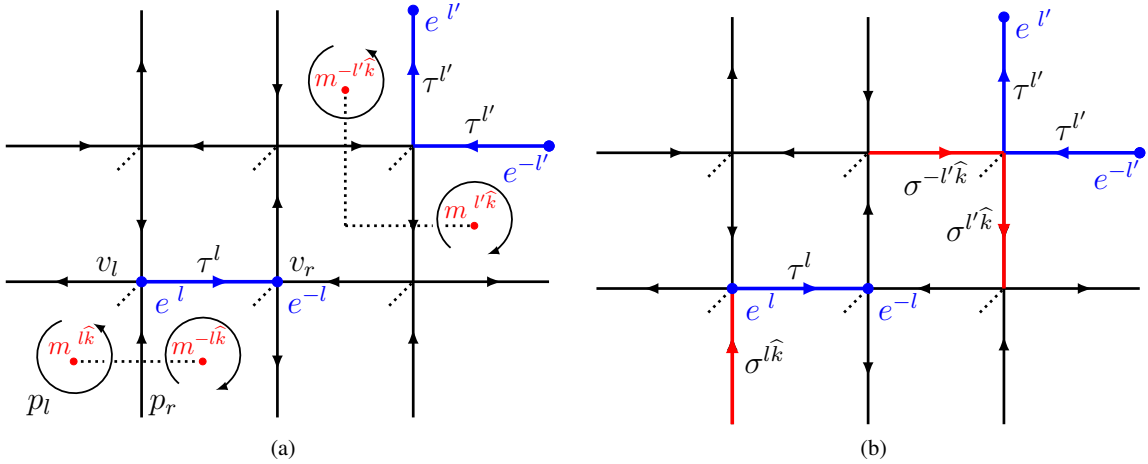
\begin{figure}[t]
	\centering
		\resizebox{0.99\linewidth}{!}
		   {  
    \subfloat[]{ 	  
	\begin{tikzpicture}[scale=0.265,>=latex,decoration={
	    markings,
	    mark=at position 0.65 with {\arrow{>}}}
	    ]     
	\draw[very thick,dotted]({-sqrt(2)},{-sqrt(2)})--(0,0);    
	\draw[very thick,dotted]({-sqrt(2)+8},{-sqrt(2)})--(8,0);    
	\draw[postaction={decorate},very thick] (0,0)--(-8,0);    
	\draw[postaction={decorate},very thick] (16,0)--(8,0);    
	\draw[postaction={decorate},very thick] (0,-8)--(0,0);
	\draw[postaction={decorate},very thick] (0,8)--(0,0);
	\draw[postaction={decorate},very thick] (8,0)--(8,8);
	\draw[postaction={decorate},blue, ultra thick] (0,0)--(8,0);
	\draw[postaction={decorate},very thick] (8,0)--(8,-8);    
	\draw[postaction={decorate},blue, ultra thick] (16,8)--(16,16);
	\draw[postaction={decorate}, very thick] (8,8)--(16,8);   
	\draw[postaction={decorate},very thick] (-8,8)--(0,8);
	\draw[postaction={decorate},very thick] (8,8)--(0,8);
	\draw[postaction={decorate},very thick] (16,0)--(24,0);
	\draw[postaction={decorate},very thick] (16,8)--(16,0); 
	\draw[postaction={decorate},very thick] (16,-8)--(16,0); 
	\draw[postaction={decorate},very thick] (8,16)--(8,8); 
	\draw[postaction={decorate},very thick] (0,8)--(0,16); 
	\draw[very thick,dotted]({-sqrt(2)+16},{-sqrt(2)})--(16,0);
	\draw[very thick,dotted]({-sqrt(2)+8},{-sqrt(2) +8})--(8,8);
	\draw[very thick,dotted]({-sqrt(2)+16},{-sqrt(2) +8})--(16,8);
	\draw[very thick,dotted]({-sqrt(2)+0},{-sqrt(2) +8})--(0,8);
	\draw[blue,fill] (8,0) circle [radius=0.3];
	\draw[blue,fill] (0,0) circle [radius=0.3];
	\node[above,scale=1.2] at (4,-0.1) {$\tau^{l}$};
	\node[blue, above, scale=1.2] at (1.7,-2.5) {$e^{\phantom{~}l}$};
	\node[blue, above,scale=1.2] at (9.8,-2.5) {$e^{-l}$};  
	\draw[blue,fill] (24,8) circle [radius=0.3];
	\draw[blue,fill] (16,16) circle [radius=0.3];
	\node[scale=1.2] at (17.5,12) {$\tau^{l'}$};
	\node[scale=1.2] at (20,9.6) {$\tau^{l'}$};
	\node[blue, above, scale=1.2] at (22.7,5.2) {$e^{-l'}$};
	\node[blue, above,scale=1.2] at (17.8,14.1) {$e^{\phantom{~}l'}$};
	\draw[postaction={decorate},blue, ultra thick] (24,8)--(16,8);
	\draw[blue,fill] (24,8) circle [radius=0.3];
	\draw[very thick,dotted](-4,-4.7)--(-2.2,-4.7);
	\draw[very thick,dotted](-1.6,-4.7)--(1.3,-4.7);
	\draw[very thick,dotted](1.7,-4.7)--(3.6,-4.7);
	\node [red, above] at (-4.1,-4.8) {$m^{\phantom{~}l\widehat{k}}$};
	\node[below,scale=1.2] at (-6,-6) {$p_l$};
	\node[below,scale=1.2] at (1.3,-6) {$p_r$};
	\node[above, scale=1.2] at (-2.,-0.3) {$v_l$};
	\node[above, scale=1.2] at (9.5,-0.3) {$v_r$};
	\draw[red, fill] (-4,-4.7) circle [radius=0.2];
	\draw[thick, ->] (-4.7,-2.3) arc (110:420:2.2cm);
	\node [red, above] at (3.4,-4.8) {$m^{-l\widehat{k}}$};
	\draw[red, fill] (3.6,-4.7) circle [radius=0.2];
	\draw[thick, ->] (2.2,-6.0) arc (590:290:2.2cm);
	\draw[very thick,dotted] (12,9.9)--(12,11.3);
	\draw[very thick,dotted] (12,3.3)--(12,9.5);
	\draw[very thick,dotted] (17.9,3.3)--(19.6,3.3);
	\draw[very thick,dotted](12,3.3)--(17.15,3.3);
	\node [red, above] at (12.0,11.0) {$m^{-l'\widehat{k}}$};
	\draw[red, fill] (12,11.3) circle [radius=0.2];
	\draw[thick, ->] (11.3,13.9) arc (110:420:2.2cm);
	\node [red, above] at (19.4,3.2) {$m^{\phantom{~}l'\widehat{k}}$};
	\draw[red, fill] (19.6,3.3) circle [radius=0.2];
	\draw[thick, ->] (18.2,2.0) arc (590:290:2.2cm);
	\end{tikzpicture}
	
	\label{fig::dyonicExcitations}
	}
    \hspace{0.2cm}
    \subfloat[]{
	\begin{tikzpicture}[scale=0.265,>=latex,decoration={
	    markings,
	    mark=at position 0.65 with {\arrow{>}}}
	    ]    
	\draw[very thick,dotted]({-sqrt(2)},{-sqrt(2)})--(0,0);    
	\draw[very thick,dotted]({-sqrt(2)+8},{-sqrt(2)})--(8,0);    
	\draw[postaction={decorate},very thick] (0,0)--(-8,0);    
	\draw[postaction={decorate},very thick] (16,0)--(8,0);    
	\draw[postaction={decorate},very thick] (0,-8)--(0,0);
	\draw[postaction={decorate},very thick] (0,8)--(0,0);
	\draw[postaction={decorate},very thick] (8,0)--(8,8);
	\draw[postaction={decorate},blue, ultra thick] (0,0)--(8,0);
	\draw[postaction={decorate},red, ultra thick] (0,-8)--(0,0);
	\draw[postaction={decorate},very thick] (8,0)--(8,-8);    
	\draw[postaction={decorate},blue, ultra thick] (16,8)--(16,16);
	\draw[postaction={decorate},red, ultra thick] (8,8)--(16,8);   
	\draw[postaction={decorate},very thick] (-8,8)--(0,8);
	\draw[postaction={decorate},very thick] (8,8)--(0,8);
	\draw[postaction={decorate},very thick] (16,0)--(24,0);
	\draw[postaction={decorate},very thick] (16,8)--(16,0); 
	\draw[postaction={decorate},very thick] (16,-8)--(16,0); 
	\draw[postaction={decorate},very thick] (8,16)--(8,8); 
	\draw[postaction={decorate},very thick] (0,8)--(0,16); 
	\draw[very thick,dotted]({-sqrt(2)+16},{-sqrt(2)})--(16,0);
	\draw[very thick,dotted]({-sqrt(2)+8},{-sqrt(2) +8})--(8,8);
	\draw[very thick,dotted]({-sqrt(2)+16},{-sqrt(2) +8})--(16,8);
	\draw[very thick,dotted]({-sqrt(2)+0},{-sqrt(2) +8})--(0,8);
	\draw[blue,fill] (8,0) circle [radius=0.3];
	\draw[blue,fill] (0,0) circle [radius=0.3];
	\node[above,scale=1.2] at (4,-0.1) {$\tau^{l}$};
	\node[scale=1.2] at (1.9,-4) {$\sigma^{l\widehat{k}}$};
	\node[blue, above, scale=1.2] at (1.7,-2.5) {$e^{\phantom{~}l}$};
	\node[blue, above,scale=1.2] at (9.8,-2.5) {$e^{-l}$};  
	\draw[blue,fill] (24,8) circle [radius=0.3];
	\draw[blue,fill] (16,16) circle [radius=0.3];
	\node[scale=1.2] at (17.5,12) {$\tau^{l'}$};
	\node[scale=1.2] at (20,9.6) {$\tau^{l'}$};
	\node[scale=1.2] at (12,6.5) {$\sigma^{-l'\widehat{k}}$};
	\node[blue, above, scale=1.2] at (22.7,5.2) {$e^{-l'}$};
	\node[blue, above,scale=1.2] at (17.8,14.1) {$e^{\phantom{~}l'}$};
	\draw[postaction={decorate},blue, ultra thick] (24,8)--(16,8);
	\draw[blue,fill] (24,8) circle [radius=0.3];
	\draw[postaction={decorate},red, ultra thick] (16,8)--(16,0);
	\node[scale=1.2] at (14,3.8) {$\sigma^{l'\widehat{k}}$};
	\end{tikzpicture}
	\label{fig::electricExcitations}
	}
	}
\caption{In Figure (a) we show how the action of $\tau^{l}$ on an edge creates electric excitations at the adjacent vertices but also magnetic excitations in the neighbouring plaquettes, given by the cilia at the adjoining vertices to the edge. In comparison to the Kitaev model, the action of $\tau^{l}$ creates four excitations rather than two. 
 In Figure (b) we show that the action of $\sigma^{l \hat{k}}$ on a perpendicular edge can annihilate the additional magnetic excitations. }
\end{figure}
Electric excitations are defined as eigenstates that do not satisfy the gauge invariance condition (\ref{eq::ConditionVertexGroundState}). 
Similarly to the magnetic excitations, it is not possible to create isolated particles and these particles are also obtained by acting on the ground state with the appropriate operators. 
If we were to go by direct analogy with the Kitaev model, this type of particle would be created by the action of $\tau^{l}$ at some edge. 
For example, consider the state obtained by acting with $\tau^{l}$ on a horizontal edge like in Figure \ref{fig::dyonicExcitations} and we get
\begin{equation}
    A_{v_{l}}\tau^{l}_{e}|GS\rangle=\omega^{l} \, \tau^{l}_{e}\, |GS\rangle,\qquad A_{v_{r}}\tau^{l}_{e}|GS\rangle=\omega^{N-l} \, \tau^{l}_{e} \, |GS\rangle\ .
\end{equation}
Given these relations, we can interpret this state as the one obtained by creating two electric excitations sitting at vertices $v_{l}$ and $v_{r}$, which by convention we define to have electric charges $l$ and $-l$, respectively. Acting with $\tau^l$ creates more than these two excitations though. The presence of $\sigma$ matrices attached to the edges around the plaquette cilium, in fact, have non-trivial commutation with $\tau^l$:
\begin{equation}
    \mathcal{B}_{p_{l}}\tau^{l}_{e} \, |GS\rangle=\omega^{l\widehat{k}}\tau^{l}_{e} \, |GS\rangle\qquad \mathcal{B}_{p_{r}}\tau^{l}_{e}|GS\rangle=\omega^{N-l\widehat{k}}\tau^{l}_{e} \, |GS\rangle\ .
\end{equation}
It can be checked that all the other plaquettes and vertex operators around $\tau^l$ have trivial action on this state, and this means that
\begin{equation}
    H \, \tau^{l}_{e}\, |GS\rangle = (E_{GS}+4)\, \tau^{l}_{e} \, |GS\rangle\ ,
\end{equation}
so we have extra magnetic excitations with charges $l\widehat{k}$ and $-l\widehat{k}$ at plaquettes $p_{l}$ and $p_{r}$ or, equivalently, we have extra flux quanta attached to each charge. This is also depicted in Figure \ref{fig::dyonicExcitations}.

We can therefore see that, compared to Kitaev's model, the $R$ matrix and the braided tensor product induces a change in the operators that create electric excitations and we will see that this has implications for the braiding properties of these particles. The correct creation operator for electric excitations is obtained by adding the action of an extra $\sigma$ operator. 
\\In the present case, it can be checked that acting with a $\sigma^{l\widehat{k}}$ at the bottom left edge with respect to the $\tau^{l}$ operator does the trick\footnote{Note that for a $\tau^{l}$ that acts on a vertical edge we need to act with a $\sigma^{-l\widehat{k}}$ rather than a $\sigma^{l\widehat{k}}$. This is is shown in Figure \ref{fig::electricExcitations} as well. }. 
Similar to the magnetic excitations we can separate electric excitations through strings of operators, as shown in \ref{fig::electricExcitations}.
However, these strings now have perpendicular ``hairs'', made of alternating $\sigma^{l\widehat{k}}$ and $\sigma^{-l\widehat{k}}$.

It is worth pointing out that in the original Kitaev model, these types of operators would create dyonic excitations rather than elementary ones. 
The introduction of the $R$ matrix, therefore, is introducing a mapping between the different excitations when compared with conventional Kitaev models. Since the braiding statistics of these operators on the plane are primarily independent of the Hamiltonian that was used to introduce them, we can already anticipate that these particles will have the same dyonic braiding that they have in Kitaev's model.

Following \cite{pachos_2012}, non-elementary excitations that are a combination of electric and magnetic charges are indicated by $\psi^{l,s}$, where $l$ and $s$ represent the electric and magnetic charges of the dyon. 
The fusion rules for the excitations are then the same as the Kitaev model
\begin{equation}
    e^{l}\times e^{l'}=e^{l+l'},\qquad m^{s}\times m^{s'}=m^{s+s'}, \qquad e^{l}\times m^{s}=\psi^{l,s}\ .
\end{equation}
We can see this by considering the multiplication of the operators that generate them.
We will now consider the braiding between such excitations.
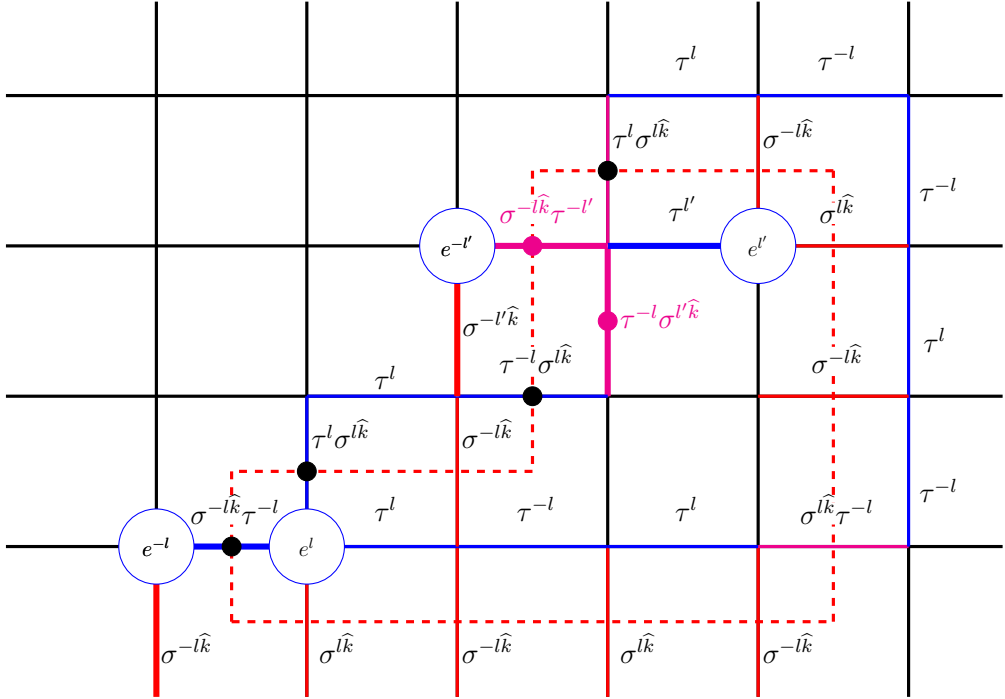
\begin{figure}[t]
		\centering
		\begin{tikzpicture}[scale=0.5,>=latex,decoration={}]
			\draw[postaction={decorate},very thick] (20,-4)--(20,0);    
			\foreach \x in {0,4,8,12,16}
			{
			    \foreach \y in {0,4,8,12}
			    {
			    \draw[postaction={decorate},very thick] ({\x},{\y})--({\x+4},{\y});
			    \draw[postaction={decorate},very thick] ({\x},{\y-4})--({\x},{\y});
			    }
			}
	
		    \foreach \y in {12}
		    {
			\foreach \x in {0,4,8,12,16,20}
			{
			    \draw[postaction={decorate},very thick] ({\x},{\y+2.5})--({\x},{\y});

			}
			}
	
			\foreach \y in {0,4,8,12}
			{
			    \draw[postaction={decorate},very thick] (-4,{\y})--(0,{\y});
	    
			}
	
			\foreach \y in {0,4,8,12}
			{
			    \draw[postaction={decorate},very thick] (20,{\y})--(22.5,{\y});
	    
			}	

		    \foreach \x in {4,8,12,16}
		    {
		        \draw[red, postaction={decorate}, very thick] ({\x},-4)--({\x},0);
		    }	
		\draw[blue,postaction={decorate},very thick] ({8},{0})--({4},{0});
		\draw[blue,postaction={decorate},very thick] ({8},{0})--({12},{0});
		\draw[blue,postaction={decorate},very thick] ({16},{0})--({12},{0});
		\draw[magenta,postaction={decorate},very thick] ({16},{0})--({20},{0});
		\draw[blue,postaction={decorate},very thick] ({20},{0})--({20},{4});
		\draw[blue,postaction={decorate},very thick] ({20},{8})--({20},{4});
		\draw[blue,postaction={decorate},very thick] ({20},{8})--({20},{12});
		\draw[blue,postaction={decorate},very thick] ({20},{12})--({16},{12});
		\draw[blue,postaction={decorate},very thick] ({12},{12})--({16},{12});
		\draw[magenta,postaction={decorate},very thick] ({12},{8})--({12},{12});
		\draw[red, postaction={decorate}, very thick] (16,4)--(20,4);
		\draw[red, postaction={decorate}, very thick] (16,8)--(20,8);
		\draw[red, postaction={decorate}, very thick] (16,8)--(16,12);
		\draw[red, postaction={decorate}, very thick] (8,0)--(8,4);
		\draw[white,postaction={decorate},very thick] ({12},{8})--({12},{4});
		\draw[magenta,postaction={decorate},style={line width=2.3pt}] ({12},{8})--({12},{4});
		\draw[white,postaction={decorate},very thick] ({16},{8})--({12},{8});
		\draw[blue,postaction={decorate},style={line width=2.3pt}] ({16},{8})--({12},{8});
		\draw[blue,postaction={decorate},very thick] ({12},{4})--({8},{4});
		\draw[blue,postaction={decorate},very thick] ({4},{4})--({8},{4});
		\draw[blue,postaction={decorate},very thick] ({4},{0})--({4},{4});
		\draw[red, very thick, dashed] ({2},{-2})--({18},{-2});
		\draw[red, very thick, dashed] ({18},{-2})--({18},{10});
		\draw[red, very thick, dashed] ({18},10)--({10},{10});
		\draw[red, very thick, dashed] ({10},10)--({10},{2});
		\draw[red, very thick, dashed] ({10},2)--({2},{2});
		\draw[red, very thick, dashed] ({2},2)--({2},{-2});
		\draw[white,postaction={decorate},very thick] ({8},{8})--({12},{8});
		\draw[magenta,postaction={decorate},style={line width=2.3pt}] ({8},{8})--({12},{8});
		\draw[white,postaction={decorate},very thick] ({0},{0})--({4},{0});
		\draw[blue,postaction={decorate},style={line width=2.3pt}] ({0},{0})--({4},{0});
		\draw[white,postaction={decorate},very thick] ({0},{-4})--({0},{0});
		\draw[red,postaction={decorate},style={line width=2.3pt}] ({0},{-4})--({0},{0});
		\draw[white,postaction={decorate},very thick] ({8},{4})--({8},{8});
		\draw[red,postaction={decorate},style={line width=2.3pt}] ({8},{4})--({8},{8});
		\foreach \x  in {0,4}
		{
		\foreach \y in {0}
		{\filldraw[blue,fill=white] ({\x},{\y}) circle [radius=1];
		\node [scale=0.7] at ({0},{\y}) {${e}^{-l}$};
		\node [scale=0.7] at ({4},{\y}) {${e}^{l}$};
		}
		}
		\foreach \x  in {8,16}
		{
		\foreach \y in {8}
		{
		\filldraw[blue,fill=white] ({\x},{\y}) circle [radius=1];
		\node [scale=0.7] at ({8},{\y}) {${e}^{-l'}$};
		\node [scale=0.7] at ({16},{\y}) {${e}^{l'}$};
		}
		}
		\node [scale=0.9] at ({0.8},{-2.7}) {${\sigma}^{-l\widehat{k}}$};
		\node [scale=0.9] at ({4.8},{-2.7}) {${\sigma}^{l\widehat{k}}$};
		\node [scale=0.9] at ({8.8},{-2.7}) {${\sigma}^{-l\widehat{k}}$};
		\node [scale=0.9] at ({12.8},{-2.7}) {${\sigma}^{l\widehat{k}}$};
		\node [scale=0.9] at ({16.8},{-2.7}) {${\sigma}^{-l\widehat{k}}$};
		\filldraw [fill=white, draw=white] (1.4,0.6) rectangle (2.5,1.14);
		\node [scale=0.9] at ({2.1},{1}) {${\sigma}^{-l\widehat{k}}\tau^{-l}$};
		\node [scale=0.9] at ({6.1},{1}) {$\tau^{l}$};
		\node [scale=0.9] at ({10.1},{1}) {$\tau^{-l}$};
		\node [scale=0.9] at ({14.1},{1}) {$\tau^{l}$};
		\node [scale=0.9] at ({20.8},{1.5}) {$\tau^{-l}$};
		\filldraw [fill=white, draw=white] (17.4,0.6) rectangle (18.5,1.14);
		\node [scale=0.9] at ({18.1},{1}) {${\sigma}^{l\widehat{k}}\tau^{-l}$};
		\filldraw [fill=white, draw=white] (17.4,4.6) rectangle (18.5,5.3);
		\node [scale=0.9] at ({18.1},{5}) {${\sigma}^{-l\widehat{k}}$};
		\node [scale=0.9] at ({20.7},{5.5}) {$\tau^{l}$};
		\node [scale=0.9] at ({20.8},{9.5}) {$\tau^{-l}$};
		\filldraw [fill=white, draw=white] (17.4,8.6) rectangle (18.5,9.14);
		\node [scale=0.9] at ({18.1},{9}) {${\sigma}^{l\widehat{k}}$};
		\node [scale=0.9] at ({18.1},{13}) {$\tau^{-l}$};
		\node [scale=0.9] at ({14.1},{13}) {$\tau^{l}$};
		\node [scale=0.9] at ({16.8},{11}) {$\sigma^{-l\widehat{k}}$};
		\node [scale=0.9] at ({12.9},{11}) {$\tau^{l}\sigma^{l\widehat{k}}$};
		\filldraw [fill=white, draw=white] (9.4,8.6) rectangle (10.5,9.14);
		\node [scale=0.9] at ({10.4},{9}) {$\color{magenta} {\sigma}^{-l\widehat{k}}\tau^{-l'}$};
		\node [scale=0.9] at ({14},{9}) {$\tau^{l'}$};
		\filldraw [fill=white, draw=white] (9.4,4.6) rectangle (10.5,5.3);
		\node [scale=0.9] at ({10.1},{5}) {$\tau^{-l}{\sigma}^{l\widehat{k}}$};
		\node [scale=0.9] at ({8.9},{6}) {${\sigma}^{-l'\widehat{k}}$};
		\node [scale=0.9] at ({13.4},{6.2}) {$\color{magenta}\tau^{-l}\sigma^{l'\widehat{k}}$};
		\node [scale=0.9] at ({8.8},{3}) {$\sigma^{-l\widehat{k}}$};
		\node [scale=0.9] at ({6.1},{4.5}) {$\tau^{l}$};
		\node [scale=0.9] at ({4.9},{3}) {$\tau^{l}\sigma^{l\widehat{k}}$};
		\filldraw[magenta,fill=magenta] ({12},{6}) circle [radius=0.25];
		\filldraw[magenta,fill=magenta] ({10},{8}) circle [radius=0.25];
		\filldraw[black,fill=black] ({2.0},{0.0}) circle [radius=0.25];
		\filldraw[black,fill=black] ({12.0},{10.0}) circle [radius=0.25];
		\filldraw[black,fill=black] ({10.0},{4.0}) circle [radius=0.25];
		\filldraw[black,fill=black] ({4.0},{2.0}) circle [radius=0.25];
		\end{tikzpicture}
		\caption{ The commutation relations can be found by bringing the electric particle ${e}^{l}$ around ${e}^{l'}$. 
		When there are two operators acting on an edge, they are written in the order of occurrence acting on the ground state, from right to the left. For clarity, we have added a dashed loop where the $\sigma$'s act.
		 As the loops of electric particles are trivial when acting on the ground state, the phase difference between the initial state and the final one can be obtained by commuting the loop with the action of the $l'$ electric particles. 
		The non-trivial commutations enter at the meeting points between the blue loops with the red loops and similarly at the edges coloured in magenta. Note that since edge orientation and cilia have played their role in the definition of the operators, we can avoid displaying them.}
\label{fig::BraidingPicture}
\end{figure}

\subsection{Braiding of excitations and exchange statistics}
\label{sec::Braiding}
Braiding between particles in anyon models is encoded by the \textit{$\mathcal{R}$-symbols} (not to be confused with the R matrix), which describes the exchange statistics, 
\cite{Kitaev2006}. Given two anyons $a$, $b$ that fuse to $c$, their $\mathcal{R}$-symbol is denoted $\mathcal{R}^{a,b}_{c}$. 

In conventional Kitaev quantum double models, the braiding of particles is found by considering paths in which particles loop around each other,\cite{Kitaev2003}.
In particular we can consider an electric particle $e^{l}$ looping around another electric particle $e^{l'}$ as shown in Figure \ref{fig::BraidingPicture}. 
It is understood that all the operators shown in the figure are acting upon the ground state. In order to find the braiding relations between particles, we need to move the blue and red loops across the particle creation operators, so that they can act on the ground state. 
Since the ground state is the same as the Kitaev model, and the two loops leave it unchanged, the state in which just the four particles are created is equivalent, up to a phase, to the state in which $e^{l}$ encircles $e^{l'}$. 
This phase difference can be found by commuting all the $\sigma$ operators to the right of the $\tau$ operators. The phase difference then corresponds to the monodromy between electric particles, and it is given by
\begin{equation}
	\label{eq::Monodr_Elec}
    (\mathcal{R}^{e^{l}, e^{l'}}_{e^{l+l'}})^2=  \omega^{l^{'2}\widehat{k}-l^{2}\widehat{k}-2ll'\widehat{k}} =\omega^{-2ll'\widehat{k}}\ .
\end{equation}
This implies that electric particles can have dyonic statistics, depending on the $\widehat{k}$ that comes from the $R$ matrix. 
Further, they can be mapped to the dyonic excitation of Kitaev models. 
However, as already pointed out, dyonic excitations are not elementary in nature, as dyonic excitations are a bound state of electric and magnetic charges that can be broken up into lower energy states. 
If we choose an $R$-matrix with $k=0$, then the braiding reduces exactly to the braiding in conventional Kitaev models.
The commutation between an electric particle $e^{l}$ and a magnetic particle $m^{l'}$ is the same as usual, and the braiding statistics are given by
\begin{equation}
	\label{eq::Mix}
    (\mathcal{R}^{e^{l},m^{s}}_{\psi^{l,s}})^{2}=\omega^{ls}\ ,
\end{equation}
while magnetic particles are bosonic relative to each other
\begin{equation}
	\label{eq::Monodr_Magne}
    (\mathcal{R}^{m^{s}, m^{s'}}_{m^{s+s'}})^{2}=1\ .
\end{equation}
Note Equation (\ref{eq::Mix}) and Equation (\ref{eq::Monodr_Magne}), are not affected by the $\widehat{k}$ parameter (and therefore by the $R$ matrix).
We shall succinctly display the effect of the $\widehat{k}$ dependence using the twist factors. 
The $\mathcal{R}$-symbols are related to the twist factors $\theta_a$ by
\begin{equation}
	\label{eq::Ribbon}
	\theta_a = \sum_{c}\frac{d_{c}}{d_{a}} \, \mathcal{R}^{a,a}_c\ .
\end{equation}
For Abelian anyons, the quantum dimensions $d_{c}$, $d_{a}$ are equal to one and the summation restricts to the single fusion outcome of the two $a$ anyons,\cite{Kitaev2006,KitaevModels_FiniteGroup}.
It is now easy to compare the twists obtained in conventional Kitaev models with what we obtained:
\begin{center}
\begin{tabular}{|c|c c c|} 
 \hline
 & $\theta_{m^{s}}$ & $\theta_{e^{l}}$ & $\theta_{\psi^{l,s}}$ \\ [0.5ex]
 \hline
 Hopf algebra gauge theory & $1$ & $\omega^{-l^2\widehat{k}}$ & $\omega^{ls-l^2\widehat{k}}$ \\ 
 \hline
 Kitaev Models & $1$ & $1$ & $\omega^{ls}$ \\
 \hline
\end{tabular}
\end{center}
Observe how the process that we described is reminiscent of flux attachment \cite{Goldhaber_flux_attachment, Wilczeck_flux_attachments, Bais_flux_attachment,2021arXiv211010169R}. The elementary charges behave as if they have $-\widehat{k}$ units of flux attached to them during the braiding. In a full exchange, we get just the Aharonov-Bohm phase factors for taking each charge around the flux of the other particle (see (\ref{eq::Monodr_Elec})).

This construction can be readily generalised to a non-Abelian group with an Abelian normal subgroup. As an example, one could consider a toric code model constructed over say, $D_{3}$. Then defining an R matrix on $\mathbb{C}\mathbb{Z}_{3}\unlhd \mathbb{C}D_{3}$, we can examine permutations of order three of the fusion algebra of $D(D_{3})$. This will lead again to a mixing of the dyonic and electric sectors in the model. In particular, permutations of the fusion algebra of $D(D_{3})$ have already been studied in \cite{D3FusionModularInvar}, and are connected to modular invariants of the category of representations.   

To summarise, the exchange statistics are determined by the $k$ parameter in the $R$ matrix. 
The introduction of the braided tensor product (and the related $R$ matrix)  amounts to interchange particle $\psi^{ls}$ and $\psi^{l(s-l\widehat{k})}$ with respect to conventional Kitaev model. The exchange statistics are, therefore, formally the same as the original Kitaev model, but which excitations appear as elementary, minimal energy, depends on the specific choice of the $R$ matrix.

\section{Conclusions.}
\label{sec::Conclusions}
We have constructed examples of Hopf algebra lattice gauge theory models using the formalism developed in \cite{Catherine1}. 
We have seen that the introduction of a non-trivial choice of quasitriangular structure on $\mathbb{C}\mathbb{Z}_{N}$ and the braided
tensor product breaks the explicit electric-magnetic duality of the model's stabilizer operators and changes the identification of braiding statistics while leaving the ground state
unchanged.
We find this to be a rather interesting result; since we recover the familiar fusion statistics one would expect. However, the introduction of the braided tensor product and associated R matrix induces a permutation on the braiding statistics on representations of the quantum double. We observe that the introduction of a nontrivial $R$-matrix amounts to flux attachment permuting the excitation spectrum.
Furthermore, one could construct a Hopf algebra gauge theory with gauge symmetry given by the dihedral group, $D_{N}$, then using an R matrix defined on $\mathbb{Z}_{N}$, a normal subgroup of $D_{N}$, one can expect the permutation on the fusion algebra to be of order $N$.  
It would be interesting to generalize this work to other Hopf algebras (beyond group algebras) which allow for multiple $R$-matrices, to see if a similar interpretation of the permutation of the excitation spectrum is possible there. 


\section{Acknowledgments}
The authors would like to thank A. Bullivant for several valuable discussions and comments on the manuscript, we would also like to extend gratitude to Gert Vercleyen for discussions on the automorphisms of quantum doubles. 
The authors would also like to extend gratitude to the Perimeter Institute for its hospitality during the conference on ``Hopf Algebras in Kitaev's Quantum Double Models: Mathematical Connections from Gauge Theory to Topological Quantum Computing and Categorical Quantum Mechanics'', where this work was initiated and in particular to Prince Osei for his role in organizing the conference and for useful discussions. 
 D.P. and J.K.S. acknowledge financial support from Science Foundation Ireland through Principal Investigator Awards 12/IA/1697 and 16/IA/4524.
A.C. was supported through IRC Government of Ireland Postgraduate Scholarship GOIPG/2016/722.

\section*{References}
\bibliographystyle{alpha}
\bibliography{Hopf_paper_Bibliography}

\newcommand{\etalchar}[1]{$^{#1}$}
\begin{thebibliography}{{Maj}95a}

\bibitem[Avd93]{Bais_flux_attachment}
F.~{Alexander Bais}, Peter {van Driel}, and Mark {de Wild Propitius}.
\newblock {Anyons in discrete gauge theories with Chern-Simons terms}.
\newblock {\em Nuclear Physics B}, 393(3):547--570, March 1993.

\bibitem[BB07]{QuditSurfaceCode}
Stephen~S. {Bullock} and Gavin~K. {Brennen}.
\newblock {Qudit surface codes and gauge theory with finite cyclic groups}.
\newblock {\em Journal of Physics A Mathematical General}, 40(13):3481--3505,
  March 2007.

\bibitem[BCKA10]{EMdualityLattice}
Oliver {Buerschaper}, Matthias {Christandl}, Liang {Kong}, and Miguel {Aguado}.
\newblock {Electric-magnetic duality of lattice systems with topological
  order}.
\newblock {\em arXiv e-prints}, page arXiv:1006.5823, June 2010.

\bibitem[BK98]{QuantumCodesLatticeBoundary}
S.~B. {Bravyi} and A.~Yu. {Kitaev}.
\newblock {Quantum codes on a lattice with boundary}.
\newblock {\em arXiv e-prints}, pages quant--ph/9811052, November 1998.

\bibitem[BM93]{Braided_Hisenberg_Group}
W.~K. {Baskerville} and S.~{Majid}.
\newblock {The braided Heisenberg group}.
\newblock {\em Journal of Mathematical Physics}, 34(8):3588--3606, August 1993.

\bibitem[BM08]{KitaevModels_FiniteGroup}
H.~{Bombin} and M.~A. {Martin-Delgado}.
\newblock {Family of non-Abelian Kitaev models on a lattice: Topological
  condensation and confinement}.
\newblock {\em prb}, 78(11):115421, September 2008.

\bibitem[BMCA13]{Buerschaper2013}
Oliver {Buerschaper}, Juan~Mart{\'\i}n {Mombelli}, Matthias {Christand l}, and
  Miguel {Aguado}.
\newblock {A hierarchy of topological tensor network states}.
\newblock {\em Journal of Mathematical Physics}, 54(1), January 2013.

\bibitem[BSW11]{D3FusionModularInvar}
Salman {Beigi}, Peter~W. {Shor}, and Daniel {Whalen}.
\newblock {The Quantum Double Model with Boundary: Condensations and
  Symmetries}.
\newblock {\em Communications in Mathematical Physics}, 306(3):663--694,
  September 2011.

\bibitem[{Cas}99]{ProjectiveGroupAlgebra}
R.~{Casalbuoni}.
\newblock {Projective Group Algebras}.
\newblock {\em International Journal of Modern Physics A}, 14(1):129--146,
  January 1999.

\bibitem[CCK19]{ExactBoundaryKitaev}
Yevheniia {Cheipesh}, Lorenzo {Cevolani}, and Stefan {Kehrein}.
\newblock {Exact description of the boundary theory of the Kitaev toric code
  with open boundary conditions}.
\newblock {\em prb}, 99(2):024422, January 2019.

\bibitem[CCY21]{GenRibbonOp}
Penghua {Chen}, Shawn~X. {Cui}, and Bowen {Yan}.
\newblock {Ribbon operators in the generalized Kitaev quantum double model
  based on Hopf algebras}.
\newblock {\em arXiv e-prints}, page arXiv:2105.08202, May 2021.

\bibitem[Che15]{CharacterTheoryProjectiveRepresentations}
Chuangxun Cheng.
\newblock A character theory for projective representations of finite groups.
\newblock {\em Linear Algebra and its Applications}, 469:230 -- 242, 2015.

\bibitem[CM22]{AlgBoundary}
Alexander {Cowtan} and Shahn {Majid}.
\newblock {Algebraic Aspects of Boundaries in the Kitaev Quantum Double Model}.
\newblock {\em arXiv e-prints}, page arXiv:2208.06317, August 2022.

\bibitem[{Dav}97]{DavydovQuasiStruct}
A.~A. {Davydov}.
\newblock {Quasitriangular structures on cocommutative Hopf algebras}.
\newblock {\em eprint arXiv:q-alg/970600}, pages q--alg/9706007, June 1997.

\bibitem[dB95]{Bais_Propitius}
Mark {de Wild Propitius} and F.~Alexander {Bais}.
\newblock {Discrete gauge theories}.
\newblock {\em arXiv e-prints}, pages hep--th/9511201, November 1995.

\bibitem[{de }95]{Mark_Propitus_Twisted}
Mark {de Wild Propitius}.
\newblock {\em {Topological interactions in broken gauge theories}}.
\newblock PhD thesis, -, November 1995.

\bibitem[Gol76]{Goldhaber_flux_attachment}
Alfred~S. Goldhaber.
\newblock Connection of spin and statistics for charge-monopole composites.
\newblock {\em Phys. Rev. Lett.}, 36:1122--1125, May 1976.

\bibitem[JKT22]{Boundary_Generalized_Kitaev_Hopf_Gauge}
Zhian {Jia}, Dagomir {Kaszlikowski}, and Sheng {Tan}.
\newblock {Boundary and domain wall theories of 2d generalized quantum double
  model}.
\newblock {\em arXiv e-prints}, page arXiv:2207.03970, July 2022.

\bibitem[Kas95]{Kassel}
Christian Kassel.
\newblock {\em Quantum groups / Christian Kassel.}
\newblock Graduate texts in mathematics ; 155. Springer-Verlag, New York, 1995.

\bibitem[Kit03]{Kitaev2003}
A.Yu. Kitaev.
\newblock Fault-tolerant quantum computation by anyons.
\newblock {\em Annals of Physics}, 303(1):2 -- 30, 2003.

\bibitem[Kit06]{Kitaev2006}
Alexei Kitaev.
\newblock Anyons in an exactly solved model and beyond.
\newblock {\em Annals of Physics}, 321(1):2 -- 111, 2006.
\newblock January Special Issue.

\bibitem[LB12]{Minimal}
Nicolai {Lang} and Hans~Peter {Buchler}.
\newblock {Minimal instances for toric code ground states}.
\newblock {\em pra}, 86(2):022336, August 2012.

\bibitem[Maj92]{Braided_Groups_Majid}
Shahn Majid.
\newblock Braided {Groups}.
\newblock {\em Les rencontres physiciens-math\'ematiciens de Strasbourg
  -RCP25}, 43, 1992.
\newblock talk:7.

\bibitem[{Maj}95a]{BraidedMajid}
S.~{Majid}.
\newblock Algebras and hopf algebras in braided categories.
\newblock {\em eprint arXiv:q-alg/9509023}, pages q--alg/9509023, September
  1995.

\bibitem[Maj95b]{Majid_1995}
Shahn Majid.
\newblock {\em Foundations of Quantum Group Theory}.
\newblock Cambridge University Press, 1995.

\bibitem[Maj02]{MajidQuantumPrimer}
Shahn Majid.
\newblock {\em A Quantum Groups Primer}.
\newblock London Mathematical Society Lecture Note Series. Cambridge University
  Press, 2002.

\bibitem[Meu17]{Catherine2}
Catherine Meusburger.
\newblock {Kitaev lattice models as a Hopf algebra gauge theory}.
\newblock {\em Commun. Math. Phys.}, 353(1):413--468, 2017.

\bibitem[MW21]{Catherine1}
Catherine Meusburger and Derek~K. Wise.
\newblock Hopf algebra gauge theory on a ribbon graph.
\newblock {\em Reviews in Mathematical Physics}, 33(05):2150016, 2021.

\bibitem[Pac12]{pachos_2012}
Jiannis~K. Pachos.
\newblock {\em Introduction to Topological Quantum Computation}.
\newblock Cambridge University Press, 2012.

\bibitem[PWS{\etalchar{+}}07]{ToricSim}
J.~K. {Pachos}, W.~{Wieczorek}, C.~{Schmid}, N.~{Kiesel}, R.~{Pohlner}, and
  H.~{Weinfurter}.
\newblock {Revealing anyonic features in a toric code quantum simulation}.
\newblock {\em arXiv e-prints}, page arXiv:0710.0895, October 2007.

\bibitem[{Rad}21]{2021arXiv211010169R}
Djordje {Radicevic}.
\newblock {Confinement and Flux Attachment}.
\newblock {\em arXiv e-prints}, page arXiv:2110.10169, October 2021.

\bibitem[SLS{\etalchar{+}}21]{Experimental_Toric_Code}
K.~J. {Satzinger}, Y.~J. {Liu}, A.~{Smith}, C.~{Knapp}, M.~{Newman},
  C.~{Jones}, Z.~{Chen}, C.~{Quintana}, X.~{Mi}, A.~{Dunsworth}, C.~{Gidney},
  I.~{Aleiner}, F.~{Arute}, K.~{Arya}, J.~{Atalaya}, R.~{Babbush}, J.~C.
  {Bardin}, R.~{Barends}, J.~{Basso}, A.~{Bengtsson}, A.~{Bilmes},
  M.~{Broughton}, B.~B. {Buckley}, D.~A. {Buell}, B.~{Burkett}, N.~{Bushnell},
  B.~{Chiaro}, R.~{Collins}, W.~{Courtney}, S.~{Demura}, A.~R. {Derk},
  D.~{Eppens}, C.~{Erickson}, L.~{Faoro}, E.~{Farhi}, A.~G. {Fowler},
  B.~{Foxen}, M.~{Giustina}, A.~{Greene}, J.~A. {Gross}, M.~P. {Harrigan},
  S.~D. {Harrington}, J.~{Hilton}, S.~{Hong}, T.~{Huang}, W.~J. {Huggins},
  L.~B. {Ioffe}, S.~V. {Isakov}, E.~{Jeffrey}, Z.~{Jiang}, D.~{Kafri},
  K.~{Kechedzhi}, T.~{Khattar}, S.~{Kim}, P.~V. {Klimov}, A.~N. {Korotkov},
  F.~{Kostritsa}, D.~{Landhuis}, P.~{Laptev}, A.~{Locharla}, E.~{Lucero},
  O.~{Martin}, J.~R. {McClean}, M.~{McEwen}, K.~C. {Miao}, M.~{Mohseni},
  S.~{Montazeri}, W.~{Mruczkiewicz}, J.~{Mutus}, O.~{Naaman}, M.~{Neeley},
  C.~{Neill}, M.~Y. {Niu}, T.~E. {O{\textquoteright}Brien}, A.~{Opremcak},
  B.~{Pat{\'o}}, A.~{Petukhov}, N.~C. {Rubin}, D.~{Sank}, V.~{Shvarts},
  D.~{Strain}, M.~{Szalay}, B.~{Villalonga}, T.~C. {White}, Z.~{Yao}, P.~{Yeh},
  J.~{Yoo}, A.~{Zalcman}, H.~{Neven}, S.~{Boixo}, A.~{Megrant}, Y.~{Chen},
  J.~{Kelly}, V.~{Smelyanskiy}, A.~{Kitaev}, M.~{Knap}, F.~{Pollmann}, and
  P.~{Roushan}.
\newblock {Realizing topologically ordered states on a quantum processor}.
\newblock {\em Science}, 374(6572):1237--1241, December 2021.

\bibitem[SXZ{\etalchar{+}}18]{PhysRevLett.121.030502}
Chao Song, Da~Xu, Pengfei Zhang, Jianwen Wang, Qiujiang Guo, Wuxin Liu, Kai Xu,
  Hui Deng, Keqiang Huang, Dongning Zheng, Shi-Biao Zheng, H.~Wang, Xiaobo Zhu,
  Chao-Yang Lu, and Jian-Wei Pan.
\newblock Demonstration of topological robustness of anyonic braiding
  statistics with a superconducting quantum circuit.
\newblock {\em Phys. Rev. Lett.}, 121:030502, Jul 2018.

\bibitem[Wen07]{Wen}
Xiao-Gang Wen.
\newblock {\em {Quantum Field Theory of Many-Body Systems: From the Origin of
  Sound to an Origin of Light and Electrons}}.
\newblock Oxford University Press, 09 2007.

\bibitem[Wil82]{Wilczeck_flux_attachments}
Frank Wilczek.
\newblock Magnetic flux, angular momentum, and statistics.
\newblock {\em Phys. Rev. Lett.}, 48:1144--1146, Apr 1982.

\bibitem[ZXW{\etalchar{+}}16]{PhysRevLett.117.110501}
Y.~P. Zhong, D.~Xu, P.~Wang, C.~Song, Q.~J. Guo, W.~X. Liu, K.~Xu, B.~X. Xia,
  C.-Y. Lu, Siyuan Han, Jian-Wei Pan, and H.~Wang.
\newblock Emulating anyonic fractional statistical behavior in a
  superconducting quantum circuit.
\newblock {\em Phys. Rev. Lett.}, 117:110501, Sep 2016.

\end{thebibliography}

\appendix

\section{Quasitrangular Hopf algebra from a finite group}
\label{appendix::Quasi_Hopf_Group}
This section will provide some definitions of a quasitriangular Hopf algebra from a finite group.
 See \cite{Buerschaper2013,Kassel,Catherine1,DavydovQuasiStruct} and example 1.7, 1.8 in \cite{BraidedMajid} for further details.
Let $G$ be a finite, not necessarily commutative finite group, then we can construct $\mathbb{C}G$, the free module over $G$, also known as a group algebra.  
On this space, we have the following canonical Hopf algebra structure,
\begin{equation}
	\mu(g,h) = g \cdot h \hspace{20pt}  \Delta(g) = g \otimes g \hspace{20pt} S(g) = g^{-1} \hspace{20pt} \eta(k_{\mathbb{C}}) = k_{\mathbb{C}}e \hspace{20pt} \epsilon(g) = 1_{\mathbb{C}}
	\label{eq::CG_Hopf}
\end{equation}
where all of these maps are $\mathbb{C}$ linear. A Hopf algebra constructed in this way is cocommutative and semisimple.
The Haar integral is given by a sum over all of the elements of the group. So, for example, in  $\mathbb{C}\mathbb{Z}_N$ it is given by
\begin{equation}
    l = e+a+a^{2}+\ldots +a^{N-1}.
\end{equation}
Now we would like to describe the dual Hopf algebra, $K^{*}$, which is a Hopf algebra constructed on $G^{*}$ = Hom($\mathbb{C}G,\mathbb{C}$), this is a semisimple and commutative Hopf algebra. 
There are two standard definitions of a basis for this space which we will now describe. 
The basis of functions,
\begin{equation}
\begin{split}
	\mu(\delta_{g},\delta_{h}) = \delta_{g}(h)\delta_{h}  \quad  
	& \Delta(\delta_{g}) =  \sum_{u v =g} \delta_{u} \otimes \delta_{v} 
	\quad S(\delta_{g})= \delta_{g^{-1}} 
	\\  
  & \hspace{-40pt} \epsilon(\delta_{g}) = \delta_{g}(e)  \quad \eta(k_{\mathbb{C}}) = \delta_{e}
  \label{eq::Hopf_DeltaBasis}
\end{split}
\end{equation}
where $\delta_{g}(h)$ is equal to $1$ if $g=h$ and $0$ otherwise.
There is also another basis given by matrix elements of a representation of $G$,
\begin{equation}
\label{eq::Hopf_RepBasis}
\begin{split}
 \mu(\pi_{ij} , \pi_{kl})(g) =  \pi_{ij}(g) \pi_{kl}(g) \quad  & \Delta(\pi_{ij}) = \sum_{k} \pi_{ik} \otimes \pi_{kj}  \quad  	S(\pi_{ij}) = \pi_{ji}^{*}
  \\
  & \hspace{-60pt} \epsilon(\pi_{ij}) = \pi_{ij}(e) \quad \eta(\pi_{ij}) = \pi_{0}
\end{split}
\end{equation}
where $\pi_{0}$ is the matrix element of the trivial representation.
For $(\mathbb{C}\mathbb{Z}_{N})^{*}$ the characters are grouplike: $\Delta(\chi_{i})= \chi_{i} \otimes \chi_{i}$. 
The Haar integral for $(\mathbb{C}\mathbb{Z}_{N})^{*}$ is the sum over the character ring,
\begin{equation}
    \lambda = \chi_{0} + \chi_{1} + \ldots + \chi_{N-1}.
\end{equation}
The dual Hopf algebra $K^{*}$ is cocommutative if and only if the group is commutative. 
The definition of a quasitriangular structure is given by the R matrix, which satisfies the following axioms
\begin{equation}
	\begin{split}
		& R \Delta(x) R^{-1} = (T \circ \Delta) (x) \\
		& ( \Delta \otimes \text{id}) (R) = R_{13}R_{23} \\
		& ( \text{id} \otimes \Delta) (R) = R_{13}R_{12} \\
	\end{split}
\end{equation}
where $T: K \otimes K \rightarrow K \otimes K$ is the flip map, which exchanges tensor factors.
The R matrix satisfies the Yang Baxter equation,
\begin{equation}
\label{eq::YangBaxter}
	R_{12} R_{13} R_{23} = R_{23}R_{13}R_{12}.
\end{equation}
The R matrix for a finite group can be defined as follows 
\begin{equation}
	R = \frac{1}{N} \sum_{g_{1},g_{2}} r(g_{1}, g_{2}) g_{1} \otimes g_{2}
	\end{equation}
This gives $\mathbb{C}G$ the structure of a quasitriangular Hopf algebra if $r(g_{1},g_{2})$ is a $g$-conjugation invariant bicharacter. 
This allows an R matrix defined on a normal subgroup  $H \unlhd G$ to be lifted to $G$. 
The R matrix defined as such satisfies the two cocycle condition, i.e,
\begin{equation}
		r(g_{1},g_{2}\cdot g_{3})\hspace{2pt} r^{*}(g_{1},g_{3})\hspace{2pt}
		r(g_{3},g_{2}) \hspace{2pt} r^{*}(g_{1} \cdot g_{3},g_{2}) = 1.
	\end{equation}
where $r^{*}$ is the complex conjugation of $r$.
For example, when $G = \mathbb{Z}_{N} $, we gave an explicit R matrix satisfying the two cocycle condition in Eq. (\ref{eq::ZnRmatrixAppendix}).

In \cite{DavydovQuasiStruct}, it is shown that this is the most general form of quasitriangular structure on a finite-dimensional cocommutative Hopf algebra.
The canonical Hopf algebra structure on a group algebra is equivalently; semisimple, cosemisimple and the antipode, is of order two, $S^{2} = \text{id}$ \cite{Kassel,Buerschaper2013,Catherine1}.

\section{Braided tensor product and plaquette operator supplementary material}
\label{appendix::AlgProofs}
In this appendix, we shall collect some proofs for the results stated in the main text. 
The proofs in this section are true for any quasitriangular semisimple finite dimensional Hopf algebra over an algebraically closed field. 
In particular, we will explain how all the different local algebra structures, as expressed in Equation (\ref{eq::BraidedTensProduct}), come together to form a Hopf algebra gauge theory. This procedure is 
 explained in \cite{Catherine1}, and we refer there for a more formal approach.

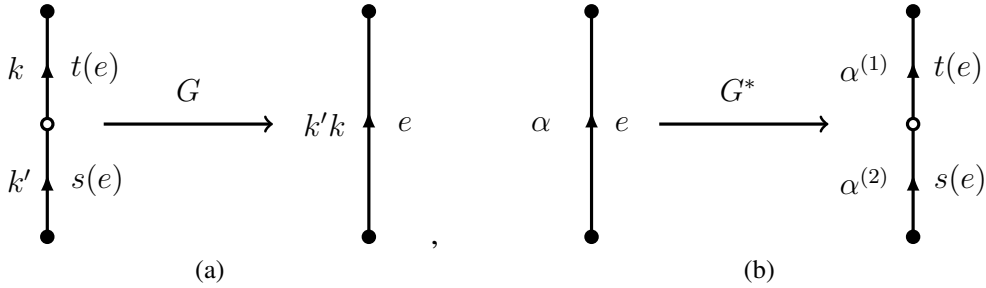
\begin{figure}[H]
\centering
    \subfloat[]
	{
	\begin{tikzpicture}[>=latex,scale=0.75,>=latex,very thick,decoration={
	    markings,
	    mark= at position 0.56 with {\arrow{>}}}]   
	    \draw[postaction={decorate}] ({-6.5},{0})--({-6.5},{2});
	    \draw[postaction={decorate}] ({-6.5},{2})--({-6.5},{4});
		\draw[fill] ({-6.5},{0}) circle [radius=0.1];
		\draw[fill] ({-6.5},{4}) circle [radius=0.1]; 
		\draw[fill=white] ({-6.5},{2}) circle [radius=0.1];	
		\node [left] at (-6.56,1) {$k'$};
		\node [left] at (-6.56,3) {$k\phantom{'}$};		
		\node [right] at (-6.3,1) {$s(e)$};
		\node [right] at (-6.3,3) {$t(e)$};	
		\draw [->,-{Classical TikZ Rightarrow}] (-5.5,2)--(-2.5,2);
		\node [above] at (-4.,2.2) {$G$};
		\draw[postaction={decorate}] ({-0.8},{0})--({-0.8},{4});
		\draw[fill] ({-0.8},{0}) circle [radius=0.1];
		\draw[fill] ({-0.8},{4}) circle [radius=0.1]; 
		\node [right] at (-0.5,2) {$e$};
		\node [left] at (-1,2) {$k'k$};
	\end{tikzpicture}
	\label{fig::GSplittingMap}
	},
    \qquad
    \subfloat[]{
	\begin{tikzpicture}[>=latex,scale=0.75,>=latex,very thick,decoration={
	    markings,
	    mark= at position 0.56 with {\arrow{>}}}]   
	    \draw[postaction={decorate}] ({-6.2},{0})--({-6.2},{4});
		\draw[fill] ({-6.2},{0}) circle [radius=0.1];
		\draw[fill] ({-6.2},{4}) circle [radius=0.1]; 
		\draw [->,-{Classical TikZ Rightarrow}] (-5,2)--(-2,2);
		\node [above] at (-3.6,2.2) {$G^{*}$};
		\node [left] at (-6.7,2) {$\alpha$};
		\node [right] at (-6,2) {$e$};
		\node [left] at (-0.7,1) {$\alpha^{(2)}$};
		\node [left] at (-0.7,3) {$\alpha^{(1)}$};
		\draw[postaction={decorate}] ({-0.5},{0})--({-0.5},{2});
		\draw[fill] ({-0.5},{0}) circle [radius=0.1];
		\draw[fill] ({-0.5},{4}) circle [radius=0.1]; 
		\draw[postaction={decorate}] ({-0.5},{2})--({-0.5},{4});	
		\draw[fill=white] ({-0.5},{2}) circle [radius=0.1];
		\node [right] at (-0.35,1) {$s(e)$};
		\node [right] at (-0.35,3) {$t(e)$};
	\end{tikzpicture}
	\label{fig::SplittingMap}
	}
    \caption{Here, we display the action of the splitting map given in Equation \ref{eq::GMaps}. The $G$ map sends the two elements belonging to the edges of two local vertex neighbourhoods that correspond to the same edge ($s(e)$ and $t(e)$), to a single element on that edge. Similarly the $G^{*}$ map splits a function $\alpha$ assigned to an edge $e$ to functions $\alpha^{(1)}$ and $\alpha^{(2)}$ assigned to edge ends $t(e)$ and $s(e)$ respectively.
	 We use these maps to move from the full lattice to the collection of local vertex neighbourhoods.
	}
	\label{fig::GMap}
\end{figure} 

The procedure of stitching together all the different local algebras is related to the existence of a map that connects the elements of the algebra associated with two edges to a single one. 
This map needs to be linear and has to preserve the local gauge structure ( $K^{V}$- module homomorphism). 
The most natural one, in this sense, is related to the algebra product. 
We will refer to this map as the $G$ map and the corresponding dual as $G^{*}$, their action is given as
\begin{equation}
	G:  (k\otimes k')_{s(e), t(e)}	\rightarrow (k'k)_{e} \qquad G^{*}:  (\alpha)_{e}	\rightarrow (\alpha^{(2)}  \otimes \alpha^{(1)})_{
	s(e), t(e)}\ ,
		\label{eq::GMaps}
\end{equation}
where, here and in the following, we will call $s(e)$ ($t(e)$) the starting edge (target edge) associated with the splitting of an edge $e$. 
We show the action of $G$ and $G^{*}$ pictorially in Figure \ref{fig::GMap}.
The $G^{*}$ map is used to construct the braided tensor product on the whole lattice, extending it by linearity and imposing it to be 
a module homomorphism under the action given by the braided tensor product  itself
\begin{equation}    G^{*}\left((\alpha)_{e_{1}}*(\beta)_{e_{2}}\right)=G^{*}\left((\alpha)_{e_{1}}\right)*G^{*}\left((\beta)_{e_{2}}\right)
\end{equation}
for any collection of functions $\alpha,\ \beta$ on any collection of edges $e_{1},\ e_{2}$.

This set up the stage for to prove the main results of this section, which are given in (\ref{eq::plaquetteGeneralFormula}), (\ref{eq::plaquetteCommutations}) and (\ref{eq::commVertexPlaquette}) in the main text. 
Note that (\ref{eq::plaquetteCommutations}) is the same as Lemma 5.10 \cite{Catherine1} and (\ref{eq::commVertexPlaquette}) is essentially a refinement of the second formula in Theorem 5.7. 

We can start now with the following  

\begin{proposition}
\label{proposition::1}
Consider $ f \in K^{*}$, 
and $R $ 
the R-matrix of 
$K$, then the following is true
\begin{equation}
\begin{split}
\left\langle\phi^{(2-\tau)(1+\tau)}\otimes f^{(1)},\left(S^{\tau}\otimes S\right)(R)\right\rangle\left\langle\phi^{(1+\tau)}\otimes f^{(2)},\left(S^{\tau}\otimes \text{id}\right)(R)\right\rangle\phi^{(2-\tau)(2-\tau)}=\epsilon^{*}(f)\, \phi
	\end{split}
	\label{proposition1::eq1}
\end{equation}
where $\epsilon^*$ is the counit of $K^{*}$, $S$ is the antipode of $K$, $\tau= 0, 1$ denotes an incoming, outgoing edge respectively and the bracketed superscripts on $\phi$ and $f$ denote Sweedler indices.
\end{proposition}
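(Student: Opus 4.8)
The plan is to treat the two cases $\tau=0$ and $\tau=1$ in parallel, as they are mirror images of one another. In each case I would first collapse all of the pairings on the left-hand side of (\ref{proposition1::eq1}) into a single expression of the form $\langle\psi'\otimes f,\,M_{\tau}\rangle\,\psi''$, where $\psi'\otimes\psi''=\Delta(\phi)$ is the coproduct of $\phi$ in $K^{*}$ and $M_{\tau}\in K\otimes K$ is assembled from two copies of the $R$-matrix, and then prove that $M_{\tau}=1_{K}\otimes1_{K}$. Once this is done the statement follows at once: $\langle\psi'\otimes f,\,1_{K}\otimes1_{K}\rangle\,\psi''=\langle\psi',1_{K}\rangle\langle f,1_{K}\rangle\psi''=\epsilon^{*}(\psi')\,\epsilon^{*}(f)\,\psi''=\epsilon^{*}(f)\,\phi$, using $\epsilon^{*}(\alpha)=\langle\alpha,1_{K}\rangle$ and the counit axiom for $K^{*}$.

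For the collapse I would fix $\tau$, set $S^{0}=\mathrm{id}$, and resolve the prefactors $(S^{\tau}\otimes S)(R)$ and $(S^{\tau}\otimes\mathrm{id})(R)$ using the standard identities $(S\otimes S)(R)=R$ and $(S\otimes\mathrm{id})(R)=R^{-1}$. Introducing two independent copies $R=\sum R'\otimes R''$ and $R=\sum\bar R'\otimes\bar R''$, one expands every bracket into elementary pairings, moves the Sweedler legs of $\phi$ past one another by coassociativity, and merges pairings using the compatibility axiom $\langle\alpha,k_{1}k_{2}\rangle=\langle\alpha^{(1)},k_{1}\rangle\langle\alpha^{(2)},k_{2}\rangle$. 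This fuses the two brackets that pair a leg of $\phi$ with a leg of $R$ into a single $\langle\psi',-\rangle$, leaving exactly one free leg $\psi''$, and fuses $f^{(1)},f^{(2)}$ into a single $\langle f,-\rangle$. A short bookkeeping should then produce $M_{0}=\sum\bar R'R'\otimes S(R'')\bar R''$ when $\tau=0$ and $M_{1}=\sum R'S(\bar R')\otimes R''\bar R''$ when $\tau=1$.

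To establish $M_{\tau}=1_{K}\otimes1_{K}$ I would invoke the hexagon relations for $R$. For $\tau=0$: starting from $(\mathrm{id}\otimes\Delta)(R)=R_{13}R_{12}$, apply the map $\mathrm{id}\otimes S\otimes\mathrm{id}$ to both sides and then multiply the last two tensor legs together; the left-hand side becomes $\sum R'\otimes S\!\big((R'')^{(1)}\big)(R'')^{(2)}=\sum R'\,\epsilon(R'')\otimes1_{K}=\big((\mathrm{id}\otimes\epsilon)(R)\big)\otimes1_{K}=1_{K}\otimes1_{K}$ by the antipode axiom, the counit axiom and $(\mathrm{id}\otimes\epsilon)(R)=1_{K}$, while the right-hand side is exactly $M_{0}$ after relabelling the two copies. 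For $\tau=1$ the same argument runs with $(\Delta\otimes\mathrm{id})(R)=R_{13}R_{23}$, multiplying instead the first two legs and using $(\epsilon\otimes\mathrm{id})(R)=1_{K}$, and yields $M_{1}=1_{K}\otimes1_{K}$. Together with the reduction above, this proves (\ref{proposition1::eq1}).

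The part I expect to be fiddly rather than conceptually hard is the bookkeeping in the collapsing step. Because $K$ is not cocommutative, the order in which the $R$-legs get multiplied inside $M_{\tau}$ is forced, and it is precisely this order ($\bar R'R'$ rather than $R'\bar R'$, and $S(R'')\bar R''$ rather than $\bar R''S(R'')$) that selects which hexagon --- $R_{13}R_{12}$ for an incoming edge, $R_{13}R_{23}$ for an outgoing one --- makes the final identity close. Keeping the Sweedler indices and the multiplication orders straight through this step is the only real obstacle; everything else is routine Sweedler calculus.
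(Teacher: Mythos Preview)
Your approach is correct and in essence the same as the paper's: both treat $\tau=0,1$ separately, use coassociativity and the duality pairing to collapse the two brackets into a single expression $\langle\phi^{(\cdot)}\otimes f,\,M_\tau\rangle\,\phi^{(\cdot)}$, and finish by showing $M_\tau=1_K\otimes 1_K$. The only difference is how the last step is argued. The paper recognises the collapsed expression directly as a pairing against $R^{-1}R$ and invokes $R^{-1}R=1\otimes1$; you instead rederive the identity $M_\tau=1\otimes1$ from the hexagon relations together with the antipode/counit axioms. Your route works (after a harmless relabelling of the two $R$-copies your hexagon output matches $M_\tau$ exactly), but it is slightly longer than necessary: since $(\mathrm{id}\otimes S)(R)=R^{-1}$ in the semisimple setting, your $M_0=\sum\bar R'R'\otimes S(R'')\bar R''$ is already (a relabelled) $R^{-1}R$, and similarly for $M_1$. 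One small cosmetic slip: for $\tau=1$ the surviving leg is $\phi^{(1)}$ and the paired leg is $\phi^{(2)}$, so $\psi'\otimes\psi''$ is the \emph{flip} of $\Delta(\phi)$ rather than $\Delta(\phi)$ itself; this does not affect the argument since the counit axiom applies on either side.
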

\begin{proof}

Since $\tau=0,\ 1$ we can just check the formula directly.
\paragraph{Case 1: $\tau=0$}
In this case, the left side of the formula becomes
\begin{equation}
    \left\langle\phi^{(2)(1)}\otimes f^{(1)},\left(id\otimes S\right)(R)\right\rangle\left\langle\phi^{(1)}\otimes f^{(2)},R\right\rangle\phi^{(2)(2)}\ .
\end{equation}
Since $(S\otimes S)(R)=R$
 and dualising the action of the antipode, we get that the above function is the same as,
\begin{equation}
   \left\langle\phi^{(2)}\otimes f^{(1)},R^{-1}\right\rangle\left\langle\phi^{(1)}\otimes f^{(2)}, R\right\rangle\phi^{(3)}\ .
\end{equation}
Using $\left(S\otimes id\right)(R)=R^{-1}$ and that $\left(S\otimes S\right)(R)=R$ we have
\begin{equation}
    \left\langle\phi^{(1)}\otimes\phi^{(2)}\otimes f^{(1)}\otimes f^{(2)}, R_1\otimes S(R_1)\otimes R_2\otimes R_2\right\rangle\phi^{(3)} = \left\langle\phi^{(1)}\otimes S(f), R^{-1}R\right\rangle\phi^{(2)}\ 
\end{equation}
since $R^{-1}R=1$, the equality with the right-hand side 
follows from the properties of the counit.
\paragraph{Case 2: $\tau=1$}
In this case the left side of (\ref{proposition1::eq1}) becomes
\begin{equation}
    \left\langle\phi^{(1)(2)}\otimes f^{(1)},\left(S\otimes S\right)(R)\right\rangle\left\langle\phi^{(2)}\otimes f^{(2)},\left(S\otimes id\right)(R)\right\rangle\phi^{(1)(1)}
\end{equation}
By collecting the coefficients and relabelling the Sweedler indices accordingly, this is equal to,
\begin{equation}
	\begin{split}
		  \left\langle\phi^{(2)}\otimes\phi^{(3)}\otimes f^{(1)}\otimes f^{(2)}, R_1\otimes R_1\otimes S(R_2)\otimes R_2\right\rangle\phi^{(1)} &= \left\langle\phi^{(2)}\otimes f, R^{-1}R\right\rangle\phi^{(1)}\ , \\
		  & = \epsilon^{*} (f) \phi
	\end{split}  
\end{equation}
\noindent where we have used $R^{-1}R=1$ and $\epsilon^{*}(f)=\langle f, 1\rangle$ and $\epsilon^{*}(\phi^{(1)})\phi^{(2)}=\epsilon^{*}(\phi^{(2)})\phi^{(1)}=\phi$, so this is equal to
the right-hand side of  Proposition
  (\ref{proposition1::eq1}).
\end{proof}
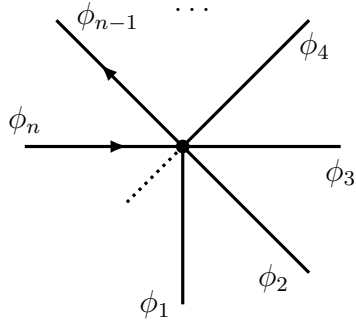
\begin{figure}[t]
    \centering
	\begin{tikzpicture}[>=latex,scale=0.7,>=latex,very thick,decoration={
	    markings,
	    mark= at position 0.65 with {\arrow{>}}}]
		\draw ({0},{0})--({0},{-3});
		\draw[fill] ({0},{0}) circle [radius=0.1];
		\draw[postaction={decorate}] ({-3},{0})--({0},{0});
		\draw ({0},{0})--({3},{0});
		\draw ({2.4)},-{2.4})--({0},{0});
		\draw ({2.4},{2.4})--({0},{0});
		\draw[postaction={decorate}] ({0},{0})--({-2.4},{2.4});
		\draw[very thick,dotted] ({-0.75*sqrt(2)},{-0.75*sqrt(2)}) -- ({0},{0});    
	    \node [left] at (0,-3) {$\phi_1$};
	    \node[left] at (2.25,-2.505) {$\phi_2$};
	    \node[below] at (3,0) {$\phi_3$};
	    \node[below] at (2.505,2.4) {$\phi_4$};
	    \node[right] at (-0.4,2.6) {$\ldots$};
	    \node[right] at (-2.25,2.475) {$\phi_{n-1}$};
	    \node[above] at (-3,0) {$\phi_{n}$};
	\end{tikzpicture}
    \caption{Here we display an example of an $n$-valent local vertex neighbourhood with the last two edges having consecutive directions. All the other edges can have an arbitrary orientation, which is why we do not provide any definite direction for them.
	 }
    \label{fig::propositionVertex}
\end{figure}
\noindent This proposition can be used to prove the next
\begin{proposition}
\label{proposition::2}
Consider a local vertex neighbourhood  in Figure \ref{fig::propositionVertex} with the braided tensor product defined in \ref{eq::BraidedTensProduct}.  
For $f\in K^{*}$  the following identity holds:
\begin{equation}
    \label{eq::proposition2}
    \begin{split}
    &(f^{(1)}\otimes f^{(2)})_{n-1, n}*(\phi_1 \otimes\ldots\otimes\phi_{n-1}\otimes\phi_{n})_{1,2,\ldots,n}=\\\
    &(\phi_1\otimes\ldots\otimes f^{(1)}\cdot\phi_{n-1}\otimes f^{(2)}\cdot\phi_{n})_{1,2,\ldots,n}
    \end{split}
\end{equation}
where $\cdot$ is the canonical product in $K^{*}$.
\end{proposition}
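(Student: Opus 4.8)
The plan is to prove (\ref{eq::proposition2}) by induction on the valence $n$, with Proposition \ref{proposition::1} doing the essential work and everything else reduced to the elementary rules (\ref{eq::BraidedTensProduct}). As a preliminary step I would rewrite the left-hand side in terms of single-edge factors: since $n-1<n$ and since tensoring against the unit $1_{K^{*}}=\epsilon$ on an edge is neutral, associativity of the braided product gives
\[
(f^{(1)}\otimes f^{(2)})_{n-1,n}*(\phi_1\otimes\cdots\otimes\phi_n)_{1,\ldots,n}
=(f^{(1)})_{n-1}*(f^{(2)})_{n}*(\phi_1)_1*(\phi_2)_2*\cdots*(\phi_n)_n .
\]

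For the inductive step I would bracket off the factor $(f^{(1)})_{n-1}*(f^{(2)})_{n}*(\phi_1)_1$ and evaluate it first. Both $(f^{(1)})_{n-1}$ and $(f^{(2)})_{n}$ sit at positions above that of $(\phi_1)_1$, so the $i>j$ line of (\ref{eq::BraidedTensProduct}) applies twice: $(f^{(2)})_{n}*(\phi_1)_1$ with $\tau_n=0$ and then $(f^{(1)})_{n-1}*(\,\cdot\,)_1$ with $\tau_{n-1}=1$. Collecting the two $R$-pairings and relabelling Sweedler indices by coassociativity of $\Delta^{*}$, the accumulated coefficient is \emph{exactly} the left-hand side of (\ref{proposition1::eq1}) with $\tau=\tau_1$, the element ``$f$'' there being the middle leg of a threefold coproduct of $f$; to bring the braiding $R$-matrices into the form appearing in Proposition \ref{proposition::1} one uses the identities $(S\otimes S)(R)=R$ and $(S\otimes\mathrm{id})(R)=(\mathrm{id}\otimes S)(R)=R^{-1}$ already recorded in its proof. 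Proposition \ref{proposition::1} then collapses this coefficient to $\epsilon^{*}$ of the middle leg, which by the counit axiom fuses the two remaining legs back into $f^{(1)}\otimes f^{(2)}$; the upshot is that $(f^{(1)})_{n-1}*(f^{(2)})_{n}*(\phi_1)_1=(\phi_1\otimes 1\otimes\cdots\otimes 1\otimes f^{(1)}\otimes f^{(2)})$, i.e. $\phi_1$ has not moved. Multiplying this on the right by $(\phi_2)_2*\cdots*(\phi_n)_n$, the decoupled factor $\phi_1$ on edge $1$ pulls out (nothing in the remaining factors lies at a position $\le 1$), and we are left with $(f^{(1)}\otimes f^{(2)})_{n-1,n}*(\phi_2\otimes\cdots\otimes\phi_n)_{2,\ldots,n}$, which is the statement for an $(n-1)$-valent neighbourhood; the induction hypothesis then finishes the step.

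The base case is $n=2$: running the same computation to the end, $(f^{(1)})_1$ now meets $\phi_1$ at its own position, so the $\tau_1=1$ same-edge line of (\ref{eq::BraidedTensProduct}) contributes one further $R$-factor, while $(f^{(2)})_2$ meets $\phi_2$ at its own position via the $\tau_2=0$ line with no factor. The $R$ and $R^{-1}$ pairings that have accumulated again assemble, after a coassociativity relabelling, into an instance of (\ref{proposition1::eq1}) with $\tau=1$, so they cancel and what remains is precisely $(f^{(1)}\phi_1\otimes f^{(2)}\phi_2)$.

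The routine but delicate part, and the main obstacle, is the Sweedler-index bookkeeping: one has to keep track of how the coproduct of $f$ is distributed (one pair of legs ``spent'' on braiding past each of $\phi_1,\dots,\phi_{n-2}$, the two surviving legs landing on the edges $n-1,n$) and organise the intermediate $R$-pairings — using coassociativity and the relations $(S\otimes S)(R)=R$, $(S\otimes\mathrm{id})(R)=R^{-1}$ — into exactly the shape of (\ref{proposition1::eq1}) at each step. Once this alignment is in place nothing further is needed; note in particular that the orientations of the intermediate edges $2,\dots,n-2$ enter only through the exponents $\tau_i$ and are handled uniformly, because Proposition \ref{proposition::1} is stated for both values of $\tau$.
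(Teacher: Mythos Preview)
Your proposal is correct and follows essentially the same strategy as the paper: induction on the valence $n$, with Proposition~\ref{proposition::1} doing the work of showing that $\phi_1$ passes through $(f^{(1)})_{n-1}*(f^{(2)})_n$ trivially, thereby reducing to the $(n-1)$-valent case. The only cosmetic differences are that the paper takes $n=3$ as the base case (so that there is already one edge of arbitrary orientation present) and writes out the full braided product explicitly before invoking Proposition~\ref{proposition::1}, whereas you start at $n=2$ and organise the computation by isolating the triple $(f^{(1)})_{n-1}*(f^{(2)})_n*(\phi_1)_1$ first; both amount to the same Sweedler bookkeeping.
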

\begin{proof}
We will prove the proposition by induction over $n$, the number of edges around the local vertex neighbourhood.  
\paragraph{Case $n=3$:}
We can consider edge $(1)$ to have an arbitrary direction, edge $(2)$ to be outgoing and edge $(3)$ to be incoming. 
\\
Using the braided tensor product defined in Equation \ref{eq::BraidedTensProduct} we have that
\begin{equation}
\begin{split}
&(f^{(1)}\otimes f^{(2)})_{2, 3}*(\phi_1\otimes\phi_2\otimes\phi_3)_{1,2,3}=\\
&
\langle\phi_{2}^{(2)}\otimes f^{(2)}, R\rangle\langle\phi_{1}^{(2-\tau_1)(1+\tau_1)}\otimes f^{(3)},(S^{\tau_1}\otimes S)(R)\rangle\langle\phi_{1}^{(1+\tau_1)}\otimes f^{(4)},(S^{\tau_1}\otimes id)(R) \, \rangle\\
&\langle\phi_{2}^{(3)}\otimes f^{(5)},(S\otimes id)(R)\rangle \left(\phi_{1}^{(2-\tau_1)(2-\tau_1)}\otimes f^{(1)}\phi_{2}^{(1)}\otimes f^{(6)}\phi_{3}\right)\ .
\end{split}
\end{equation}
\noindent Then using the result in Proposition \ref{proposition::1} on $\phi_{1}$, we get
\begin{equation}
\begin{split}
&(f^{(1)}\otimes f^{(2)})_{2, 3}\cdot(\phi_1\otimes\phi_2\otimes\phi_3)_{1,2,3} = \\
&\langle\phi_{2}^{(2)}\otimes f^{(2)}, R\rangle\langle\phi_{2}^{(3)}\otimes f^{(3)},(S\otimes id)(R)\rangle \left(\phi_{1}\otimes f^{(1)}\phi_{2}^{(1)}\otimes f^{(4)}\phi_{3}\right)
\end{split}
\end{equation}
now using that $(S\otimes id)(R)=R^{-1}$, we have
\begin{equation}
(f^{(1)}\otimes f^{(2)})_{2, 3}\cdot(\phi_1\otimes\phi_2\otimes\phi_3)_{1,2,3}=\left(\phi_{1}\otimes f^{(1)} \cdot \phi_{2}\otimes f^{(2)} \cdot \phi_{3}\right)\ ,
\end{equation}
which is what we needed to prove.

\paragraph{Inductive step:} Suppose equation (\ref{eq::proposition2}) holds for a local vertex neighbourhood with $n-1$ edges (whose last edge and second last edge orientations are still the same as the ones showed in Figure \ref{fig::propositionVertex}). With this assumption, we need now to prove the statement for a local vertex neighbourhood with $n$ edges. From direct computation we have
\begin{equation}
\begin{split}
&(f^{(1)}\otimes f^{(2)})_{n-1, n}*(\phi_1\otimes\phi_2\otimes\ldots\otimes\phi_{n-1}\otimes\phi_{n})_{1,2,\ldots, n}=\\
&\langle\phi_{1}^{(2-\tau_1)(1+\tau_1)}\otimes f^{(n)},(S^{\tau_1}\otimes S)(R)\rangle
\langle\phi_{2}^{(2-\tau_2)(1+\tau_2)}\otimes f^{(n-1)}, (S^{\tau_2}\otimes S)(R)\rangle\ldots\\
&\langle\phi_{n-2}^{(2-\tau_{n-2})(1+\tau_{n-2})}\otimes f^{(3)},(S^{\tau_{n-2}}\otimes S)(R)\rangle\langle\phi_{n-1}^{(2)}\otimes f^{(2)},R\rangle\\
&\langle\phi_{1}^{(1+\tau_1)}\otimes f^{(n+1)},(S^{\tau_1}\otimes id)(R)\rangle\langle\phi_{2}^{(1+\tau_2)}\otimes f^{(n+2)},(S^{\tau_2}\otimes id)(R)\rangle\ldots\\
&\langle\phi_{n-1}^{(3)}\otimes f^{(2n-1)},(S\otimes id)(R)\rangle
\\
&  \left(\phi_{1}^{(2-\tau_1)(2-\tau_1)}\otimes\ldots\otimes \phi_{n-2}^{(2-\tau_{n-2})(2-\tau_{n-2})}\otimes  f^{(1)}\phi_{n-1}^{(1)}\otimes f^{(2n)}\phi_{n}\right)_{1,2,\ldots, n}\ .
\end{split}
\end{equation}
Using  Proposition \ref{proposition::1} on $\phi_1$ we get
\begin{equation}
\begin{split}
&(f^{(1)}\otimes f^{(2)})_{n-1, n}\cdot(\phi_1\otimes\phi_2\otimes\ldots\otimes\phi_{n-1}\otimes\phi_{n})_{1,2,\ldots,n}=\\
&\langle\phi_{2}^{(2-\tau_2)(1+\tau_2)}\otimes f^{(n-1)}, (S^{\tau_2}\otimes S)(R)\rangle\ldots\\
&\langle\phi_{n-2}^{(2-\tau_{n-2})(1+\tau_{n-2})}\otimes f^{(3)},(S^{\tau_{n-2}}\otimes S)(R)\rangle\langle\phi_{n-1}^{(2)}\otimes f^{(2)},R\rangle\\
&\langle\phi_{2}^{(1+\tau_2)}\otimes f^{(n)},(S^{\tau_2}\otimes id)(R)\rangle\ldots\langle\phi_{n-1}^{(3)}\otimes f^{(2n-3)},(S\otimes id)(R)\rangle\\
&\left(\phi_{1}\otimes\phi_2^{(2-\tau_2)(2-\tau_2)}\otimes\ldots\otimes \phi_{n-2}^{(2-\tau_{n-2})(2-\tau_{n-2})}\otimes  f^{(1)}\phi_{n-1}^{(1)}\otimes f^{(2n-2)}\phi_{n}\right)_{1,2,\ldots,n}\ .
\end{split}
\end{equation}
 we can see that the equation above is actually equivalent to
\begin{equation}
    \begin{split}
       &(f^{(1)}\otimes f^{(2)})_{n-1, n}*(\phi_1\otimes\phi_2\otimes\ldots\otimes\phi_{n-1}\otimes\phi_{n})_{1,2,\ldots,n}=\\
       &(\phi_1)_1*(f^{(1)}\otimes f^{(2)})_{n-1, n}*(\phi_2\otimes\ldots\otimes\phi_{n-1}\otimes\phi_{n})_{2,3,\ldots,n}\ . 
    \end{split}
\end{equation}
We now recognise the same product on a local vertex neighbourhood with one edge less, and because of the inductive hypothesis and associativity we have
\begin{equation}
    \begin{split}
       &(f^{(1)}\otimes f^{(2)})_{n-1, n}*(\phi_1\otimes\phi_2\otimes\ldots\otimes\phi_{n-1}\otimes\phi_{n})_{2,3,\ldots,n}=\\
       &(\phi_1)_1*(\phi_2\otimes\ldots\otimes f^{(1)}\phi_{n-1}\otimes f^{(2)}\phi_{n})_{2,3,\ldots,n}\ ,
    \end{split}    
\end{equation}
which is the same as (\ref{eq::proposition2}).
\end{proof}
\noindent This proposition will be used repeatedly in the following. 
As already explained in Section \ref{sec::BraidedTensor}, the holonomy associated with a function $\alpha$ for an counterclockwise plaquette with edges $1,2, \ldots n$, counting from the cilia, acts on the algebra of functions on the ciliated graph as the product by the element
\begin{equation}
    (S(\alpha^{(1)})\otimes S(\alpha^{(2)})\otimes\cdots\otimes S(\alpha^{(n)}))_{1,2,\ldots,n}\ .
\end{equation}
Proposition \ref{proposition::2} then means that all the non-trivial aspects introduced by the $R$ matrix show up at the local vertex neighbourhood in the plaquette with the cilia pointing inwards.  At all the other vertexes the holonomy simply acts with the canonical product
 on $K^{*}$.
Because of this, when trying to find the general structure of the plaquette operator, it is sufficient to do it for the plaquette shown in Figure \ref{fig::simplest_plaquette}
\footnote{Note that this is not the simplest plaquette that can be used, but this case better shows the type of calculations involved in computing plaquette operators.}
\begin{figure}[t]
    \centering
	\begin{tikzpicture}[>=latex,scale=0.6,>=latex,very thick,decoration={
	    markings,
	    mark= at position 0.55 with {\arrow{>}}}]  
	    \draw[postaction={decorate}] (0, 0) to [out=45,in=-45] (0, 6);
	    \draw[postaction={decorate}] (0, 6) to [out=225,in=135] (0, 0); 	
		\draw[very thick,dotted] ({0},{-1}) -- ({0},{0});
	    \draw[fill] ({0},{0}) circle [radius=0.1];
	    \draw[fill] ({0},{6}) circle [radius=0.1];
		\draw[very thick,dotted] ({0},{6}) -- ({0},{5});
	    \draw ({0},{0})--({-2},{-2});
	    \draw ({0},{0})--({2},{-2});
	    \draw ({0},{6})--({-2},{8});
	    \draw ({0},{6})--({2},{8});    
	    \node[right] at (0,0) {w};
	    \node[right] at (0,6) {v};
	    \node[right] at (1.5,3) {$\phi_2$};
	    \node[left] at (-1.5,3) {$\phi_1$};	
		\node[right] at (1,7) {$\phi_3$};
		\node[right] at (-2.1,7) {$\phi_4$};
		\node[right] at (1,-0.5) {$\phi_5$};
		\node[right] at (-2.1,-0.5) {$\phi_6$};
		\node[right] at (-0.5,3) {$p$};
	\end{tikzpicture}
    \caption{Simple plaquette.}
    \label{fig::simplest_plaquette}
\end{figure}
We can now establish the following relation between the plaquette operator computed using the braided tensor product, which we denote by $\mathcal{B}_{p}^{f}$ and the plaquette operator from conventional Kitaev quantum double models, which we denote by $B_{p}^{f}$.
\begin{proposition}
	\label{eq::prop3}
Consider the ribbon graph shown in Figure \ref{fig::simplest_plaquette}. The plaquette operator using the braided tensor product is given by,
\begin{equation}
    \label{eq::plaquetteShape}
    \begin{split}
        & \mathcal{B}_{p}^{f}(\phi_{1},\ \phi_{2},\ \phi_{3},\ \phi_{4},\ \phi_{5},\ \phi_{6}) =\\
        &\langle\phi_{2}^{(1)}\otimes f^{(1)},R\rangle\langle\phi_{3}^{(1+\tau_3)}\otimes f^{(2)},(S^{\tau_3}\otimes id)(R)\rangle\langle\phi_{4}^{(1+\tau_4)}\otimes f^{(3)},(S^{\tau_4}\otimes id)(R)\rangle\\
        &\langle\phi_{1}^{(2)}\otimes f^{(4)},R^{-1}\rangle B_{p}^{f^{(5)}}\left(\phi_{1}^{(1)},\ \phi_{2}^{(2)},\ \phi_{3}^{(2-\tau_3)},\ \phi_{4}^{(2-\tau_4)},\ \phi_{5},\ \phi_{6}\right)\ 
    \end{split}
\end{equation}
where $\tau_{i}$ labels the directedness of the edge which has $\phi_{i}$ assigned to it. 
\end{proposition}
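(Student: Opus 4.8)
The plan is to compute $\mathcal{B}_p^f=\mathrm{Hol}^*_p(f)*(\phi_1\otimes\cdots\otimes\phi_6)$ explicitly, reducing the whole‑lattice braided product to the local rules (\ref{eq::BraidedTensProduct}) at each vertex and invoking Propositions \ref{proposition::1} and \ref{proposition::2}. Since the braided product is a module homomorphism through the splitting map $G^*$, the evaluation separates into a contribution at $v$ — the vertex whose cilium points into $p$ — and one at $w$. I would dispose of $w$ first: there the two arcs of the bigon are consecutive incident edges with opposite orientations, exactly the configuration of Figure \ref{fig::propositionVertex}, so Proposition \ref{proposition::2}, applied with the two dual‑holonomy components in the role of $f^{(1)},f^{(2)}$ at $w$, shows that at $w$ the braided multiplication collapses to the ordinary product of $K^*$; thus $w$ contributes precisely what the undeformed operator $B_p^f$ of (\ref{eq::originalPlaquette}) contributes and nothing more. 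This is the reason, noted in the text preceding the proposition, that it suffices to analyse this one simple plaquette.

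The substance is the base vertex $v$. Following the interpretation recorded after (\ref{eq::plaquetteGeneralFormula}), one can regard $\mathcal{B}_p^f$ as inserting $f$ on a virtual incoming edge at the cilium of $v$ and commuting it, via the local rules (\ref{eq::BraidedTensProduct}), rightwards through the edges around $v$ in the ciliated order: the two plaquette edges straddle the cilium, occupying the first and last slots, with the dangling edges $\phi_3,\phi_4$ between them. Each crossing of a coproduct component of $f$ over an edge produces one $R$‑bracket; after dualising the antipodes onto $f$ and the $\phi_i$ and using $(S\otimes S)(R)=R$, $(S\otimes \mathrm{id})(R)=(\mathrm{id}\otimes S)(R)=R^{-1}$ and $S^2=\mathrm{id}$, the crossing over $\phi_3$ gives $\langle\phi_3^{(1+\tau_3)}\otimes f^{(2)},(S^{\tau_3}\otimes \mathrm{id})(R)\rangle$, the crossing over $\phi_4$ gives $\langle\phi_4^{(1+\tau_4)}\otimes f^{(3)},(S^{\tau_4}\otimes \mathrm{id})(R)\rangle$, the crossing over the incoming plaquette edge $\phi_2$ gives $\langle\phi_2^{(1)}\otimes f^{(1)},R\rangle$, and the crossing over the outgoing plaquette edge $\phi_1$ gives $\langle\phi_1^{(2)}\otimes f^{(4)},R^{-1}\rangle$ (the two orientation signs being forced at the base vertex by the counterclockwise convention, which is why $\tau_1,\tau_2$ do not appear). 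At the two plaquette‑edge crossings a component of $f$ is also deposited onto the edge; by coassociativity of $\Delta$ the two deposits together form a single component $f^{(5)}$ that is split by the coproduct inside $B_p$, so that what survives after all the commuting is $B_p^{f^{(5)}}$ evaluated on the Sweedler‑split configuration $(\phi_1^{(1)},\phi_2^{(2)},\phi_3^{(2-\tau_3)},\phi_4^{(2-\tau_4)},\phi_5,\phi_6)$. Collecting the four scalar factors in front yields the asserted formula.

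The only ingredients beyond Propositions \ref{proposition::1}--\ref{proposition::2} are coassociativity, the counit axioms, $S^2=\mathrm{id}$ and the standard R‑matrix identities above; Proposition \ref{proposition::1} enters only inside the proof of Proposition \ref{proposition::2}, where it is what annihilates the spurious complementary pairs of $R$‑ and $R^{-1}$‑brackets. The step I expect to be the real work is the bookkeeping: fixing the ciliated order and the edge orientations at $v$ and $w$ from Figure \ref{fig::simplest_plaquette}, and then tracking through the iterated coproducts which component of $f$ (and which component of each $\phi_i$) ends up in a scalar bracket and which inside $B_p$, while keeping the orientation decorations $S^{\tau_i}$ straight for edges crossed against their direction — and doing this consistently for the two arcs $\phi_1,\phi_2$, which are shared by $v$ and $w$ and hence enter (\ref{eq::BraidedTensProduct}) with opposite $\tau$‑values at their two ends. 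Once this case is in hand, the identical computation with $m$ dangling edges and $n$ plaquette edges at the base vertex, Proposition \ref{proposition::2} clearing every other vertex, reproduces the general formula (\ref{eq::plaquetteGeneralFormula}).
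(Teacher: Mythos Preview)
Your proposal is correct and matches the paper's proof: split via $G^*$ into the two local vertex neighbourhoods, invoke Proposition~\ref{proposition::2} at $w$ so that the braided product there collapses to the canonical one, and carry out the explicit braided-tensor-product computation at the base vertex $v$ using the rules~(\ref{eq::BraidedTensProduct}), then reassemble through $G^*$ to recognise $B_p^{f^{(5)}}$. The only cosmetic difference is that the paper writes out the $v$-computation directly from~(\ref{eq::BraidedTensProduct}) rather than phrasing it as commuting a virtual cilium-edge insertion past the four incident edges, but the calculation and the ingredients are identical.
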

\begin{proof}
As mentioned the plaquette operator on plaquette $p$ for Hopf algebra gauge theory is given by
\begin{equation}
    \begin{split}
        &\mathcal{B}_{p}^{f}(\phi_{1},\ \phi_{2},\ \phi_{3},\ \phi_{4},\ \phi_{5},\ \phi_{6}) = \\
        &\big(S(f^{(1)})\otimes S(f^{(2)})\big)_{1,2}*(\phi_{1}\otimes\phi_{2}\otimes\phi_{2}\otimes\phi_{4}\otimes\phi_{5}\otimes\phi_{6})_{1,2,\ldots,6}
    \end{split}
\end{equation}
where the numbering of edges is the same as the subscript labelling on the functions.
In order to compute this product, we need to consider the splitting of the graph into local vertex neighbourhoods.  
Algebraically this translates into the use of $G^{*}$ map, given in Equation \ref{eq::GMaps}. 
This is represented pictorially in Figure \ref{fig::GMap}. 
We therefore need to consider two local vertex neighbourhoods associated with the vertices $v$ and $w$ as shown in the Figure \ref{fig::simplest_plaquette_local_vertex_neighborhood}. 
For the vertex $w$, we can use Proposition \ref{proposition::2} and the product reduces to the canonical product in the algebra of functions
\begin{equation}
    \begin{split}
    &(S(f^{(3)})\otimes S(f^{(2)}))_{w_2, w_3}*(\phi_5\otimes\phi_{2}^{(2)}\otimes\phi_{1}^{(1)}\otimes\phi_{6})_{w_1,w_2,w_3,w_4}=\\
    &(\phi_5\otimes S(f^{(3)})\phi_{2}^{(2)}\otimes S(f^{(2)})\phi_{1}^{(1)}\otimes\phi_{6})_{w_1,w_2,w_3,w_4}\ .
    \end{split}
\end{equation}
On the local vertex neighbourhood associated with $v$ we instead have
\begin{equation}
    \begin{split}
        &(S(f^{(4)})\otimes S(f^{(1)}))_{v_1, v_4}* (\phi_2^{(1)}\otimes\phi_{3}\otimes\phi_{4}\otimes\phi_{1}^{(2)})_{v_1,v_2,v_3,v_4}=\\
        &\langle\phi_{2}^{(1)(1)}\otimes S(f^{(1)})^{(2)},R^{-1}\rangle\langle\phi_{3}^{(1+\tau_3)}\otimes S(f^{(1)})^{(1)(2)},(S^{\tau_3}\otimes S)(R)\rangle\\
        &\langle\phi_{4}^{(1+\tau_4)}\otimes S(f^{(1)})^{(1)(1)(2)},(S^{\tau_4}\otimes S)(R)\rangle\langle\phi_{1}^{(2)(2)}\otimes S(f^{(1)})^{(1)(1)(1)(2)},R\rangle\\
        &(S(f^{(4)})\phi_{2}^{(1)(2)}\otimes\phi_{3}^{(2-\tau_3)}\otimes\phi_{4}^{(2-\tau_4)}\otimes S(f^{(1)})^{(1)(1)(1)(1)}\phi_{1}^{(2)(1)})_{v_1,v_2,v_3,v_4}\ ,
    \end{split}    
\end{equation}
\begin{figure}[H]
    \centering
    \begin{minipage}{0.6\textwidth}
		\begin{tikzpicture}[>=latex,scale=0.6,>=latex,very thick,decoration={
		    markings,
		    mark= at position 0.55 with {\arrow{>}}}]   
		    \draw[postaction={decorate}] (0, 0) to [out=45,in=270] (1.3,3);
		    \draw[postaction={decorate}] (1.3, 3) to [out=90,in=-45] (0,6);
		    \draw[postaction={decorate}] (0, 6) to [out=225,in=90] (-1.3, 3);
		    \draw[postaction={decorate}] (-1.3, 3) to [out=270,in=135] (0, 0);
			\draw[very thick,dotted] ({0},{-1}) -- ({0},{0});
		    \draw[fill] ({0},{0}) circle [radius=0.1];
		    \draw[fill] ({0},{6}) circle [radius=0.1];
			\draw[very thick,dotted] ({0},{6}) -- ({0},{5});
			 \node[right] at (0,0) {w};
		    \node[right] at (0,6) {v};
		    \node[right] at (-10,6) {$(S(f^{(4)})\otimes S(f^{(1)}))_{v_{1}, v_{4}}$};
		    \draw[fill] ({-2.5},{6}) circle [radius=0.03];
		    \node[right] at (1.2,2) {$\phi_2^{(2)}$};
		    \node[right] at (1.1,5) {$\phi_2^{(1)}$};
		    \node[left] at (-1.,5) {$\phi_1^{(2)}$};
		    \node[left] at (-1.2,2) {$\phi_1^{(1)}$};
		    \draw[fill=white] ({1.3},{3}) circle [radius=0.1];
		    \draw[fill=white] ({-1.3},{3}) circle [radius=0.1];    
		    \draw ({0},{0})--({-2},{-2});
		    \draw ({0},{0})--({2},{-2});
		    \draw ({0},{6})--({-2},{8});
		    \draw ({0},{6})--({2},{8});
			\node[right] at (1,7) {$\phi_3$};
			\node[right] at (-2.1,7) {$\phi_4$};
			\node[right] at (1,-0.5) {$\phi_5$};
			\node[right] at (-2.1,-0.5) {$\phi_6$};
			\node[right] at (-0.5,3) {$p$};
		    \node[right] at (-10,0) {$(S(f^{(3)})\otimes S(f^{(2)}))_{w_{2}, w_{3}}$};
		    \draw[fill] ({-2.5},{0}) circle [radius=0.03];
		\end{tikzpicture}
		
    \end{minipage}
    \begin{minipage}{0.3\textwidth}
    \end{minipage}
    \caption{Graphical representation of the product between the two local vertex neighbourhood in $v$ and $w$}
    \label{fig::simplest_plaquette_local_vertex_neighborhood}
\end{figure}
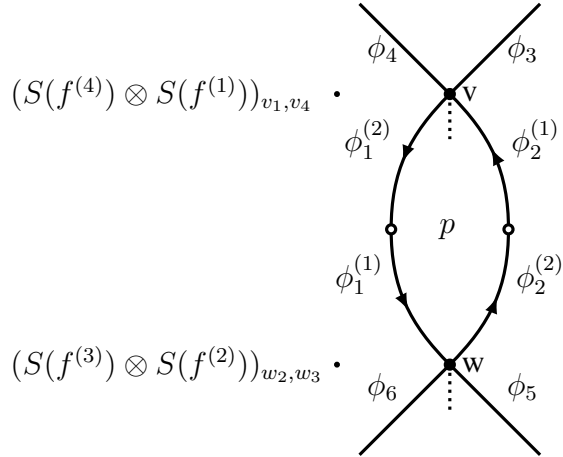
\noindent where we have used (\ref{eq::BraidedTensProduct}).
\\Since elements living at different local vertex neighbourhoods commute, we can see that overall the two products can be written as
\begin{equation*}
    \fontsize{11.3}{13.56}
    \begin{split}
        &G^{*}\left(S(f^{(4)})\otimes S(f^{(3)})\otimes S(f^{(2)})\otimes S(f^{(1)})\right)*\\
        &G^{*}\left( \phi_{1}^{(2)}\otimes\phi_{1}^{(1)}\otimes\phi_2^{(2)}\otimes\phi_2^{(1)}\otimes\phi_{3}\otimes\phi_{4}\otimes\phi_{5}\otimes\phi_{6}\right)=\\
        &\langle\phi_{2}^{(1)}\otimes f^{(1)},R\rangle\langle\phi_{3}^{(1+\tau_3)}\otimes f^{(2)},(S^{\tau_3}\otimes id)(R)\rangle\\
        &\langle\phi_{4}^{(1+\tau_4)}\otimes f^{(3)},(S^{\tau_4}\otimes id)(R)\rangle\langle\phi_{1}^{(3)}\otimes f^{(4)},R^{-1}\rangle\\
        &(S(f^{(5)})\phi_{1}^{(2)}\otimes S(f^{(6)})\phi_{1}^{(1)}\otimes S(f^{(7)})\phi_{2}^{(3)}\otimes(S(f^{(8)})\phi_{2}^{(2)}\otimes\phi_{3}^{(2-\tau_3)}\otimes\phi_{4}^{(2-\tau_4)}\otimes \phi_{5}\otimes\phi_{6})_{\mathbf{e}}\ .
    \end{split}    
\end{equation*}
where $\mathbf{e}= \{v_4,w_3,w_2,v_1,v_2,v_3,w_1,w_6\}$ 
and for compactness, on the first two lines, we have omitted the subscript labelling edges, since from the above discussion it should be clear which element is associated with which edge. In the last line, we can now recognise the $G^{*}$ map defined in
  Equation (\ref{eq::GMaps}) and we can therefore write
\begin{equation}
    \fontsize{11.3}{13.56}
    \begin{split}
        &G^{*}\left(S(f^{(4)})\otimes S(f^{(3)})\otimes S(f^{(2)})\otimes S(f^{(1)})\right)*\\
        &G^{*}\left( (\phi_{1}^{(2)}\otimes\phi_{1}^{(1)}\otimes\phi_2^{(2)}\otimes\phi_2^{(1)}\otimes\phi_{3}\otimes\phi_{4}\otimes\phi_{5}\otimes\phi_{6})\right)=\\
        &\langle\phi_{2}^{(1)}\otimes f^{(1)},R\rangle\langle\phi_{3}^{(1+\tau_3)}\otimes f^{(2)},(S^{\tau_3}\otimes id)(R)\rangle\\
        &\langle\phi_{4}^{(1+\tau_4)}\otimes f^{(3)},(S^{\tau_4}\otimes id)(R)\rangle\langle\phi_{1}^{(2)}\otimes f^{(4)},R^{-1}\rangle\\
        &G^{*}\left(B_{p}^{f^{(3)}}\left(\phi_{1}^{(1)},\ \phi_{2}^{(2)},\ \phi_{3}^{(2-\tau_3)},\ \phi_{4}^{(2-\tau_4)},\ \phi_{5},\ \phi_{6}\right)\right)\ ,
    \end{split}    
\end{equation}
So by linearity and from the fact that $G^{*}$ is an homomorphism, we can recognise Equation (\ref{eq::plaquetteShape}) of Proposition \ref{eq::prop3}.
\end{proof}
\noindent It is easy to generalise the above result for a plaquette with an arbitrary number of edges. In particular for the plaquette without any external edges, as the one shown in Figure \ref{fig::EvenSimplerPlaquette}, we get
\begin{equation}
    \label{eq::plaquetteShapeTwo}
    \begin{split}
        & \mathcal{B}_{p}^{f}(\phi_{1},\ \phi_{2}) =\langle\phi_{2}^{(1)}\otimes f^{(1)},R\rangle\langle\phi_{1}^{(2)}\otimes f^{(2)},R^{-1}\rangle B_{p}^{f^{(3)}}(\phi_{1}^{(1)},\ \phi_{2}^{(2)})\ .
    \end{split}
\end{equation}
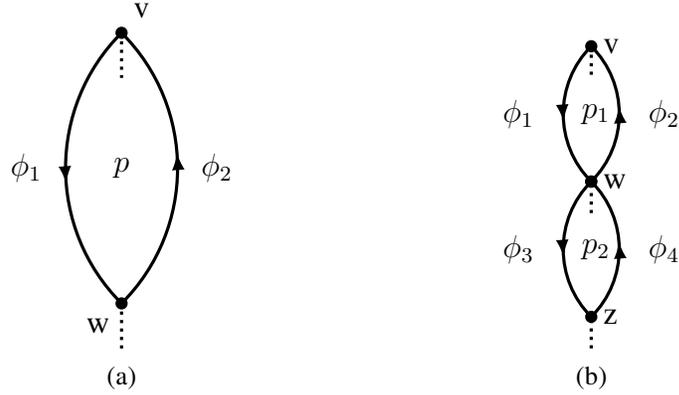
\begin{figure}[t]
\centering
   \subfloat[]
 {  \begin{tikzpicture}[>=latex,scale=0.6,>=latex,very thick,decoration={
       markings,
       mark= at position 0.55 with {\arrow{>}}}]    
       \draw[postaction={decorate}] (0, 0) to [out=45,in=-45] (0, 6);
       \draw[postaction={decorate}] (0, 6) to [out=225,in=135] (0, 0);	
   	\draw[very thick,dotted] ({0},{-1}) -- ({0},{0});
       \draw[fill] ({0},{0}) circle [radius=0.1];
       \draw[fill] ({0},{6}) circle [radius=0.1];
   	\draw[very thick,dotted] ({0},{6}) -- ({0},{5});  
       \node[left] at (0,-0.5) {w};
       \node[right] at (0,6.5) {v};
       \node[right] at (1.5,3) {$\phi_2$};
       \node[left] at (-1.5,3) {$\phi_1$};
       \node at (0, 3) {$p$};	
   \end{tikzpicture}  
   \label{fig::EvenSimplerPlaquette} }
    \hspace{3cm}
    \subfloat[] {
	\begin{tikzpicture}[>=latex,scale=0.6,>=latex,very thick,decoration={
	    markings,
	    mark= at position 0.55 with {\arrow{>}}}]   
	    \draw[postaction={decorate}] (0, 0) to [out=45,in=-45] (0, 3);
	    \draw[postaction={decorate}] (0, 3) to [out=225,in=135] (0, 0);
		\draw[very thick,dotted] ({0},{-0.7}) -- ({0},{0});
	    \draw[fill] ({0},{0}) circle [radius=0.1];
	    \draw[fill] ({0},{3}) circle [radius=0.1];
		\draw[very thick,dotted] ({0},{3}) -- ({0},{2.3});   
	    \node[right] at (0,0) {w};
	    \node[right] at (0,3) {v};
	    \node[right] at (0,-3) {z};
	    \node[right] at (1,1.5) {$\phi_2$};
	    \node[left] at (-1,1.5) {$\phi_1$};
	    \node at (0.1,1.5) {$p_1$};
	    \draw[postaction={decorate}] (0, -3) to [out=45,in=-45] (0, 0);
	    \draw[postaction={decorate}] (0, 0) to [out=225,in=135] (0, -3);	
		\draw[very thick,dotted] ({0},{-3.7}) -- ({0},{-3});
	    \draw[fill] ({0},{-3}) circle [radius=0.1];
	    \draw[fill] ({0},{0}) circle [radius=0.1];
	    \node[right] at (1,-1.5) {$\phi_4$};
	    \node[left] at (-1,-1.5) {$\phi_3$};
		\node at (0.1,-1.5) {$p_2$};	
	\end{tikzpicture}
	\label{fig::CommutationPlaquettes}}
    \caption{In figure (a) we show a plaquette without any external edges attached. In figure (b) the relevant graph to compute the commutation between plaquette operators .}
\end{figure} 

We can now start considering the commutation relations of two plaquette operators. We can distinguish between two cases, one in which the two plaquettes have coinciding starting vertices and one in which the starting vertices differ. We will consider the latter case first. 
\\From the preceding results it is clear that the relevant graph to consider when enquiring about the commutation relation between plaquettes at different vertexes is the one given in Figure \ref{fig::CommutationPlaquettes}.
Similarly to Kitaev models, we can see that in this case plaquette operators commute. 
\begin{proposition}
Consider the graph shown in Figure \ref{fig::CommutationPlaquettes} and consider the two plaquette operators at vertexes $v$ and $w$, which act with functions $f,g \in K^{*}$, respectively. Then we have
\begin{equation}
    \label{eq::commutationPlaquette}
     \mathcal{B}_{p_{1}}^{f}  \mathcal{B}_{p_{2}}^{g}=  \mathcal{B}_{p_{2}}^{g} \mathcal{B}_{p_{1}}^{f}
\end{equation}
\end{proposition}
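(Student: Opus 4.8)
The plan is to reduce the claim to a statement about the braided algebra of functions alone, using the identity $\mathcal{B}^{f}_{p}(\Phi)=\text{Hol}^{*}(p)*\Phi$ together with associativity of the braided product $*$. For any configuration $\Phi=\phi_{1}\otimes\phi_{2}\otimes\phi_{3}\otimes\phi_{4}$ on the graph of Figure \ref{fig::CommutationPlaquettes},
\[
\mathcal{B}_{p_{1}}^{f}\mathcal{B}_{p_{2}}^{g}(\Phi)=\text{Hol}^{*}_{p_{1}}(f)*\big(\text{Hol}^{*}_{p_{2}}(g)*\Phi\big)=\big(\text{Hol}^{*}_{p_{1}}(f)*\text{Hol}^{*}_{p_{2}}(g)\big)*\Phi,
\]
and similarly with the two operators interchanged, so it suffices to prove $\text{Hol}^{*}_{p_{1}}(f)*\text{Hol}^{*}_{p_{2}}(g)=\text{Hol}^{*}_{p_{2}}(g)*\text{Hol}^{*}_{p_{1}}(f)$ as elements of $K^{*E}$.

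Next I would localise the computation. The element $\text{Hol}^{*}_{p_{1}}(f)$ is supported on the edges $\phi_{1},\phi_{2}$ of $p_{1}$ and $\text{Hol}^{*}_{p_{2}}(g)$ on the edges $\phi_{3},\phi_{4}$ of $p_{2}$; these edge sets are disjoint and the two plaquettes meet only at the vertex $w$. Equivalently, using Proposition \ref{eq::prop3} in the form (\ref{eq::plaquetteShapeTwo}) for $p_{1}$, and its extension with external edges at the base vertex (of the type appearing in (\ref{eq::plaquetteGeneralFormula})) for $p_{2}$, one writes $\mathcal{B}_{p_{1}}^{f}$ as the conventional operator $B_{p_{1}}$ dressed with $R$-corrections at $v$ — both acting only on $\phi_{1},\phi_{2}$, since $v$ has no other incident edges — and $\mathcal{B}_{p_{2}}^{g}$ as $B_{p_{2}}$ (acting on $\phi_{3},\phi_{4}$) dressed with $R$-corrections at its base vertex $w$ that in addition act on the external edges $\phi_{1},\phi_{2}$. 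Every piece that lives on a disjoint set of edges from a piece of the other operator commutes outright; what remains is to show that the $R$-corrections $\mathcal{B}_{p_{2}}^{g}$ imposes on $\phi_{1},\phi_{2}$ commute with the full action of $\mathcal{B}_{p_{1}}^{f}$ on those same two edges.

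Then I would carry out that residual check by expanding $\mathcal{B}_{p_{1}}^{f}\mathcal{B}_{p_{2}}^{g}$ and $\mathcal{B}_{p_{2}}^{g}\mathcal{B}_{p_{1}}^{f}$ on $\phi_{1}\otimes\phi_{2}\otimes\phi_{3}\otimes\phi_{4}$ with (\ref{eq::plaquetteShapeTwo}) and the pairing rule $\langle\alpha_{1}\alpha_{2},k\rangle=\langle\alpha_{1}\otimes\alpha_{2},k^{(1)}\otimes k^{(2)}\rangle$. In the order $\mathcal{B}_{p_{2}}^{g}\mathcal{B}_{p_{1}}^{f}$ the operator $\mathcal{B}_{p_{1}}^{f}$ has already replaced each $\phi_{i}$ ($i=1,2$) by $S(f^{(\cdot)})\phi_{i}^{(\cdot)}$, so when $\mathcal{B}_{p_{2}}^{g}$ takes a coproduct of that edge its legs split by $\Delta(ab)=\Delta(a)\Delta(b)$ and one picks up extra contractions between Sweedler legs of $S(f)$ and of $g$ through $R$; in the opposite order no such cross-contractions appear. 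The point is that these extra $f$--$g$ factors telescope away: because $\phi_{1}$ enters $w$ while $\phi_{2}$ leaves it (so $\tau=0$ on one and $\tau=1$ on the other, placing $S$ asymmetrically in $(S^{\tau}\otimes\text{id})(R)$) and because successive coproduct legs of $f$ and of $g$ are contracted against $R$, the axioms $(\Delta\otimes\text{id})(R)=R_{13}R_{23}$ and $(\text{id}\otimes\Delta)(R)=R_{13}R_{12}$ collapse the cross-contractions to a single $R^{-1}R=1$ followed by a counit contraction — precisely the cancellation already performed in Proposition \ref{proposition::1}. For $\mathbb{C}\mathbb{Z}_{N}$ this is transparent: $\mathcal{B}_{p_{2}}^{\chi_{1}}$ dresses $\phi_{1},\phi_{2}$ with $\sigma^{\widehat{k}}$ and $\sigma^{-\widehat{k}}$ (opposite powers, since one edge is incoming and one outgoing at $w$), and commuting these past the $\tau^{N-1}$'s supplied by $B_{p_{1}}$ produces phases $\omega^{\widehat{k}}$ and $\omega^{-\widehat{k}}$ whose product is $1$.

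I expect the main obstacle to be exactly this last bookkeeping step: keeping track of which Sweedler leg of $f$ and of $g$ is contracted against which slot of $R$, with the power of $S$ dictated by the edge orientations at $w$, and checking that the cross $f$--$g$ contractions present in one ordering but absent in the other genuinely cancel. This is purely computational and follows the template of Proposition \ref{proposition::1}; once it is established, associativity of $*$ completes the proof. (The companion case, where $p_{1}$ and $p_{2}$ share their base vertex, is the one governed by (\ref{eq::plaquetteCommutations}) and is treated separately.)
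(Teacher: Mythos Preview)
Your proposal is correct and lands on essentially the same computation the paper carries out: expand both orderings using (\ref{eq::plaquetteShapeTwo}) for $p_{1}$ and (\ref{eq::plaquetteGeneralFormula})/(\ref{eq::plaquetteShape}) for $p_{2}$, split the mixed pairings $\langle S(f^{(\cdot)})\phi_{i}^{(\cdot)}\otimes g^{(\cdot)},R^{\pm 1}\rangle$ via $(\Delta\otimes\text{id})(R)=R_{13}R_{23}$, and watch the $f$--$g$ cross-factors collapse to a counit through $R^{-1}R=1$. Your one genuine addition is the opening reduction by associativity of $*$ to the statement $\text{Hol}^{*}_{p_{1}}(f)*\text{Hol}^{*}_{p_{2}}(g)=\text{Hol}^{*}_{p_{2}}(g)*\text{Hol}^{*}_{p_{1}}(f)$: the paper never makes this move and instead carries the state $\Phi$ through the entire calculation. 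Had you actually exploited that reduction (computing the braided product of the two holonomy elements directly, with no $\phi_{i}$'s present) the bookkeeping would be lighter; as written you revert to the operator picture, so the work done is the same as the paper's. Your $\mathbb{C}\mathbb{Z}_{N}$ sanity check is a nice independent confirmation that the paper does not include.
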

\begin{proof}
We start by considering the left hand of the equation (\ref{eq::commutationPlaquette}) when acting on the general elements $\phi_1, \phi_2, \phi_3,\ \phi_4 \in K^{*}$, as given in Figure \ref{fig::CommutationPlaquettes}. By using (\ref{eq::plaquetteShape}) and (\ref{eq::plaquetteShapeTwo}) we have
\begin{equation}
    \label{eq::proofPropositionCommutation}
    \begin{split}
        &\mathcal{B}_{p_{1}}^{f}  \mathcal{B}_{p_{2}}^{g}(\phi_{1}, \phi_{2}, \phi_{3}, \phi_{4})=\\
        &\langle\phi^{(1)}_{4}\otimes g^{(1)},R\rangle\langle\phi^{(3)}_{2}\otimes g^{(2)},R^{-1}\rangle\langle\phi^{(1)}_{1}\otimes g^{(3)},R\rangle\langle\phi^{(2)}_{3}\otimes g^{(4)},R^{-1}\rangle\\
        &\langle\phi^{(1)}_{2}\otimes f^{(1)},R\rangle\langle\phi^{(3)}_{1}\otimes f^{(2)},R^{-1}\rangle\\
        &(S(f^{(3)})\phi_{1}^{(2)}\otimes S(f^{(4)})\phi_2^{(2)}\otimes S(g^{(5)})\phi_{3}^{(1)}\otimes S(g^{(6)})\phi_4^{(2)})\ .
    \end{split}
\end{equation}
We will now show that the right hand side of (\ref{eq::commutationPlaquette}), when acting on $\phi_1, \phi_2, \phi_3,\ \phi_4 \in K^{*}$, turns out to be the same as (\ref{eq::proofPropositionCommutation}). We have in fact
\begin{equation}
    \begin{split}
        &\mathcal{B}_{p_{2}}^{g} \mathcal{B}_{p_{1}}^{f}(\phi_{1}, \phi_{2}, \phi_{3}, \phi_{4})=\\
        &\langle\phi_{2}^{(1)}\otimes f^{(1)},R\rangle\langle\phi_{1}^{(3)}\otimes f^{(2)},R^{-1}\rangle\\
        &\langle\phi_{4}^{(1)}\otimes g^{(1)},R\rangle\langle S(f^{(4)})\phi_{1}^{(1)}\otimes g^{(3)},R\rangle\langle S(f^{(5)})\, \phi_{2}^{(3)}\otimes g^{(2)},R^{-1}\rangle\\
        &\langle\phi_{3}^{(2)}\otimes g^{(4)},R^{-1}\rangle(S(f^{(3)})\, \phi_{1}^{(2)}\otimes S(f^{(6)})\phi_2^{(2)}\otimes S(g^{(5)})\, \phi_{3}^{(1)}\otimes S(g^{(6)})\, \phi_4^{(2)})\ .
    \end{split}
\end{equation}
Using that
\begin{equation}
    (\Delta\otimes id)(R) = R_{13}R_{23}, \qquad (\Delta\otimes id)(R^{-1}) = (id\otimes id\otimes S)(R_{13}R_{23}),
\end{equation}
we can dualize the products in $K^{*}$ into coproducts of $K$ and have as a result
\begin{equation}
    \begin{split}
        &\mathcal{B}_{p_{2}}^{g} \mathcal{B}_{p_{1}}^{f}=\\
        &\langle\phi_{2}^{(1)}\otimes f^{(1)},R\rangle\langle\phi_{1}^{(3)}\otimes f^{(2)},R^{-1}\rangle\langle\phi_{4}^{(1)}\otimes g^{(1)},R\rangle\\
        &\langle S(f^{(4)})\otimes\phi_{1}^{(1)}\otimes g^{(3)},R_{13}R_{23}\rangle\langle S(f^{(5)})\otimes\phi_{2}^{(3)}\otimes g^{(2)},(id\otimes id\otimes S)(R_{13}R_{23})\rangle\\
        &\langle\phi_{3}^{(2)}\otimes g^{(4)},R^{-1}\rangle(S(f^{(3)})\phi_{1}^{(2)}\otimes S(f^{(6)})\phi_2^{(2)}\otimes S(g^{(5)}) \, \phi_{3}^{(1)}\otimes S(g^{(6)}) \, \phi_4^{(2)})\ .
    \end{split}
\end{equation}
The $f$'s and $g$'s can now be regrouped, note in fact that we can write
\begin{equation}
    \begin{split}
        &\langle f^{(4)}\otimes f^{(5)}\otimes g^{(2)}\otimes g^{(3)}, S(R_1')\otimes S(R_3')\otimes S(R_4'')S(R_3'')\otimes R_1''R_2''\rangle=\\
        &\langle f^{(4)}\otimes g^{(2)}, S(R_3'R_1')\otimes S(R_4'')S(R_3'')R_1''R_2''\rangle=\\
        &\langle f^{(4)}\otimes g^{(2)}, 1\otimes S(R_4'')R_2''\rangle=\epsilon^{*}(f^{(4)})\langle g^{(2)}\otimes g^{(3)}, S(R_4'')\otimes R_2''\rangle\ , 
    \end{split}
\end{equation}
where the subscripts in the $R$ matrices are used to avoid confusion between the elements of different $R$ matrices and in the last step we have used that $(id\otimes S)(R)=R^{-1}$.
\\Thus, using this last result, by appropriately regrouping terms, we finally have
\begin{equation}
    \begin{split}
        &\mathcal{B}_{p_{2}}^{g} \mathcal{B}_{p_{1}}^{f}=\\
        &\langle\phi_{2}^{(1)}\otimes f^{(1)},R\rangle\langle\phi_{1}^{(3)}\otimes f^{(2)},R^{-1}\rangle\langle\phi_{4}^{(1)}\otimes g^{(1)},R\rangle\langle \phi_{1}^{(1)}\otimes g^{(3)},R\rangle\langle \phi_{2}^{(3)}\otimes g^{(2)},R^{-1}\rangle\\
        &\langle\phi_{3}^{(2)}\otimes g^{(4)},R^{-1}\rangle(S(f^{(3)})\phi_{1}^{(2)}\otimes S(f^{(6)})\phi_2^{(2)}\otimes S(g^{(5)})\phi_{3}^{(1)}\otimes S(g^{(6)})\phi_4^{(2)})\ ,
    \end{split}
\end{equation}
which is the same as (\ref{eq::proofPropositionCommutation}) and therefore proves the claim on Figure \ref{fig::CommutationPlaquettes}.
\end{proof}
\noindent As already mentioned the previous result holds for more general plaquettes and the proof is similar, even if more notationally involved.

Before continuing with the treatment of commutation relations between plaquettes, we need the following technical proposition:
\begin{proposition}
Consider $f,\ g,\ \phi\in K^*$ and the $R$ matrix of $K$. Then, given $\tau=0,\ 1$, the following equation holds
\begin{equation}
\label{eq::prooLemmaProdBf}
    \begin{split}
  &  \langle\phi^{(1+\tau)}\otimes g, (S^{\tau}\otimes id)(R)\rangle\langle\phi^{(2-\tau)(1+\tau)}\otimes f, (S^{\tau}\otimes id)(R)\rangle\phi^{(2-\tau)(2-\tau)}=\\ 
	&\langle\phi^{(1+\tau)}\otimes fg, (S^{\tau}\otimes id)(R)\rangle\phi^{(2-\tau)}
	\end{split}
	\end{equation}
\end{proposition}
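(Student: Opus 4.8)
The plan is to prove the identity by a direct case analysis on $\tau \in \{0,1\}$, exactly in the style of Proposition \ref{proposition::1}. The only ingredients needed are coassociativity of $\Delta$, the compatibility of the pairing with products and coproducts (both on $K$ and on $K^{*}$), and the quasitriangularity axiom $(id\otimes\Delta)(R)=R_{13}R_{12}$, together with its immediate consequence $(id\otimes\Delta)(R^{-1})=R_{12}^{-1}R_{13}^{-1}$ coming from the fact that $id\otimes\Delta$ is an algebra homomorphism.

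First, for $\tau=0$ the left-hand side reads $\langle\phi^{(1)}\otimes g,R\rangle\,\langle\phi^{(2)(1)}\otimes f,R\rangle\,\phi^{(2)(2)}$, which by coassociativity equals $\langle\phi^{(1)}\otimes g,R\rangle\,\langle\phi^{(2)}\otimes f,R\rangle\,\phi^{(3)}$. On the right-hand side I would dualize the product $fg$ into a coproduct, $\langle fg,R''\rangle=\langle f\otimes g,\Delta(R'')\rangle$, so that $\langle\phi^{(1)}\otimes fg,R\rangle\,\phi^{(2)}=\langle\phi^{(1)}\otimes f\otimes g,(id\otimes\Delta)(R)\rangle\,\phi^{(2)}$. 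Inserting $(id\otimes\Delta)(R)=R_{13}R_{12}$, writing $R_{13}R_{12}=R'_{a}R'_{b}\otimes R''_{b}\otimes R''_{a}$, and then dualizing the product $R'_{a}R'_{b}$ in the first leg back into a coproduct of $\phi^{(1)}$ (relabelling $\phi^{(1)(1)}\otimes\phi^{(1)(2)}\otimes\phi^{(2)}$ as $\phi^{(1)}\otimes\phi^{(2)}\otimes\phi^{(3)}$ by coassociativity) produces $\langle\phi^{(1)}\otimes g,R\rangle\,\langle\phi^{(2)}\otimes f,R\rangle\,\phi^{(3)}$, which is the left-hand side.

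Second, for $\tau=1$ I would first replace every occurrence of $(S\otimes id)(R)$ by $R^{-1}$, and use coassociativity to rewrite the claim as $\langle\phi^{(3)}\otimes g,R^{-1}\rangle\,\langle\phi^{(2)}\otimes f,R^{-1}\rangle\,\phi^{(1)}=\langle\phi^{(2)}\otimes fg,R^{-1}\rangle\,\phi^{(1)}$. The argument of the previous paragraph then applies verbatim with $R$ replaced by $R^{-1}$: dualize $fg$ into $\Delta$ of the second leg of $R^{-1}$, apply $(id\otimes\Delta)(R^{-1})=R_{12}^{-1}R_{13}^{-1}$, and dualize the resulting product in the first leg back into a coproduct of $\phi^{(2)}$, recovering the left-hand side.

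The computations are purely formal and no single step is a genuine obstacle; the only thing requiring care is the bookkeeping of Sweedler indices and of the two pairing conventions — in particular recognizing that $\phi^{(1+\tau)}$, $\phi^{(2-\tau)(1+\tau)}$ and $\phi^{(2-\tau)(2-\tau)}$ assemble, via coassociativity, into a triple coproduct of $\phi$ whose internal ordering is reversed between $\tau=0$ and $\tau=1$, and keeping straight that a product in $K^{*}$ paired against $K$ is resolved through $\Delta$ on $K$, whereas a product in $K$ paired against $K^{*}$ is resolved through the coproduct on $K^{*}$. Alternatively, the identity could be obtained as a special case of the hexagon-type relations satisfied by $R$.
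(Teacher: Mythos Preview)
Your proof is correct and follows essentially the same route as the paper: a direct case split on $\tau$, coassociativity to flatten the iterated coproduct of $\phi$, and the quasitriangularity axiom $(id\otimes\Delta)(R)=R_{13}R_{12}$ to merge the two $R$-pairings into one. The only cosmetic difference is that for $\tau=1$ you first pass to $R^{-1}$ via $(S\otimes id)(R)=R^{-1}$ and invoke $(id\otimes\Delta)(R^{-1})=R_{12}^{-1}R_{13}^{-1}$, whereas the paper keeps the antipode in the pairing and works with $R$ directly; these are equivalent one-line reformulations of the same computation.
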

\begin{proof}
    We can prove the formula for the two cases directly.
    \paragraph{Case $\tau=0$}
    In this instance we have
    \begin{equation}
        \begin{split}
        &\langle\phi^{(1)}\otimes g, R\rangle\langle\phi^{(2)(1)}\otimes f, R\rangle\phi^{(2)(2)}=\\
        &\langle\phi^{(1)}\otimes\phi^{(2)}\otimes f\otimes g, R_1'\otimes R_2'\otimes R_2''\otimes R_1''\rangle\phi^{(3)}=\langle \phi^{(1)}\otimes f\otimes g, R_{13}R_{12}\rangle\phi^{(2)}\ ,
        \end{split}
    \end{equation}
    where we have used the dualisation of the coproduct. Now we can use $(id\otimes\Delta)(R)=R_{13}R_{12}$ to get
    \begin{equation}
        \langle \phi^{(1)}\otimes f\otimes g, R_{13}R_{12}\rangle\phi^{(2)}=\langle \phi^{(1)}\otimes f\otimes g, (id\otimes\Delta)(R)\rangle\phi^{(2)}=\langle \phi^{(1)}\otimes fg, R\rangle\phi^{(2)}\ ,
    \end{equation}
    which proves (\ref{eq::prooLemmaProdBf}).
    \paragraph{Case $\tau=1$}
    In this case, we have
    \begin{equation}
        \begin{split}
        &\langle\phi^{(2)}\otimes g, R\rangle\langle\phi^{(1)(2)}\otimes f, (S\otimes id)(R)\rangle\phi^{(1)(1)}=\\
        &\langle\phi^{(2)}\otimes\phi^{(3)}\otimes f\otimes g, R_1'\otimes R_2'\otimes R_2''\otimes R_1''\rangle\phi^{(1)}=\langle \phi^{(2)}\otimes f\otimes g, R_{13}R_{12}\rangle\phi^{(1)}\ .
        \end{split}
    \end{equation}
    As before we can use the relation $(id\otimes\Delta)(R)=R_{13}R_{12}$ to simplify the result:
    \begin{equation}
        \langle \phi^{(2)}\otimes f\otimes g, R_{13}R_{12}\rangle\phi^{(1)}=\langle \phi^{(2)}\otimes f\otimes g, (id\otimes\Delta)(R)\rangle\phi^{(1)}=\langle \phi^{(2)}\otimes f \cdot g, R\rangle\phi^{(1)}\ ,
    \end{equation}
    which ends our proof.
\end{proof}
We now have everything we needed in order to prove one of the main results, that is the product between plaquette operators at coinciding starting vertexes. 
\begin{figure}[t]
\centering
\begin{tikzpicture}[>=latex,scale=0.6,>=latex,very thick,decoration={
    markings,
    mark= at position 0.55 with {\arrow{>}}}]   
    \draw[postaction={decorate}] (0, 0) to [out=45,in=-45] (0, 6);
    \draw[postaction={decorate}] (0, 6) to [out=225,in=135] (0, 0); 	
    \draw[fill] ({0},{0}) circle [radius=0.1];
    \draw[fill] ({0},{6}) circle [radius=0.1];
	\draw[very thick,dotted] ({0},{6}) -- ({0},{5});
    \draw ({0},{6})--({-2},{8});
    \draw ({0},{6})--({2},{8});  
    \node[right] at (0,6) {v};
    \node[right] at (1.5,3) {$\phi_2$};
    \node[left] at (-1.5,3) {$\phi_1$};	
	\node[right] at (1,7) {$\phi_3$};
	\node[right] at (-2.1,7) {$\phi_4$};
	\node[right] at (-0.5,3) {$p$};
	\node[right] at (0,0) {w};
\end{tikzpicture}

\caption{Relevant plaquette used to understand the product between plaquette operators on the same plaquette with the braided tensor product.}
    \label{fig::Tadpole}
\end{figure}
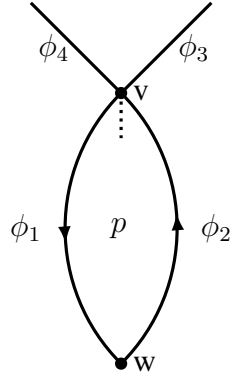
\begin{proposition}
Consider two plaquette operators acting on the same plaquette, by functions $f$ and $g$ respectively. Then the following relation holds
\begin{equation}
    \label{eq::propCommBfBg}
    \mathcal{B}_{p}^{f} \mathcal{B}_{p}^{g}=\langle g^{(1)}\otimes f^{(2)}, R\rangle\langle g^{(3)}\otimes f^{(1)}, R^{-1}\rangle \mathcal{B}_{p}^{f^{(3)}g^{(2)}}
\end{equation}
\end{proposition}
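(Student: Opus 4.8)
The plan is to reduce everything to the already-established formula \eqref{eq::plaquetteShapeTwo} for the plaquette operator on the simple tadpole graph of Figure \ref{fig::Tadpole} (the external edges $\phi_3,\phi_4$ play no role in the commutation, so it suffices to work with $\phi_1,\phi_2$), and then to compose the two operators and repeatedly dualize products in $K^*$ into coproducts in $K$ using $(\Delta\otimes\mathrm{id})(R)=R_{13}R_{23}$ and $(\mathrm{id}\otimes\Delta)(R)=R_{13}R_{12}$, together with $(S\otimes\mathrm{id})(R)=R^{-1}$, $(S\otimes S)(R)=R$, and $R^{-1}R=1$. First I would write $\mathcal{B}_p^g(\phi_1,\phi_2)=\langle\phi_2^{(1)}\otimes g^{(1)},R\rangle\langle\phi_1^{(2)}\otimes g^{(2)},R^{-1}\rangle B_p^{g^{(3)}}(\phi_1^{(1)},\phi_2^{(2)})$, and recall from \eqref{eq::originalPlaquette} that $B_p^{h}(\phi_1,\phi_2)=(S(h^{(1)})\phi_1\otimes S(h^{(2)})\phi_2)$. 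Then apply $\mathcal{B}_p^{f}$ to the result; the key point is that $\mathcal{B}_p^f$ acts on the \emph{new} functions sitting on edges $1,2$, namely $S(g^{(3)(1)})\phi_1^{(1)}$ and $S(g^{(3)(2)})\phi_2^{(2)}$, so the pairings against $R$ and $R^{-1}$ pick up extra factors coming from the $g$'s. Using the compatibility $\langle\alpha_1\alpha_2,k\rangle=\langle\alpha_1\otimes\alpha_2,k^{(1)}\otimes k^{(2)}\rangle$, each pairing $\langle S(g)\phi\otimes f,R^{\pm1}\rangle$ splits into $\langle\phi^{(?)}\otimes f^{(?)},R^{\pm1}\rangle$ times a pairing involving only $g$ and $f$.

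The second step is to collect all the pairings that involve only $f$ and $g$ (no $\phi$) and simplify them. One gets an expression of the shape $\langle g^{(a)}\otimes g^{(b)}\otimes f^{(c)}\otimes f^{(d)},\,(S\otimes S\otimes\mathrm{id}\otimes\mathrm{id})\,(\text{product of }R_{13},R_{23},\ldots)\rangle$; regrouping via $(\Delta\otimes\mathrm{id})(R)=R_{13}R_{23}$ collapses two of the four Sweedler legs using $R^{-1}R=1$ and the counit axioms, leaving precisely $\langle g^{(1)}\otimes f^{(2)},R\rangle\langle g^{(3)}\otimes f^{(1)},R^{-1}\rangle$. This is structurally the same regrouping trick already used in the proof of \eqref{eq::commutationPlaquette} (the computation ending in $\epsilon^*(f^{(4)})\langle g^{(2)}\otimes g^{(3)},S(R_4'')\otimes R_2''\rangle$), so I would mimic that bookkeeping. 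The remaining pairings, those still involving $\phi_1$ or $\phi_2$, must be shown to reassemble into $\langle\phi_2^{(1)}\otimes (f^{(3)}g^{(2)})^{(1)},R\rangle\langle\phi_1^{(2)}\otimes (f^{(3)}g^{(2)})^{(2)},R^{-1}\rangle$, i.e. exactly the $R$/$R^{-1}$ prefactors of $\mathcal{B}_p^{f^{(3)}g^{(2)}}$; here one uses $(\mathrm{id}\otimes\Delta)(R)=R_{13}R_{12}$ (and its $S$-twisted version) to merge the separate $f$- and $g$-pairings against a single $\phi$ leg into one pairing against the product $fg$. The technical proposition \eqref{eq::prooLemmaProdBf} proved just above is tailored for exactly this merging and I would invoke it directly.

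The third step is the leftover $B_p$ part: after the prefactors are extracted, the residual operator acting on $\phi_1^{(\cdot)},\phi_2^{(\cdot)}$ must be identified with $B_p^{f^{(3)}g^{(2)}}$. Since $B_p^f B_p^g=B_p^{gf}$ by \eqref{eq::commBA}, and the antipode/Sweedler indices left over are precisely those of $S((f^{(3)}g^{(2)})^{(1)})$ and $S((f^{(3)}g^{(2)})^{(2)})$ acting on $\phi_1^{(1)},\phi_2^{(2)}$, this is a matter of careful relabelling. Assembling the three pieces gives \eqref{eq::propCommBfBg}. I expect the main obstacle to be purely notational: keeping the Sweedler indices of $f$, $g$, $\phi_1$, $\phi_2$ straight through the two nested applications and making sure that the $R$-leg that gets ``used up'' by $R^{-1}R=1$ is the correct one; the algebraic content is entirely the standard quasitriangular identities already deployed earlier in this appendix, so no genuinely new idea is needed, only disciplined symbol-pushing (which, as in the preceding proofs, I would streamline by doing the $f,g$-only regrouping as a separate displayed sub-computation before substituting back).
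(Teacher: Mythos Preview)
Your overall strategy is the same as the paper's---apply the explicit plaquette formula twice, dualise the products in $K^{*}$ via $(\Delta\otimes\mathrm{id})(R)=R_{13}R_{23}$, and regroup---and your decision to drop the external legs $\phi_3,\phi_4$ is harmless, since the paper immediately collapses them with Proposition~\ref{eq::prop3} anyway.

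There is, however, one genuine misidentification. You expect the $f,g$-only pairings to collapse via $R^{-1}R=1$ ``structurally the same regrouping trick already used in the proof of \eqref{eq::commutationPlaquette}''. That is not what happens here: in the different-plaquette proof the mechanism is indeed a cancellation $R^{-1}R=1$, but in the \emph{same}-plaquette case the two pure $f,g$ pairings that emerge from splitting $\langle S(g^{(\cdot)})\phi_i^{(\cdot)}\otimes f^{(\cdot)},R^{\pm1}\rangle$ are already (up to $(S\otimes S)(R)=R$) the desired prefactors $\langle g^{(1)}\otimes f^{(2)},R\rangle\langle g^{(3)}\otimes f^{(1)},R^{-1}\rangle$; nothing cancels. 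The non-trivial step you are missing sits in your ``third step'': the residual operator is \emph{not} obtainable from $B_p^{f}B_p^{g}=B_p^{gf}$ by relabelling, because $B_p^{gf}$ carries $S(f^{(\cdot)})S(g^{(\cdot)})$ on each edge whereas $\mathcal{B}_p^{f^{(3)}g^{(2)}}$ needs $S\bigl((f^{(3)}g^{(2)})^{(\cdot)}\bigr)=S(g^{(\cdot)})S(f^{(\cdot)})$, the opposite order. For non-commutative $K^{*}$ this is not a relabelling; the paper closes this gap with the dualised form of $R\Delta(h)R^{-1}=\Delta^{\mathrm{op}}(h)$, namely
\[
\langle\alpha^{(1)}\otimes\beta^{(1)},R\rangle\,\alpha^{(2)}\beta^{(2)}
=\langle\alpha^{(2)}\otimes\beta^{(2)},R\rangle\,\beta^{(1)}\alpha^{(1)},
\]
applied to the $S(f)S(g)$ factors. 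This swap, not an $R^{-1}R$ cancellation, is the algebraic heart of the same-plaquette case, and your plan does not account for it.
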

\begin{proof}
To prove the relation we will consider a ribbon graph as the one given in Figure \ref{fig::Tadpole}. As we have seen, vertexes in the plaquette without the plaquette cilia do not play any role in the plaquette operator, so proving the formula for a plaquette with only two vertexes and without external edges at $w$ is sufficient. 
\\The proof of the statement comes from a direct computation. Let's therefore write
\begin{equation}
\begin{split}
    &\mathcal{B}_{p}^{f}\mathcal{B}_{p}^{g}(\phi_{1},\phi_{2},\phi_{3},\phi_{4})=\\
    &\langle\phi_2^{(1)}\otimes g^{(1)}, R\rangle\langle\phi_3^{(1+\tau_3)}\otimes g^{(2)}, (S^{\tau_3}\otimes id)(R)\rangle\langle\phi_4^{(1+\tau_4)}\otimes g^{(3)}, (S^{\tau_4}\otimes id)(R)\rangle\\
    &\langle\phi_1^{(3)}\otimes g^{(4)},R^{-1}\rangle\langle S(g^{(5)})\phi_{1}^{(2)}\otimes f^{(4)}, R^{-1}\rangle\langle S(g^{(8)})\phi_2^{(2)}\otimes f^{(1)}, R\rangle\\
    &\langle\phi_3^{(2-\tau_3)(1+\tau_3)}\otimes f^{(2)}, (S^{\tau_3}\otimes id)(R)\rangle\langle\phi_4^{(2-\tau_4)(1+\tau_4)}\otimes f^{(3)}, (S^{\tau_4}\otimes id)(R)\rangle\\
    &(S(f^{(5)})S(g^{(6)})\phi_{1}^{(1)}\otimes S(f^{(6)})S(g^{(7)})\phi_{2}^{(3)}\otimes \phi_{3}^{(2-\tau_3)(2-\tau_3)}\otimes \phi_{4}^{(2-\tau_4)(2-\tau_4)})\ .
\end{split}
\end{equation}
Using Proposition \ref{eq::prop3} we can see that we can simplify some of the terms:
\begin{equation}
\label{eq::propositionPlaquettesEqToUse}
\begin{split}
    &\mathcal{B}_{p}^{f}\mathcal{B}_{p}^{g}(\phi_{1},\phi_{2},\phi_{3},\phi_{4})=\\
    &\langle\phi_2^{(1)}\otimes g^{(1)}, R\rangle\langle\phi_3^{(1+\tau_3)}\otimes f^{(2)}g^{(2)}, (S^{\tau_3}\otimes id)(R)\rangle\\
    &\langle\phi_4^{(1+\tau_4)}\otimes f^{(3)}g^{(3)}, (S^{\tau_4}\otimes id)(R)\rangle\langle\phi_1^{(3)}\otimes g^{(4)},R^{-1}\rangle\\
    &\langle S(g^{(5)})\phi_{1}^{(2)}\otimes f^{(4)}, R^{-1}\rangle\langle S(g^{(8)})\phi_2^{(2)}\otimes f^{(1)}, R\rangle\\
    &(S(f^{(5)})S(g^{(6)})\phi_{1}^{(1)}\otimes S(f^{(6)})S(g^{(7)})\phi_{2}^{(3)}\otimes \phi_{3}^{(2-\tau_3)}\otimes \phi_{4}^{(2-\tau_4)})\ .
\end{split}
\end{equation}
Using that $(\Delta\otimes id)(R)=R_{13}R_{23}$, we can write
\begin{equation*}
    \begin{split}
        &\langle S(g^{(5)})\phi_{1}^{(2)}\otimes f^{(4)}, R^{-1}\rangle=\langle S(g^{(5)})\otimes\phi_1^{(2)}\otimes S(f^{(4)(2)})\otimes S(f^{(4)(1)}), R'\otimes R_1'\otimes R''\otimes R_1''\rangle\\
    \end{split}
\end{equation*}
and
\begin{equation*}
    \begin{split}
        &\langle S(g^{(8)})\phi_2^{(2)}\otimes f^{(1)}, R\rangle=\langle S(g^{(8)})\otimes\phi_2^{(2)}\otimes f^{(1)(1)}\otimes f^{(1)(2)}, R'\otimes R_2'\otimes R''\otimes R_2''\rangle\ ,
    \end{split}
\end{equation*}
where the indices $1,\ 2$ are to keep track of the elements belonging to different $R$ matrixes. We can now regroup some of the coproducts in (\ref{eq::propositionPlaquettesEqToUse})
\begin{equation*}
    \begin{split}
        &\langle\phi_{1}^{(2)}\otimes\phi_{1}^{(3)}\otimes S(g^{(4)})\otimes S(f^{(4)(1)}), R_{2}'\otimes R' \otimes R''\otimes R_{2}''\rangle=\langle \phi_{1}^{(2)}\otimes S(f^{(4)(1)}g^{(4)}),R\rangle\\
        &\langle\phi_{2}^{(1)}\otimes\phi_{2}^{(2)}\otimes f^{(1)(2)}\otimes g^{(1)}, R'\otimes R_{4}' \otimes R_{4}''\otimes R''\rangle=\langle\phi_{2}^{(1)}\otimes f^{(1)(2)}g^{(1)},R\rangle\ ,
    \end{split}
\end{equation*}
where, again, we used $(id\otimes \Delta)(R)=R_{13}R_{12}$ . Therefore
\begin{equation}
    \begin{split}
       &\mathcal{B}_{p}^{f}\mathcal{B}_{p}^{g}(\phi_{1},\phi_{2},\phi_{3},\phi_{4})=\\
       &\langle g^{(5)}\otimes f^{(6)}, R\rangle\langle g^{(8)}\otimes f^{(1)}, R^{-1}\rangle\langle\phi_{2}^{(1)}\otimes f^{(2)}g^{(1)},R\rangle\\
       &\langle\phi_3^{(1+\tau_3)}\otimes f^{(3)}g^{(2)}, (S^{\tau_3}\otimes id)(R)\rangle\langle\phi_4^{(1+\tau_4)}\otimes f^{(4)}g^{(3)}, (S^{\tau_4}\otimes id)(R)\rangle\\
       &\langle\phi_{1}^{(2)}\otimes S(f^{(5)}g^{(4)}),R\rangle(S(g^{(6)}f^{(7)})\phi_{1}^{(1)}\otimes S(g^{(7)}f^{(8)})\phi_{2}^{(3)}\otimes \phi_{3}^{(2-\tau_3)}\otimes \phi_{4}^{(2-\tau_4)})
    \end{split}
\end{equation}
\noindent Note the given two functions $\alpha,\ \beta\in K^{*}$ the following holds
\begin{equation}
    \langle\alpha^{(1)}\otimes\beta^{(1)}, R\rangle \, \alpha^{(2)}\beta^{(2)}=\langle\alpha^{(2)}\otimes\beta^{(2)}, R\rangle\, \beta^{(1)}\alpha^{(1)}
\end{equation}
which follows from $R \cdot \Delta \cdot R = \Delta^{\text{op}}$ (i.e. see Proposition 3.8 in \cite{Catherine1}).
Therefore we can write
\begin{equation}
    \begin{split}
       &\mathcal{B}_{p}^{f}\mathcal{B}_{p}^{g}(\phi_{1},\phi_{2},\phi_{3},\phi_{4})=\\
       &\langle g^{(7)}\otimes f^{(8)}, R\rangle\langle g^{(8)}\otimes f^{(1)}, R^{-1}\rangle\langle\phi_{2}^{(1)}\otimes f^{(2)}g^{(1)},R\rangle\\
       &\langle\phi_3^{(1+\tau_3)}\otimes f^{(3)}g^{(2)}, (S^{\tau_3}\otimes id)(R)\rangle\langle\phi_4^{(1+\tau_4)}\otimes f^{(4)}g^{(3)}, (S^{\tau_4}\otimes id)(R)\rangle\\
       &\langle \phi_{1}^{(2)}\otimes S(f^{(5)}g^{(4)}),R\rangle(S(f^{(6)}g^{(5)})\phi_{1}^{(1)}\otimes S(f^{(7)}g^{(6)})\phi_{2}^{(3)}\otimes \phi_{3}^{(2-\tau_3)}\otimes \phi_{4}^{(2-\tau_4)})
    \end{split}
\end{equation}
which shows that (\ref{eq::propCommBfBg}) holds and this concludes the proof. 
\end{proof}
We can now start considering the relationship between plaquette operators and gauge transformations, as given by the action of the vertex operator. 
We will first prove that vertex operators acting at a vertex not coinciding with a plaquette's starting vertex, will commute with the plaquette operator associated with that plaquette. 
\begin{proposition}
Consider the ribbon graph in Figure \ref{fig::simplest_plaquette} then the plaquette operator at
$p$ and the gauge transformation at vertex $w$ commute
\begin{equation}
    \mathcal{B}^{f}_{p} A_{w}^{h} = A_{w}^{h} \, \mathcal{B}^{f}_{p}
\end{equation}
\end{proposition}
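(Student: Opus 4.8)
The plan is to reduce the statement to the analogous fact for the conventional Kitaev plaquette operator $B^f_p$, which commutes with every gauge transformation located at a vertex other than the plaquette's base vertex (this is the remark following (\ref{eq::commVertexPlaquette}); for the group‑algebra realisation it also follows from (\ref{eq::commBA})). By Proposition \ref{eq::prop3}, acting on a state $(\phi_1,\ldots,\phi_6)$ the operator $\mathcal{B}^f_p$ equals $B^{f^{(5)}}_p$ evaluated on the partially comultiplied variables $(\phi_1^{(1)},\phi_2^{(2)},\phi_3^{(2-\tau_3)},\phi_4^{(2-\tau_4)},\phi_5,\phi_6)$, dressed by the scalar factors $\langle\phi_2^{(1)}\otimes f^{(1)},R\rangle$, $\langle\phi_3^{(1+\tau_3)}\otimes f^{(2)},(S^{\tau_3}\otimes\mathrm{id})(R)\rangle$, $\langle\phi_4^{(1+\tau_4)}\otimes f^{(3)},(S^{\tau_4}\otimes\mathrm{id})(R)\rangle$ and $\langle\phi_1^{(2)}\otimes f^{(4)},R^{-1}\rangle$. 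In particular, all of the $R$‑dependence of $\mathcal{B}^f_p$ is carried by edge variables incident to the base vertex $v$, not $w$.

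First I would note that $A^h_w$ acts nontrivially only on the four edges incident to $w$, namely the loop edges $\phi_1,\phi_2$ and the external edges $\phi_5,\phi_6$; on $\phi_3$ and $\phi_4$ it acts by $\epsilon(h^{(\cdot)})$, i.e.\ as the identity, by (\ref{eq::GtransformRulesTrivial}). Hence the two pairings involving $\phi_3$ and $\phi_4$ are literally unchanged by $A^h_w$ and can be stripped, as a common overall factor, from both $\mathcal{B}^f_p A^h_w$ and $A^h_w\mathcal{B}^f_p$; likewise $B^f_p$ leaves $\phi_3,\phi_4,\phi_5,\phi_6$ untouched since they are not on the plaquette path, so nothing there interferes. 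This reduces the claim to showing that the operator assembled from $B^f_p$ together with the two surviving factors $\langle\phi_2^{(1)}\otimes f^{(1)},R\rangle$ and $\langle\phi_1^{(2)}\otimes f^{(4)},R^{-1}\rangle$ commutes with the action of $A^h_w$ on the loop edges $\phi_1,\phi_2$ (its action on $\phi_5,\phi_6$ being through the ordinary product only, where the Kitaev fact applies verbatim).

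The clean way to finish is to work in the local‑vertex‑neighbourhood/$G^{*}$ picture of Figure \ref{fig::GMap} used in the proof of Proposition \ref{eq::prop3}: each edge $\phi_i$ is split by $G^{*}$ into a leg at its source end and a leg at its target end, and in that decomposition the two surviving $R$‑factors consume the $v$‑end legs $\phi_2^{(1)}$ and $\phi_1^{(2)}$, whereas $A^h_w$ and the residual $B^f_p$ both act on the $w$‑end legs $\phi_2^{(2)},\phi_1^{(1)}$ (and on $\phi_5,\phi_6$). By Proposition \ref{proposition::2}, at $w$ — where the two loop edges are consecutive in the cyclic order and carry the two $f$‑legs of the holonomy — the braided product collapses to the ordinary product in $K^{*}$, so near $w$ the operator $\mathcal{B}^f_p$ is literally $B^f_p$; the commutation then follows from the Kitaev fact applied to the $w$‑end legs, together with coassociativity to relabel the twofold splitting $\phi_i\mapsto\phi_i^{(1)}\otimes\phi_i^{(2)}$ produced by $G^{*}$ and the further splitting produced by $A^h_w$. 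Alternatively one can avoid the geometric language and simply expand $A^h_w\mathcal{B}^f_p$ and $\mathcal{B}^f_p A^h_w$ on a general state via (\ref{eq::plaquetteShape}) and (\ref{eq::gaugeTransformation}) and match term by term using $R\,\Delta\,R^{-1}=\Delta^{\mathrm{op}}$ and $(\Delta\otimes\mathrm{id})(R)=R_{13}R_{23}$. The main obstacle in either route is the Sweedler bookkeeping: each loop variable is comultiplied several times (one leg into an $R^{\mp1}$‑pairing, one leg into the holonomy path that $B^f_p$ left‑multiplies by a leg of $f$, one leg paired against $h$), $h$ is itself comultiplied across all four edges at $w$, and one must check that after relabelling no stray $R$ or $R^{-1}$ remains; the disjointness of the $v$‑end and $w$‑end legs, made manifest by Proposition \ref{proposition::2}, is precisely what keeps this under control.
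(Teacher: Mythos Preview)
Your approach is correct but organised differently from the paper's. The paper simply expands both $\mathcal{B}^{f}_{p}A^{h}_{w}$ and $A^{h}_{w}\mathcal{B}^{f}_{p}$ on a generic state using (\ref{eq::gaugeTransformation}) and (\ref{eq::plaquetteShape}), and then observes that the only discrepancy between the two expressions is a factor $S(f^{(6)}S(f^{(7)}))$ sitting inside the $h$-pairing on the $A^{h}_{w}\mathcal{B}^{f}_{p}$ side; the antipode axiom collapses this to $\epsilon^{*}(f^{(6)})\,1$ and the two sides match. No appeal to the conventional Kitaev commutation is made, and no $G^{*}$ splitting is invoked.

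Your structural route --- peel off the $R$-pairings via Proposition~\ref{eq::prop3}, note they consume only the $v$-end legs $\phi_{2}^{(1)},\phi_{1}^{(2)}$ whereas $A^{h}_{w}$ touches only the $w$-end legs, and then quote $[B^{f}_{p},A^{h}_{w}]=0$ --- is more illuminating, because it makes explicit \emph{why} the commutation survives the $R$-deformation: all of the braided novelty lives at the base vertex $v$. The trade-off is that you are outsourcing the remaining work to the Kitaev fact $[B^{f}_{p},A^{h}_{w}]=0$ for general semisimple Hopf algebras, which the paper does not itself prove (your first citation, the remark after (\ref{eq::commVertexPlaquette}), is the statement for $\mathcal{B}^{f}_{p}$ and hence circular; the external references around (\ref{eq::commBA}) are the right ones). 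If one unpacks that Kitaev fact, the computation again hinges on the same antipode collapse $S(f^{(i)})\,S^{2}(f^{(i+1)})=\epsilon^{*}(f^{(i)})$ that the paper uses directly, so the two proofs are equivalent at the level of identities but yours explains the mechanism while the paper's is self-contained.
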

\begin{proof}
The proof of the statement comes from a direct computation. From (\ref{eq::gaugeTransformation}) we can write
\begin{equation}
	\begin{split}
		  	&  A^{h}_w(\phi_1\otimes\ldots\otimes\phi_6)= 
			 \\
			 & \langle S^{\tau_{5}}(\phi_{5}^{(1+\tau_{5})})S(\phi_{2}^{(2)})\phi_{1}^{(1)}S^{\tau_{6}}(\alpha_{6}^{(1+\tau_{6})}), h\rangle(\phi_1^{(2)}\otimes\phi_2^{(1)}\otimes\phi_{3}\otimes\phi_{4}\otimes\phi_{5}^{(2-\tau_5)}\otimes\phi_{6}^{(2-\tau_6)})
	\end{split}
\end{equation} 
and therefore, by applying the plaquette operator to this equation, we find
\begin{equation}
\label{eq::proofBACommute}
    \begin{split}
        &\mathcal{B}^{f}_{p}A^{h}_w(\phi_1\otimes\ldots\otimes\phi_6)=\\
        &\langle S^{\tau_{5}}(\phi_{5}^{(1+\tau_{5})})S(\phi_{2}^{(3)})\phi_{1}^{(1)}S^{\tau_{6}}(\phi_{6}^{(1+\tau_{6})}), h\rangle\langle\phi_{2}^{(1)}\otimes f^{(1)},R\rangle\\
        &\langle\phi_{3}^{(1+\tau_3)}\otimes f^{(2)},(S^{\tau_3}\otimes id)(R)\rangle\langle\phi_{4}^{(1+\tau_4)}\otimes f^{(3)},(S^{\tau_4}\otimes id)(R)\rangle\langle\phi_{1}^{(3)}\otimes f^{(4)},R^{-1}\rangle\\
        &(S(f^{(5)})\phi_1^{(2)}\otimes S(f^{(6)})\phi_2^{(2)}\otimes\phi_{3}^{(2-\tau_3)}\otimes\phi_{4}^{(2-\tau_4)}\otimes\phi_{5}^{(2-\tau_5)}\otimes\phi_{6}^{(2-\tau_6)})\ .
    \end{split}
\end{equation}
On the other hand we have
\begin{equation}
    \begin{split}
        &A^{h}_{w}\mathcal{B}_{p}^{f}(\phi_1\otimes\ldots\otimes\phi_6)=\\
        &\langle S^{\tau_{5}}(\phi_{5}^{(1+\tau_{5})})S(\phi_{2}^{(3)})S(f^{(6)}S(f^{(7)}))\phi_{1}^{(1)}S^{\tau_{6}}(\phi_{6}^{(1+\tau_{6})}), h\rangle\langle\phi_{2}^{(1)}\otimes f^{(1)},R\rangle\\
        &\langle\phi_{3}^{(1+\tau_3)}\otimes f^{(2)},(S^{\tau_3}\otimes id)(R)\rangle\langle\phi_{4}^{(1+\tau_4)}\otimes f^{(3)},(S^{\tau_4}\otimes id)(R)\rangle\langle\phi_{1}^{(2)}\otimes f^{(4)},R^{-1}\rangle\\
        &(S(f^{(5)})\phi_{1}^{(2)}\otimes S(f^{(8)})\phi_{2}^{(2)}\otimes\phi_{3}^{(2-\tau_3)}\otimes\phi_{4}^{(2-\tau_4)}\otimes\phi_{5}^{(2-\tau_5)}\otimes\phi_{6}^{(2-\tau_6)})\ ,
    \end{split}
\end{equation}
which is the same as (\ref{eq::proofBACommute}) once we note that $f^{(6)}S(f^{(7)})=\epsilon^{*}(f^{(6)})$. This completes the proof. Note that the general case with an arbitrary number of edges at vertex $w$ is essentially the same as the current one since the plaquette's edges are always adjacent with respect to the ordering imposed by the cilia.
\end{proof}
We can now consider the commutation relation between plaquette and vertexes at the same vertex. Before undertaking this task, we need to prove the one last auxiliary proposition
\begin{proposition}
\label{proposition::last_lemma}
Consider $\phi,\ f\in K^{*}$, then for $ \tau=0,\ 1$, the following equation holds
\begin{equation}
\begin{split}
    &\langle\phi^{(1+\tau)}\otimes f^{(1)}, (S^{\tau}\otimes id)(R)\rangle S^{\tau}(\phi^{(2-\tau)(1+\tau)})f^{(2)}=\\  &\langle\phi^{(2-\tau)(1+\tau)}\otimes f^{(2)}, (S^{\tau}\otimes id)(R)\rangle f^{(1)}S^{\tau}(\phi^{(1+\tau)})
\end{split}
\end{equation}
\end{proposition}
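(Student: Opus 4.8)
The plan is to exploit the fact that $\tau$ takes only the two values $0$ and $1$ and verify the identity in each case directly, exactly as was done for Proposition~\ref{proposition::1} and for the auxiliary lemma immediately preceding. Note first that in the expression at hand $\phi$ is effectively split into only two tensor factors: since $\phi^{(2-\tau)(1+\tau)}$ appears without any accompanying $\phi^{(2-\tau)(2-\tau)}$, the counit axiom $\epsilon^{*}(\phi^{(1)})\phi^{(2)}=\epsilon^{*}(\phi^{(2)})\phi^{(1)}=\phi$ collapses it to $\phi^{(2-\tau)}$. Thus for $\tau=0$ the claimed identity reads $\langle\phi^{(1)}\otimes f^{(1)},R\rangle\,\phi^{(2)}f^{(2)}=\langle\phi^{(2)}\otimes f^{(2)},R\rangle\,f^{(1)}\phi^{(1)}$, which is precisely the relation $\langle\alpha^{(1)}\otimes\beta^{(1)},R\rangle\,\alpha^{(2)}\beta^{(2)}=\langle\alpha^{(2)}\otimes\beta^{(2)},R\rangle\,\beta^{(1)}\alpha^{(1)}$ (Proposition~3.8 of~\cite{Catherine1}, equivalent to the quasitriangularity relation $R\,\Delta\,R^{-1}=\Delta^{\mathrm{op}}$) applied with $\alpha=\phi$ and $\beta=f$. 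So the $\tau=0$ case is immediate.

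For $\tau=1$ the claim becomes $\langle\phi^{(2)}\otimes f^{(1)},(S\otimes\mathrm{id})(R)\rangle\,S(\phi^{(1)})f^{(2)}=\langle\phi^{(1)}\otimes f^{(2)},(S\otimes\mathrm{id})(R)\rangle\,f^{(1)}S(\phi^{(2)})$, where the $S$ acting on the $\phi$-factors is the antipode of $K^{*}$ while the $S$ inside the pairing is the antipode of $K$. I would handle this by first passing the antipode through the pairing using the duality $\langle S(\alpha),k\rangle=\langle\alpha,S(k)\rangle$, so that $\langle\phi^{(2)}\otimes f^{(1)},(S\otimes\mathrm{id})(R)\rangle=\langle S(\phi^{(2)})\otimes f^{(1)},R\rangle$, and then using the compatibility of antipode with coproduct, $\Delta(S(\phi))=(S\otimes S)(\Delta^{\mathrm{op}}(\phi))$: writing $\psi:=S(\phi)\in K^{*}$ one has $\psi^{(1)}\otimes\psi^{(2)}=S(\phi^{(2)})\otimes S(\phi^{(1)})$. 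The left-hand side then becomes $\langle\psi^{(1)}\otimes f^{(1)},R\rangle\,\psi^{(2)}f^{(2)}$, to which the same Proposition~3.8 relation applies with $\alpha=\psi$, $\beta=f$, turning it into $\langle\psi^{(2)}\otimes f^{(2)},R\rangle\,f^{(1)}\psi^{(1)}$. Substituting $\psi^{(1)}=S(\phi^{(2)})$ and $\psi^{(2)}=S(\phi^{(1)})$ back in, and reversing the pairing-duality step to reintroduce $(S\otimes\mathrm{id})(R)$, yields exactly the right-hand side.

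The computation is entirely routine; the only place demanding care is the Sweedler bookkeeping in the $\tau=1$ case, where one must keep the antipode of $K^{*}$ (applied to the components of $\phi$) cleanly separated from the antipode of $K$ (applied to the first leg of $R$), and remember that $\Delta\circ S$ on $K^{*}$ introduces a flip of the tensor factors. No genuinely hard step is involved — as with the other propositions in this appendix, the content is a short dualization argument resting on quasitriangularity and the Hopf axioms, with everything reducing to the single relation $R\,\Delta\,R^{-1}=\Delta^{\mathrm{op}}$.
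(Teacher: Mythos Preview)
Your proposal is correct and follows essentially the same route as the paper: a case split on $\tau\in\{0,1\}$, each case reduced to the quasitriangularity identity $\langle\alpha^{(1)}\otimes\beta^{(1)},R\rangle\,\alpha^{(2)}\beta^{(2)}=\langle\alpha^{(2)}\otimes\beta^{(2)},R\rangle\,\beta^{(1)}\alpha^{(1)}$, with the $\tau=1$ case handled by passing to $\psi=S(\phi)$ and using $\Delta\circ S=(S\otimes S)\circ\Delta^{\mathrm{op}}$. Your preliminary collapse of the redundant second coproduct via the counit axiom is a minor streamlining of the paper's bookkeeping (the paper instead reassociates the iterated Sweedler indices before applying the identity), but the substance is identical.
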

\begin{proof}
The above equation follows from the identity
\begin{equation} \langle\alpha^{(1)}\otimes\beta^{(1)},R\rangle\alpha^{(2)}\beta^{(2)}
	=\langle\alpha^{(2)}\otimes\beta^{(2)},R\rangle\beta^{(1)}\alpha^{(1)}\ ,
	\label{eq::Interchange_Sweedler}
\end{equation}
which holds for every $\alpha,\ \beta\in K^{*}$, as a result of $R\Delta(h)=\Delta(h)R$. We can prove the statement directly for the two cases $\tau=0,\ 1$.
\paragraph{Case 1: $\tau=0$}
\begin{equation}
\begin{split}
    &\langle\phi^{(1)}\otimes f^{(1)}, R\rangle \phi^{(2)(1)}f^{(2)}=    \langle\phi^{(1)(1)}\otimes f^{(1)}, R\rangle \phi^{(1)(2)}f^{(2)}=\\
    &\langle\phi^{(1)(2)}\otimes f^{(2)}, R\rangle f^{(2)}\phi^{(1)(1)}
\end{split}
\end{equation}
which proves the first of the two equations.
\paragraph{Case 2: $\tau=1$}
\begin{equation}
\begin{split}
    &\langle\phi^{(2)}\otimes f^{(1)}, (S\otimes id)(R)\rangle S(\phi^{(1)(2)})f^{(2)}=\langle S(\phi^{(2)})^{(1)}\otimes f^{(1)}, R\rangle S(\phi^{(2)})^{(2)}f^{(2)}=\\
    &\langle S(\phi^{(2)})^{(2)}\otimes f^{(2)}, R\rangle f^{(2)}S(\phi^{(2)})^{(1)}\ ,
\end{split}
\end{equation}
that ends our proof.
\end{proof}
With this technical proposition, we are finally in a position to prove the following
\begin{proposition}
Consider the ribbon graph in Figure \ref{fig::Tadpole}, then we have
\begin{equation}
    \label{eq::last_proposition}
    A_{v}^{h}\mathcal{B}_{p}^{f}=\langle f^{(3)}, S(h^{(1)})\rangle\langle f^{(1)}, h^{(2)}\rangle  \mathcal{B}^{f(2)} A_{v}^{h^{(3)}}
\end{equation}
\end{proposition}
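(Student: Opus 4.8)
The plan is to prove (\ref{eq::last_proposition}) by direct computation, evaluating both sides on arbitrary $\phi_1,\ \phi_2,\ \phi_3,\ \phi_4 \in K^*$ assigned to the tadpole of Figure \ref{fig::Tadpole} and matching the two results. For the left-hand side I would first substitute the explicit form of $\mathcal{B}_p^f$ given by Proposition \ref{eq::prop3} specialised to this graph (i.e. dropping the $\phi_5,\ \phi_6$ slots, since the tadpole has no external edges at $w$), which yields the scalar factors $\langle\phi_2^{(1)}\otimes f^{(1)},R\rangle$, $\langle\phi_3^{(1+\tau_3)}\otimes f^{(2)},(S^{\tau_3}\otimes \text{id})(R)\rangle$, $\langle\phi_4^{(1+\tau_4)}\otimes f^{(3)},(S^{\tau_4}\otimes \text{id})(R)\rangle$, $\langle\phi_1^{(2)}\otimes f^{(4)},R^{-1}\rangle$ multiplying $B_p^{f^{(5)}}$ evaluated on the once-split functions. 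Applying $A_v^h$ on top of this, via (\ref{eq::gaugeTransformation}), leaves these $\mathbb{C}$-valued factors untouched but splits the functions carried by the edges at $v$ once more (note that the holonomy component $S(f^{(5)})$ is split along with $\phi_1^{(1)}$ and $\phi_2^{(2)}$, since $\Delta$ is an algebra map) and contributes a single pairing $\langle -,h\rangle$ whose argument is the ordered product of the pieces $S^{\tau_i}(\phi_i^{(\cdots)})$ taken in the cyclic order fixed by the cilium at $v$. For the right-hand side I would expand in the opposite order, $A_v^{h^{(3)}}$ then $\mathcal{B}_p^{f^{(2)}}$, carrying the prefactor $\langle f^{(3)},S(h^{(1)})\rangle\langle f^{(1)},h^{(2)}\rangle$, and then use coassociativity to put all the nested Sweedler splittings of $\phi_1$ and $\phi_2$ into a common normal form so that the two sides become directly comparable.

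The reconciliation rests on four ingredients. First, for the external edges $\phi_3$ and $\phi_4$ the component paired with $h$ is exactly $S^{\tau_i}(\phi_i^{(2-\tau_i)(1+\tau_i)})$ while the component inside the $(S^{\tau_i}\otimes\text{id})(R)$ bracket is $\phi_i^{(1+\tau_i)}$, so Proposition \ref{proposition::last_lemma} applies literally: it trades the $R$-bracket on $\phi_i^{(1+\tau_i)}$ for one on $\phi_i^{(2-\tau_i)(1+\tau_i)}$ at the cost of moving the relevant $f$-component past the $\phi_i$-piece inside the $h$-pairing. Second, for the loop edges $\phi_1$ and $\phi_2$ the identity $R\,\Delta(h)=\Delta^{\text{op}}(h)\,R$, in the dual form (\ref{eq::Interchange_Sweedler}) already used above, reorders the holonomy $f$-components relative to $\phi_1^{(1)}$ and $\phi_2^{(2)}$ inside the $h$-pairing. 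Third, once the $R$-matrix contributions at $v$ have been isolated, the residual composition of the plain vertex operator with the plain plaquette operator is governed by the known relation (\ref{eq::commBA}), namely $A_v^{h}B_p^{f}=\langle f^{(3)},S(h^{(1)})\rangle\langle f^{(1)},h^{(3)}\rangle B_p^{f^{(2)}} A_v^{h^{(2)}}$. Fourth, the counit and coassociativity axioms collapse the auxiliary Sweedler indices generated along the way. The index shift between (\ref{eq::commBA}) — which carries $\langle f^{(1)},h^{(3)}\rangle$ and $A_v^{h^{(2)}}$ — and the target (\ref{eq::last_proposition}) — which carries $\langle f^{(1)},h^{(2)}\rangle$ and $A_v^{h^{(3)}}$ — is precisely produced by the reorderings in the first two steps, and it collapses to an identity when $K$ is cocommutative, in agreement with the remark following (\ref{eq::commBA}).

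I expect the genuine difficulty to be combinatorial rather than conceptual: tracking the triply nested coproducts of $\phi_1$ and $\phi_2$ (split once by the holonomy, once by the $R$-brackets, once by $A_v$) and ensuring that the single pairing with $h$ emerges with its factors in exactly the cyclic order dictated by the cilium at $v$ and with the $f$-components inserted so as to reproduce $\langle f^{(3)},S(h^{(1)})\rangle\langle f^{(1)},h^{(2)}\rangle$ after the dust settles — a place where non-cocommutativity makes the order of factors essential and every use of $R\,\Delta=\Delta^{\text{op}}R$ must be accounted for. A more conceptual but less self-contained alternative, which I would record as a remark, is to invoke the observation following (\ref{eq::plaquetteGeneralFormula}) that $\mathcal{B}_p^f$ acts as though $f$ were carried by an extra incoming edge attached at the cilium of $v$; from that viewpoint (\ref{eq::last_proposition}) is just the statement that the gauge transformation at $v$ acts canonically, as in the first relation of (\ref{eq::commVertex}), on the enlarged vertex neighbourhood containing this virtual $f$-edge, which follows from the $K^V$-module-algebra structure and the stitching construction of \cite{Catherine1}.
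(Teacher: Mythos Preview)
Your proposal is correct and follows essentially the same route as the paper: a direct computation on both sides using Proposition~\ref{proposition::last_lemma} for the external edges $\phi_3,\phi_4$ and the dual form (\ref{eq::Interchange_Sweedler}) of $R\,\Delta=\Delta^{\text{op}}R$ for the loop edges $\phi_1,\phi_2$, with coassociativity to normalise the Sweedler indices. The only organisational difference is that the paper does not invoke (\ref{eq::commBA}) as a separate modular step but instead manipulates $A_v^h\mathcal{B}_p^f$ directly until the right-hand side is recognised; your idea of isolating the ``plain'' $A_v^hB_p^{f}$ piece works but does not shorten the computation, since the reconciliation of the $R$-brackets on differently-split $\phi_i$'s still requires exactly the same applications of Proposition~\ref{proposition::last_lemma}.
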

\begin{proof}
We will prove the statement by direct computation. 
Firstly we can see that
\begin{equation}
    \begin{split}
    &\mathcal{B}_{p}^{f}A^{h}_{v}(\phi_1\otimes\phi_2\otimes\phi_3\otimes\phi_4)=\\
    &\langle\phi_{2}^{(1)}S^{\tau_3}(\phi_{3}^{(1+\tau_3)})S^{\tau_4}(\phi_{4}^{(1+\tau_4)})S(\phi_{1}^{(3)}), h\rangle\langle\phi_{2}^{(2)}\otimes f^{(1)}, R\rangle\\
    &\langle\phi_{3}^{(2-\tau_3)(1+\tau_3)}\otimes f^{(2)}, (S^{\tau_3}\otimes id)(R)\rangle\langle\phi_4^{(2-\tau_4)(1+\tau_4)}\otimes f^{(4)}, (S^{\tau_4}\otimes id)(R)\rangle\\
    &\langle\phi_1^{(2)}\otimes f^{(4)}, R^{-1}\rangle(S(f^{(5)})\phi_1^{(1)}\otimes S(f^{(6)})\phi_2^{(3)}\otimes \phi_3^{(2-\tau_3)(2-\tau_3)}\otimes\phi_4^{(2-\tau_4)(2-\tau_4)})\ .
    \end{split}
\end{equation}
For the product on the right hand side of (\ref{eq::last_proposition}), we instead have
\begin{equation}
    \begin{split}
    &A_{v}^{h}\mathcal{B}_{p}^{f}(\phi_1\otimes\phi_2\otimes\phi_3\otimes\phi_4)=\\
    &\langle S(f^{(8)})\phi_{2}^{(2)}S^{\tau_3}(\phi_{3}^{(2-\tau_3)(1+\tau_3)})S^{\tau_4}(\phi_{4}^{(2-\tau_3)(1+\tau_4)})S(\phi_{1}^{(2)})f^{(5)}, h\rangle\langle\phi_{2}^{(1)}\otimes f^{(1)},R\rangle\\
    &\langle\phi_3^{(1+\tau_3)}\otimes f^{(2)}, (S^{\tau_3}\otimes id)(R)\rangle\langle\phi_4^{(1+\tau_4)}\otimes f^{(3)}, (S^{\tau_4}\otimes id)(R)\rangle\langle\phi_{1}^{(3)}\otimes f^{(4)},R^{-1}\rangle\\
    &(S(f^{(6)})\phi_1^{(1)}\otimes S(f^{(7)})\phi_2^{(3)}\otimes \phi_3^{(2-\tau_3)(2-\tau_3)}\otimes\phi_4^{(2-\tau_4)(2-\tau_4)}).
    \end{split}
\end{equation}
Now note that
\begin{equation}
    \begin{split}
    &\langle\phi_{1}^{(3)}\otimes f^{(4)},R^{-1}\rangle S(\phi_{1}^{(2)})f^{(5)}=
    \langle\phi_{1}^{(2)}\otimes f^{(5)},R^{-1}\rangle f^{(4)}S(\phi_{1}^{(3)})
    \end{split}
\end{equation}
and therefore 
\begin{equation}
    \begin{split}
    &A_{v}^{h}\mathcal{B}_{p}^{f}(\phi_1\otimes\phi_2\otimes\phi_3\otimes\phi_4)=\\
    &\langle S(f^{(8)})\phi_{2}^{(2)}S^{\tau_3}(\phi_{3}^{(2-\tau_3)(1+\tau_3)})S^{\tau_4}(\phi_{4}^{(2-\tau_3)(1+\tau_4)})f^{(4)}S(\phi_{1}^{(3)}), h\rangle\langle\phi_{2}^{(1)}\otimes f^{(1)},R\rangle\\
    &\langle\phi_3^{(1+\tau_3)}\otimes f^{(2)}, (S^{\tau_3}\otimes id)(R)\rangle\langle\phi_4^{(1+\tau_4)}\otimes f^{(3)}, (S^{\tau_4}\otimes id)(R)\rangle\langle\phi_{1}^{(2)}\otimes f^{(5)},R^{-1}\rangle\\
    &(S(f^{(6)})\phi_1^{(1)}\otimes S(f^{(7)})\phi_2^{(3)}\otimes \phi_3^{(2-\tau_3)(2-\tau_3)}\otimes\phi_4^{(2-\tau_4)(2-\tau_4)})\ .
    \end{split}
\end{equation}
We can now repeatedly apply Proposition \ref{proposition::last_lemma}, to get 
\begin{equation}
    \begin{split}
    &A_{v}^{h}\mathcal{B}_{p}^{f}(\phi_1\otimes\phi_2\otimes\phi_3\otimes\phi_4)=\\
    &\langle S(f^{(8)})\phi_{2}^{(2)}f^{(2)}S^{\tau_3}(\phi_{3}^{(1+\tau_3)})S^{\tau_4}(\phi_{4}^{(1+\tau_4})S(\phi_{1}^{(3)}), h\rangle\langle\phi_{2}^{(1)}\otimes f^{(1)},R\rangle\\
    &\langle\phi_3^{(2-\tau_3)(1+\tau_3)}\otimes f^{(3)}, (S^{\tau_3}\otimes id)(R)\rangle\langle\phi_4^{(2-\tau_3)(1+\tau_4))}\otimes f^{(4)}, (S^{\tau_4}\otimes id)(R)\rangle\\
    &\langle\phi_{1}^{(2)}\otimes f^{(5)},R^{-1}\rangle(S(f^{(6)})\phi_1^{(1)}\otimes S(f^{(7)})\phi_2^{(3)}\otimes \phi_3^{(2-\tau_3)(2-\tau_3)}\otimes\phi_4^{(2-\tau_4)(2-\tau_4)})\ .
    \end{split}
\end{equation}
Finally, since $\langle\phi_{2}^{(1)}\otimes f^{(1)},R\rangle\phi_{2}^{(2)}f^{(2)}=\langle\phi_{2}^{(2)}\otimes f^{(2)},R\rangle f^{(1)}\phi_{2}^{(1)}$, we can write
\begin{equation}
    \begin{split}
    &A_{v}^{h}\mathcal{B}_{p}^{f}(\phi_1\otimes\phi_2\otimes\phi_3\otimes\phi_4)=\\
    &\langle S(f^{(8)}), h^{(1)}\rangle\langle f^{(1)}, h^{(2)}\rangle\langle\phi_{2}^{(1)}S^{\tau_3}(\phi_{3}^{(1+\tau_3)})S^{\tau_4}(\phi_{4}^{(1+\tau_4})S(\phi_{1}^{(3)}), h^{(3)}\rangle\langle\phi_{2}^{(2)}\otimes f^{(2)},R\rangle\\
    &\langle\phi_3^{(2-\tau_3)(1+\tau_3)}\otimes f^{(3)}, (S^{\tau_3}\otimes id)(R)\rangle\langle\phi_4^{(2-\tau_3)(1+\tau_4))}\otimes f^{(4)}, (S^{\tau_4}\otimes id)(R)\rangle\\
    &\langle\phi_{1}^{(2)}\otimes f^{(5)},R^{-1}\rangle(S(f^{(6)})\phi_1^{(1)}\otimes S(f^{(7)})\phi_2^{(3)}\otimes \phi_3^{(2-\tau_3)(2-\tau_3)}\otimes\phi_4^{(2-\tau_4)(2-\tau_4)})\,
    \end{split}
\end{equation}
where (\ref{eq::last_proposition}) can be recognised. Note that as for the other propositions of this appendix, this property can be easily generalised to ribbon graphs with an arbitrary number of edges at vertex $v$.
\end{proof}

\end{document}